\newtheorem{theorem}{Theorem}
\newtheorem{lemma}[theorem]{Lemma}
\newtheorem{assumption}[theorem]{Assumption}
\newcommand{\Ind}{1\!\mathrm{l}}
\begin{document}
	\title{Horseshoe shrinkage methods for Bayesian fusion estimation}
	\date{} 
	\author{Sayantan Banerjee \footnote{Corresponding author. Address: IIM Indore, Rau-Pithampur Road, Indore, M.P. - 453 556, India. e-mail: \url{sayantanb@iimidr.ac.in}}}
	
	\affil{\small Operations Management and Quantitative Techniques Area\\ Indian Institute of Management Indore\\ Indore, M.P., India}
	\maketitle
	
	\begin{abstract}
		We consider the problem of estimation and structure learning of high dimensional signals via a normal sequence model, where the underlying parameter vector is piecewise constant, or has a block structure. We develop a Bayesian fusion estimation method by using the Horseshoe prior to induce a strong shrinkage effect on successive differences in the mean parameters, simultaneously imposing sufficient prior concentration for non-zero values of the same. The proposed method thus facilitates consistent estimation and structure recovery of the signal pieces. We provide theoretical justifications of our approach by deriving posterior convergence rates and establishing selection consistency under suitable assumptions. We also extend our proposed method to signal de-noising over arbitrary graphs and develop efficient computational methods along with providing theoretical guarantees. We demonstrate the superior performance of the Horseshoe based Bayesian fusion estimation method through extensive simulations and two real-life examples on signal de-noising in biological and geophysical applications. We also demonstrate the estimation performance of our method on a real-world large network for the graph signal de-noising problem.
\end{abstract}

	\begin{quotation}

		\noindent {\it Keywords:} Bayesian shrinkage; Fusion estimation; Horseshoe prior; piecewise constant function; posterior convergence rate.
	\end{quotation}\par

\section{Introduction}

With modern technological advances and massive data storage capabilities, large datasets are becoming increasingly common in a plethora of areas including finance, econometrics, bioinformatics, engineering, signal-processing, among others. Flexible modeling of such datasets often require parameters whose dimension exceed the available sample size. Plausible inference is possible in such scenarios by finding a lower dimensional embedding of the high dimensional parameter.

Sparsity plays an important role in statistical learning in a high-dimensional scenario, where the underlying parameters are `nearly black' \citep{donoho1992maximum}, implying that only a small subset of the same is non-zero. Several regularization based methods have been proposed in this regard in the literature, that induce sparsity in the models. From a frequentist perspective, such methods include penalization based methods like ridge regression \citep{hoerl1970ridge}, lasso \citep{tibshirani1996regression}, elastic net \citep{zou2005regularization}, SCAD \citep{fan2001variable}, fused lasso \citep{tibshirani2005sparsity, rinaldo2009properties}, graphical lasso \citep{friedman2008sparse}, among others; see \cite{buhlmann2011statistics} for frequentist methods in high dimensions. Bayesian methods in high dimensions have been developed more recently, where sparsity is induced via suitable prior distributions on the parameters. These include spike-and-slab priors \citep{mitchell1988bayesian, ishwaran2005spike}, Bayesian lasso \citep{park2008bayesian}, Bayesian graphical lasso \citep{wang2012bayesian}, and other shrinkage priors like spike-and-slab lasso \citep{rovckova2018spike}, Normal Exponential Gamma \citep{griffin2010inference}, Horseshoe \citep{carvalho2009handling, carvalho2010horseshoe} and other variants like Horseshoe+ \citep{bhadra2017horseshoe+} and Horseshoe-like \citep{bhadra2019horseshoe}, Dirichlet-Laplace \citep{bhattacharya2015dirichlet}, R2-D2 \citep{zhang2016high}, among others. Theoretical guarantees of such Bayesian procedures have been developed recently as well. We refer the readers to \cite{banerjee2021bayesian} for a comprehensive overview of Bayesian methods in a high-dimensional framework.

Piecewise constant signal estimation and structure recovery is an important and widely studied problem in this regard. Such signals occur in numerous applications, including bioinformatics, remote sensing, digital image processing, finance, and geophysics. We refer the readers to \cite{little2011generalized} for a review on generalized methods for noise removal in piecewise constant signals, along with potential applications. In this paper, we consider a noisy high-dimensional signal modeled via the normal sequence model given by
\begin{equation}
	y_i = \theta_i + \epsilon_i,\; i = 1,\ldots,n,
	\label{eqn:gaussian-means}
\end{equation}
where $\epsilon_i \stackrel{iid}{\sim}N(0, \sigma^2),$ $\sigma^2$ being the noise (or, error) variance, $y = (y_1,\ldots,y_n)^T$ is the observed noisy signal, and $\theta = (\theta_1,\ldots,\theta_n)^T \in \mathbb{R}^n$ is the vector of mean parameters. We further assume that the true parameter $\theta_0 = (\theta_{0,1},\ldots,\theta_{0,n})^T \in \mathbb{R}^n$ is piecewise constant, or has an underlying block structure, in the sense that the transformed parameters $\eta_{0,j} = \theta_{0,j} - \theta_{0,j-1},\, 2 \leq j \leq n$ belong to a nearly-black class $l_0[s] = \{\eta \in \mathbb{R}^{n-1}: \#\{j: \eta_j \neq 0,\, 2 \leq j \leq n\} \leq s,\, 0 \leq s = s(n) \leq n\}$. Here $\#$ denotes the cardinality of a finite set. The number of true blocks in the parameter vector is $s_0 := \#\{j: \eta_{0,j} \neq 0,\, 2 \leq j \leq n\}.$ Our fundamental goals are two-fold -- (i) estimating the mean parameter $\theta$, and (ii) recovering the underlying piecewise constant structure.
 
 Frequentist methods in addressing the piecewise constant signal de-noising problem include penalization methods like the fused lasso method \citep{tibshirani2005sparsity} and the $L_1$-fusion method \citep{rinaldo2009properties}. The Bayesian equivalent of the fused lasso penalty is using a Laplace shrinkage prior on the successive differences $\eta$ \citep{kyung2010penalized}. However, the Laplace prior leads to posterior inconsistency \citep{castillo2015bayesian}. To overcome this problem, \cite{song2017nearly} used a heavy-tailed shrinkage prior for the coefficients in a linear regression framework. Motivated by this, \cite{song2020bayesian} proposed to use a $t$-shrinkage prior for Bayesian fusion estimation. \cite{shimamura2019bayesian} use a Normal Exponential Gamma (NEG) prior in this context, and make inference based on the posterior mode. 
 
 Motivated by strong theoretical guarantees regarding estimation and structure learning induced by the Horseshoe prior in a regression model \citep{datta2013asymptotic, van2014horseshoe, van2017adaptive}, we propose to investigate the performance of the same in our fusion estimation framework. The Horseshoe prior induces sparsity via an infinite spike at zero, and also possess a heavy tail to ensure consistent selection of the underlying pieces or blocks. Furthermore, the global scale parameter that controls the level of sparsity in the model automatically adapts to the actual level of sparsity in the true model, as opposed to choosing the scale parameter in the $t$-shrinkage prior on the basis of the underlying dimension $n$. The conjugate structure of the posterior distribution arising out of modeling via a Horseshoe prior allows fast computation along with working with the full posterior distribution, so that the samples obtained via Markov Chain Monte Carlo (MCMC) can further be used for uncertainty quantification.
 
 The piecewise constant normal sequence model can be further extended to piecewise constant signals over arbitrary graphs. Note that the former is a special case of signals over graphs, when the underlying graph is a linear chain. We consider the normal sequence model as given by (\ref{eqn:gaussian-means}), but in this case, the signal $\theta = (\theta_1,\ldots,\theta_n)^T$ is defined over an undirected graph $G = (V,E)$, where $V = \{1,\ldots,n\}$ is the vertex set of the graph and $E$ is the corresponding edge-set. The true signal is then assumed to be piecewise constant over the graph $G$, so that the parameter space is now given by $l_0[G,s] := \{\theta \in \mathbb{R}^n: \#\{(i,j) \in E: \theta_i \neq \theta_j \} \leq s, \, 0 \leq s(n) = s \leq n\}$. The resulting graph de-noising problem has several potential applications, including multiple change-point detection in linear chain graphs, image segmentation, and anomaly detection in large networks. We refer the readers to \cite{Fan2018} and references therein for related work in this field. In this work, we propose a Bayesian fusion method via Horseshoe prior specification on adjacent edges of a transformed graph obtained from the original graph and illustrate its excellent inference and posterior concentration properties.
 
 The paper is organized as follows. In the next section, we specify the Bayesian model along with the prior specifications, followed by posterior computations in Section~\ref{sec:posterior}. In Section~\ref{sec:theory}, we provide theoretical guarantees of our proposed method via determining posterior convergence rates and establishing posterior selection consistency of the signal pieces. We demonstrate the numerical performance of our method along with other competing methods via simulations in Section~\ref{sec:simulation}, followed by real-life applications in two different areas -- DNA copy number analysis and Solar X-Ray flux data analysis. The methodological extension of our proposed method to the graph signal de-noising problem, along with theoretical guarantees and real-life illustration are given in Section~\ref{sec:graph de-noising}. We conclude our paper with a brief discussion of our proposed fusion estimation method along with future directions for research. Additional lemmas, proofs of main results, and a practical solution for block structure recovery along with numerical results for the same are provided in the Appendix.
 
 The notations used in the paper are as follows. For real-valued sequences $\{a_n\}$ and $\{b_n\}$, $a_n = O(b_n)$ implies that $a_n/b_n$ is bounded, and $a_n = o(b_n)$ implies that $a_n/b_n \rightarrow 0$ as $n \rightarrow \infty.$ By $a_n \lesssim b_n$, we mean that $a_n = O(b_n)$, while $a_n \asymp b_n$ means that both $a_n \lesssim b_n$ and $b_n \lesssim a_n$ hold. $a_n \prec b_n$ means $a_n = o(b_n).$ For a real vector $x = (x_1,\ldots, x_n)^T$, the $L_r$-norm of $x$ for $r > 0$ is defined as $\|x\|_r = \left( \sum_{i=1}^n|x_i|^r\right) ^{1/r}.$ We denote the cardinality of a finite set $S$ as $\#S$. The indicator function is denoted by $\Ind.$

\section{Bayesian modeling and prior specification}
\label{sec:model-and-prior}
We consider the normal sequence model (\ref{eqn:gaussian-means}) and assume that the successive differences of the means belong to a nearly-black class $l_0[s]$. As discussed earlier, frequentist procedures induce sparsity in the model via suitable regularization based procedures like penalization of the underlying parameters, whereas Bayesian methods usually induce sparsity via imposing suitable prior distributions on the same. For example, for learning a high-dimensional parameter $\theta \in \mathbb{R}^n$, a general version of a penalized optimization procedure can be written as 
$\arg \min_{\theta \in \mathbb{R}^n}\{l(\theta; y) + \pi(\theta) \},$
where $l(\theta;y)$ is the empirical risk and $\pi(\theta)$ is the penalty function. If $l(\theta;y)$ is defined as the negative log-likelihood (upto a constant) of the observations $y$, the above optimization problem becomes equivalent to finding the mode of the posterior distribution $p(\theta \mid y)$, corresponding to the prior density $p(\theta) \propto \exp\left( -\pi(\theta)\right).$ 

In our context, \cite{tibshirani2005sparsity} proposed the \emph{fused lasso} estimator $\hat{\theta}^{FL}$ that induces sparsity on both $\theta$ and $\eta$, defined as
\begin{equation}
	\hat{\theta}^{FL} = \arg \min_{\theta \in \mathbb{R}^n}\left\lbrace \dfrac{1}{2}\|y-\theta\|_2^2 + \lambda_1 \|\theta\|_1 + \lambda_2\|\eta\|_1\right\rbrace,
\end{equation}
for suitable penalty parameters $\lambda_1$ and $\lambda_2$. \cite{rinaldo2009properties} considered the $L_1$-fusion estimator $\hat{\theta}^F$ with penalization of the successive differences only, given by,
\begin{equation}
	\hat{\theta}^{F} = \arg \min_{\theta \in \mathbb{R}^n}\left\lbrace \dfrac{1}{2}\|y-\theta\|_2^2 + \lambda\|\eta\|_1\right\rbrace,
	\label{eqn:fusedlasso}
\end{equation}
where $\lambda$ is the corresponding penalty (tuning) parameter. A Bayesian equivalent of the fused lasso estimator (\ref{eqn:fusedlasso}) can be obtained by putting a $Laplace(\lambda/\sigma)$ prior on the successive differences $\eta$ and finding the corresponding posterior mode. As in a normal regression model with Bayesian lasso \citep{park2008bayesian}, the Bayesian fused lasso estimator given by the posterior mode will converge to the true $\eta_0$ at a nearly optimal rate. However, the induced posterior distribution has sub-optimal contraction rate \citep{castillo2015bayesian}, owing to insufficient prior concentration near zero.

The posterior inconsistency of the Bayesian fused lasso method motivates us to explore other approaches that would mitigate the problems leading to the undesirable behavior of the posterior distribution. Shrinkage priors qualify as naturally good choices for our problem, as they can address the dual issue of shrinking true zero parameters to zero, and retrieving the `boundaries' of the blocks effectively by placing sufficient mass on the non-zero values of successive differences of the normal means. In the normal sequence model, optimal posterior concentration has been achieved via using shrinkage prior distributions with polynomially decaying tails \citep{song2017nearly}. This is in contrast to the Laplace prior, that has exponentially light tails. 

The Horseshoe prior is a widely acclaimed choice as a shrinkage prior owing to its infinite spike at 0 and simultaneously possessing a thick tail. The tails decay like a second-order polynomial, and hence the corresponding penalty function behaves like a logarithmic penalty, and is non-convex (see \cite{carvalho2010horseshoe, bhadra2019horseshoe} for more details). The Horseshoe prior can be expressed as a scale mixture of normals with half-Cauchy prior, thus acting as a global-local shrinkage prior. We put a Horseshoe prior on the pairwise differences in the parameters $\eta_i = \theta_i - \theta_{i-1},\; i = 2,\ldots,n.$ We also need to put suitable priors on the mean parameter $\theta_1$ and the error variance $\sigma^2$. The full prior specification is given by,
\begin{eqnarray}
	\theta_1 \mid \lambda_1^2, \sigma^2 \sim  N(0, \lambda_1^2\sigma^2),\; \eta_i \mid \lambda_i^2,\tau^2, \sigma^2 \stackrel{ind}{\sim}  N(0, \lambda_i^2\tau^2\sigma^2), \; 2 \leq i \leq n,\nonumber \\
	\lambda_i \stackrel{iid}{\sim}  C^+(0,1),\; 2 \leq i \leq n,\; \tau \sim  C^+(0,1),\; \sigma^2 \sim  IG(a_\sigma, b_\sigma).
	\label{eqn:prior}
\end{eqnarray}
Here $C^+(\cdot, \cdot)$ and $IG(\cdot, \cdot)$ respectively denote the half Cauchy density and Inverse Gamma density.
The level of sparsity induced in the model is controlled by the global scale parameter $\tau$, and choosing the same is a non-trivial problem. Using a plug-in estimate for $\tau$ based on empirical Bayes method suffers from a potential danger of resulting in a degenerate solution resulting in a heavily sparse model. There are several works \citep{carvalho2010horseshoe, piironen2017sparsity, piironen2017hyperprior} that suggest effective methods regarding the choice of $\tau$. In this paper, we have proposed to take a fully Bayesian approach as suggested in \cite{carvalho2010horseshoe, piironen2017sparsity} and use a half-Cauchy prior. 

The half-Cauchy distribution can further be written as a scale-mixture of Inverse-Gamma distributions. For a random variable $X \sim C^+(0,\psi)$, we can write,
$$X^2 \mid \phi \sim IG(1/2, 1/\phi),\; \phi \sim IG(1/2,1/\psi^2).$$
Thus, the full hierarchical prior specification for our model is given by,
\begin{eqnarray}
	\theta_1 \mid \lambda_1^2, \sigma^2 \sim  N(0, \lambda_1^2\sigma^2),\;\eta_i \mid \lambda_i^2,\tau^2, \sigma^2 \stackrel{ind}{\sim}  N(0, \lambda_i^2\tau^2\sigma^2), \; 2 \leq i \leq n,\nonumber \\
	\lambda_i^2 \mid \nu_i \stackrel{ind}{\sim}  IG(1/2, 1/\nu_i),\; 2 \leq i \leq n,\;\tau^2 \mid \xi \sim  IG(1/2,1/\xi), \nonumber \\
	\nu_2,\ldots,\nu_n, \xi \stackrel{iid}{\sim}  IG(1/2,1),\; \sigma^2 \sim  IG(a_\sigma, b_\sigma)
	\label{eqn:prior-2}
\end{eqnarray}
The hyperparameters $a_\sigma$ and $b_\sigma$ may be chosen in such a way that the corresponding prior becomes non-informative. The local scale parameter $\lambda_1$ is considered to be fixed as well.

\section{Posterior computation}
\label{sec:posterior}

The conditional posterior distributions of the underlying parameters can be explicitly derived via exploring the conjugate structure. Hence, the posterior computations can be accomplished easily via Gibbs sampling. We present the conditional posterior distributions of the parameters below.

The normal means have the conditional posterior distribution
\begin{equation}
\theta_i \mid \cdots \sim N(\mu_i, \zeta_i),\; 1 \leq i \leq n,
\end{equation}

where $\mu_i$ and $\zeta_i$ are given by,
$$\zeta_1^{-1} = \dfrac{1}{\sigma^2}\left(1 + \dfrac{1}{\lambda_{2}^2\tau^2} + \dfrac{1}{\lambda_1^2} \right),\; \mu_1 = \dfrac{\zeta_1}{\sigma^2}\left(y_1 + \dfrac{\theta_{2}}{\lambda_{2}^2\tau^2}\right) ,$$
$$\zeta_i^{-1} = \dfrac{1}{\sigma^2}\left(1 + \dfrac{1}{\lambda_{i+1}^2\tau^2} + \dfrac{1}{\lambda_i^2\tau^2} \right),\; \mu_i = \dfrac{\zeta_i}{\sigma^2}\left(y_i + \dfrac{\theta_{i+1}}{\lambda_{i+1}^2\tau^2} + \dfrac{\theta_{i-1}}{\lambda_i^2\tau^2}\right),\; 2 \leq i \leq n.$$ 
Here $\lambda_{n+1}$ is considered to be infinity. The conditional posteriors for the rest of the parameters are given by,

\begin{eqnarray}
\lambda_i^2 \mid  \cdots &\sim & IG\left(1, \dfrac{1}{\nu_i} + \dfrac{(\theta_i - \theta_{i-1})^2}{2\tau^2\sigma^2}  \right),\; 2 \leq i \leq n, \nonumber \\
\sigma^2 \mid \cdots &\sim & IG\left(n + a_\sigma, b_\sigma + \dfrac{1}{2}\left[\sum_{i=1}^{n}(y_i - \theta_i)^2 + \dfrac{1}{\tau^2}\sum_{i=2}^{n}\dfrac{(\theta_i - \theta_{i-1})^2}{\lambda_i^2} + \dfrac{\theta_1^2}{\lambda_1^2}\right]\right), \nonumber \\
\tau^2 \mid \cdots &\sim & IG\left(\dfrac{n}{2}, \dfrac{1}{\xi} + \dfrac{1}{2\sigma^2}\sum_{i=2}^{n}\dfrac{(\theta_i - \theta_{i-1})^2}{\lambda_i^2}\right), \nonumber \\
\nu_i \mid \cdots &\sim & IG\left(1, 1 + \dfrac{1}{\lambda_i^2}\right),\; 2 \leq i \leq n, \nonumber \\
\xi \mid \cdots &\sim & IG\left(1, 1 + \dfrac{1}{\tau^2}\right).
\end{eqnarray}

\section{Theoretical results}
\label{sec:theory}

In this section, we present the theoretical validations of using a Horseshoe prior for fusion estimation. We first discuss the result involving posterior convergence rate of the mean parameter $\theta$ under certain assumptions and then proceed to discuss the result on posterior selection of the underlying true block structure of the mean parameter.

\begin{assumption}
	The number of true blocks in the model satisfies $s_0 \prec n/\log n$.
	\label{assump:true-blocks}
\end{assumption}

\begin{assumption}
	The true mean parameter vector $\theta_0 = (\theta_{0,1},\ldots, \theta_{0,n})^T$ and the true error variance $\sigma_0^2$ satisfy the following conditions:
	\begin{enumerate}
		\item[(i)] Define $\eta_{0,j} = \theta_{0,j} - \theta_{0,j-1},\, 2 \leq j \leq n.$ Then, $\max_j |\eta_{0,j}/\sigma_0| < L,$ where $\log L = O(\log n).$
		\item[(ii)] $\theta_{0,1}/(\lambda_1^2\sigma_0^2) + 2\log \lambda_1 = O(\log n),$ where $\lambda_1$ is the prior hyperparameter appearing in the prior for $\theta_1$ in (\ref{eqn:prior}).
	\end{enumerate}
\label{assump:true-param}
\end{assumption}

\begin{assumption}
	The global scale parameter $\tau$ in the prior specification (\ref{eqn:prior}) satisfies $\tau < n^{-(2 + b)}$ for some constant $b > 0$, and $-\log \tau = O(\log n)$.
	\label{assump:prior}
\end{assumption}

The first assumption above involves the true block size, that is ubiquitous in structure recovery problems in high-dimensions. In Assumption~\ref{assump:true-param}, we have considered $L$ as a bound on the maximum value of $|\eta_{0,j}/\sigma_0|$. Such a restriction is necessary for block recovery at a desired contraction rate. We shall see (in Lemma~\ref{lemma:priorthickness}) that the aforesaid condition on $L$, along with one of the conditions on the global scale parameter $\tau$ as in Assumption~\ref{assump:prior} would guarantee that the tail of the prior is not too sharp. This would ensure a minimum prior concentration for large non-zero values of the successive difference in the means. An equivalent prior condition cannot hold uniformly over an unbounded parameter space, thus prompting a restriction on the successive differences, and also on $\theta_{0,1}$. Similar restrictions in the context of linear and generalized regression models have been considered in the literature; for example, see \cite{wei2020contraction, song2020minimax}. Assumption~\ref{assump:prior} involves an upper bound on the global scale parameter $\tau$, that would be necessary for ensuring that the prior puts sufficient mass around zero for the successive differences in means, thus leading to effective fusion estimation via imposing sparsity. However, $\tau$ should not be too small, otherwise it would lead to degeneracy in inference by picking up too sparse a model. The lower bound on $\tau$, along with the condition on $L$ in Assumption~\ref{assump:true-param} guarantees that the Horseshoe prior is `thick' at non-zero parameter values, so that it is not too sharp. We now present our main result on posterior contraction rate.

\begin{theorem}
	Consider the Gaussian means model (\ref{eqn:gaussian-means}) with prior specification as in (\ref{eqn:prior}), and suppose that assumptions \ref{assump:true-blocks}, \ref{assump:true-param} and \ref{assump:prior} hold. Then the posterior distribution of $\theta$, given by $\Pi^n(\cdot \mid y)$, satisfies
	\begin{equation}
		\Pi^n(\|\theta - \theta_0\|_2/\sqrt{n} \geq M\sigma_0 \epsilon_n \mid y) \rightarrow 0, \; \mathrm{as}\, n \rightarrow \infty,
	\end{equation}
in probability or in $L_1$ wrt the probability measure of $y$, for $\epsilon_n \asymp \sqrt{s_0\log n/n}$ and a constant $M > 0$.
\label{theorem:posteriorconvergencerate}
\end{theorem}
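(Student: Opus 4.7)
The proof follows the general posterior contraction framework of Ghosal, Ghosh, and van der Vaart, and I would organize it as follows. First, I would reparametrize the model via $\eta = (\theta_1, \theta_2 - \theta_1, \ldots, \theta_n - \theta_{n-1})^T$, so that $\theta = T\eta$ where $T$ is the $n \times n$ lower-triangular matrix of ones. Under the prior (\ref{eqn:prior}), the coordinates of $\eta$ are independent, with $\eta_1 \sim N(0, \lambda_1^2 \sigma^2)$ and $\eta_2, \ldots, \eta_n$ independent horseshoe. Since $\|\theta - \theta_0\|_2 = \|T(\eta - \eta_0)\|_2$ and the true $\eta_0$ is $s_0$-sparse on $\{2,\ldots,n\}$, the task reduces to verifying the three classical conditions: a prior mass bound on a Kullback--Leibler-type neighborhood of $(\theta_0, \sigma_0^2)$, a sieve with small prior complement, and exponentially consistent tests, all at level $n\epsilon_n^2 \asymp s_0\log n$.

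For the prior mass step, I would lower-bound $\Pi(\|\theta - \theta_0\|_2^2 \leq n\sigma_0^2 \epsilon_n^2,\, |\sigma^2/\sigma_0^2 - 1| \leq \epsilon_n)$ by $\exp(-c_1 s_0 \log n)$ by intersecting over the event $\mathcal{A} = \bigcap_{j \in S_0}\{|\eta_j - \eta_{0,j}| \leq r\} \cap \bigcap_{j \notin S_0}\{|\eta_j| \leq r'\}$, where $S_0 = \{j \geq 2: \eta_{0,j} \neq 0\}$, with $r$ of order $\sigma_0\epsilon_n/s_0$ and $r'$ slightly above $\tau$. The lower bound on $\Pi(|\eta_j - \eta_{0,j}| \leq r)$ uses Lemma~\ref{lemma:priorthickness} under Assumptions~\ref{assump:true-param}(i) and \ref{assump:prior}; the bound on $\Pi(|\eta_j| \leq r')$ exploits the infinite spike of the horseshoe at $0$ once $\tau \leq n^{-(2+b)}$; and the inverse gamma prior on $\sigma^2$ yields the remaining variance factor. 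The contribution of $\theta_{0,1}$ is absorbed via Assumption~\ref{assump:true-param}(ii).

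For the sieve and tests, I would take $\Theta_n = \{\theta: \#\{j:|\eta_j| > \rho_n\} \leq \tilde s,\, \|\theta\|_\infty \leq U_n\}$ with $\tilde s \asymp s_0$ and $\rho_n, U_n$ polynomial in $n$. The horseshoe tail estimate $\Pi(|\eta_j| > \rho_n) \lesssim \tau/\rho_n$, which is $o(n^{-2})$ under Assumption~\ref{assump:prior}, combined with standard binomial tail inequalities, gives $\Pi(\Theta_n^c) \leq \exp(-C n\epsilon_n^2)$ for arbitrarily large $C$ by choosing $\tilde s$ a large multiple of $s_0$. A direct covering argument for piecewise constant signals in the metric $\|\cdot\|_2/\sqrt{n}$ then yields an entropy bound of order $\tilde s \log n \asymp n\epsilon_n^2$. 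Exponentially consistent Neyman--Pearson tests for the Gaussian means model at separation $M\sigma_0\epsilon_n$ in $\|\cdot\|_2/\sqrt{n}$ are standard. Combining these via the master theorem of Ghosal--Ghosh--van der Vaart gives the stated contraction.

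The main obstacle is the prior mass step: bounding $\|T(\eta - \eta_0)\|_2$ through componentwise events on $\eta$ is not immediate because $T$ has operator norm of order $n$, so a naive application of $\|Tx\|_2 \leq \|T\|_{op}\|x\|_2$ would cost an extra factor of $n$. The way around is to exploit the block structure: on $\mathcal{A}$, the vector $\theta - \theta_0$ is nearly piecewise constant with at most $s_0 + \tilde s$ jumps, so $\|\theta - \theta_0\|_2^2$ decomposes into contributions from each block equal to a block length times the square of a cumulative sum of the $\eta_j - \eta_{0,j}$. Each such sum has at most $s_0$ nonnegligible terms of size $r$ on $S_0$ plus terms of size $r'$ on $S_0^c$, giving the target bound $n\sigma_0^2\epsilon_n^2$ once $r$ and $r'$ are chosen as above. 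Pairing this decomposition with Lemma~\ref{lemma:priorthickness} produces the prior mass lower bound matching the required $\exp(-c_1 n\epsilon_n^2)$.
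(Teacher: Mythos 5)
Your proposal is correct in outline but takes a genuinely different route from the paper. The paper's proof is a two-line verification: it invokes Theorems~2.1 and 2.2 of \cite{song2020bayesian} and checks their four sufficient conditions, the only nontrivial two being (a) that the marginal prior on each difference $\eta_j$ puts mass at least $1-n^{-b'}$ on $[-s_0\log n/n^2, s_0\log n/n^2]$ (Lemma~\ref{lemma:priormass}) and (b) that $-\log\bigl(\inf_{|\eta|/\sigma\le L}p_{HS}(\eta;\tau)\bigr)=O(\log n)$ (Lemma~\ref{lemma:priorthickness}); the conditions on the $\theta_1$-prior and the inverse-gamma prior on $\sigma^2$ are immediate. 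You instead rebuild the Ghosal--Ghosh--van der Vaart machinery from scratch: KL prior-mass ball, sieve with small prior complement, entropy bound, and tests. The two key prior facts you rely on --- the spike of the horseshoe at zero once $\tau\le n^{-(2+b)}$, and its thickness on $[-L\sigma_0,L\sigma_0]$ --- are exactly the content of the paper's two lemmas, so the essential mathematics coincides; what your route buys is self-containedness and an explicit view of where the cumulative-sum map $T$ enters, while the paper's route buys brevity by outsourcing the sieve/entropy/testing work to the cited master theorem.

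Two places in your sketch deserve care before it would count as a complete proof. First, in the prior-mass step the product $\prod_{j\notin S_0}\Pi(|\eta_j|\le r')$ must exceed $e^{-Cs_0\log n}$ \emph{while} $n r'\lesssim\sigma_0\epsilon_n$ so that the null coordinates' cumulative contribution to $\|\theta-\theta_0\|_\infty$ stays within budget; since the horseshoe tail satisfies $\Pi(|\eta_j|>r')\asymp\tau/r'$, these two requirements force $n^2\tau\lesssim\sigma_0(s_0\log n)^{3/2}n^{-1/2}$, and you should verify that the exponent $2+b$ in Assumption~\ref{assump:prior} actually closes this arithmetic for all $b>0$ and all $s_0\ge 1$ (this is precisely the role the threshold $a_n=s_0\log n/n^2$ plays in Lemma~\ref{lemma:priormass}). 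Second, your sieve and tests are stated only for $\theta$; with $\sigma^2$ unknown you need either to truncate $\sigma^2$ into $[\sigma_0^2(1-\delta),\sigma_0^2(1+\delta)]$-type shells with the inverse-gamma prior controlling the complement, or to use tests that are uniformly exponentially consistent over a range of variances. Neither point is a wrong idea, but both are where the details live.
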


The above result implies that the posterior convergence rate for $\|\theta - \theta_0\|_2/\sqrt{n}$ is of the order $\sigma_0 \sqrt{s_0\log n/n}$. When the exact piecewise constant structure is known, the proposed Bayesian fusion estimation method achieves the optimal convergence rate $O(\sigma_0\sqrt{s/n})$ upto a logarithmic term in $n$. The posterior convergence rate also adapts to the (unknown) size of the pieces. The rate directly compares with the Bayesian fusion estimation method as proposed in \cite{song2020bayesian}.

The Horseshoe prior (and other global-local shrinkage priors) are continuous shrinkage priors, and hence exact block structure recovery is not possible. However, we can consider discretization of the posterior samples via the posterior projection of the samples $(\theta, \sigma)$ to a discrete set $S(\theta, \sigma) = \{2 \leq j \leq n: |\theta_j - \theta_{j-1}|/\sigma < \epsilon_n/n\}.$ The number of false-positives resulting from such a discretization can be expressed as the cardinality of the set $A(\theta, \sigma) = S^c(\theta, \sigma) - \{2 \leq j \leq n: \theta_{0,j} - \theta_{0,j-1} \neq 0\}.$ The induced posterior distribution of $S(\theta,\sigma)$ (and hence that of $A(\theta, \sigma)$) can be shown to be `selection consistent', in the sense that the number of false-positives as defined above is bounded in probability. We formally present this result below.

\begin{theorem}
	Under the assumptions of Theorem~\ref{theorem:posteriorconvergencerate}, the posterior distribution of $A(\theta, \sigma)$ satisfies
	\begin{equation}
		\Pi^n(\#A(\theta,\sigma) > Ks_0 \mid y ) \rightarrow 0,
	\end{equation}
in probability or in $L_1$ wrt the measure of $y$ for some fixed constant $K > 0$.
\label{theorem:structure-recovery}
\end{theorem}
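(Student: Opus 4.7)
The strategy is to combine the $L_2$ contraction from Theorem~\ref{theorem:posteriorconvergencerate} with a deterministic inequality that translates control of $\|\theta-\theta_0\|_2$ into an upper bound on $\#A(\theta,\sigma)$. I would first restrict attention to the intersection of two high-posterior-probability events: (i) $\{\|\theta-\theta_0\|_2^2 \le M^2\sigma_0^2 s_0\log n\}$, which is provided directly by Theorem~\ref{theorem:posteriorconvergencerate}, and (ii) $\{c_1\sigma_0\le\sigma\le c_2\sigma_0\}$, which I would derive by a companion argument using the inverse-gamma prior on $\sigma^2$: the posterior update for $\sigma^2$ is driven by $\|y-\theta\|_2^2$, which concentrates near $n\sigma_0^2$ on the event in~(i) combined with a standard $\chi^2$-tail bound applied to $\|y-\theta_0\|_2^2/\sigma_0^2$.

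Second, on this intersection I would establish a deterministic compatibility-type bound of the form
\[
\#A(\theta,\sigma)\cdot\frac{\sigma^2\epsilon_n^2}{n^2}\cdot\kappa_n \;\le\; C\,\|\theta-\theta_0\|_2^2,
\]
where $\kappa_n$ is a restricted-eigenvalue-type constant for the cumulative-sum operator (which maps $\eta$ to $\theta$) evaluated on sparse supports of cardinality $O(s_0)$. The lower bound $|\eta_j|\ge\sigma\epsilon_n/n$ for $j\in A(\theta,\sigma)$ drives the left-hand side, while the right-hand side is controlled by event~(i). Substituting both bounds and using event~(ii) to replace $\sigma$ by a constant multiple of $\sigma_0$ would yield $\#A(\theta,\sigma) \le K s_0$ for a suitable constant $K$, after which Markov's inequality converts the deterministic bound into the posterior statement of the theorem.

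The main obstacle is sharpening the compatibility constant $\kappa_n$: since cumulative sums of $\eta_j-\eta_{0,j}$ can cancel (consider adjacent jumps of opposite sign), the unrestricted operator norm gives a $\kappa_n$ that is far too weak. The way around this is to exploit the sparsity of $T_0\cup A(\theta,\sigma)$: by Assumption~\ref{assump:true-blocks} we have $s_0\prec n/\log n$, so the combined support is of cardinality $O(s_0)\ll n/\log n$, leaving enough separation between the sparse jumps to prevent full cancellation in the $L_2$ norm. An alternative route, which bypasses the compatibility issue entirely, is to bound the posterior expectation $\mathbb{E}^\Pi[\#A(\theta,\sigma)\mid y] = \sum_{j\notin T_0}\Pi^n(|\eta_j|/\sigma\ge\epsilon_n/n\mid y)$ by estimating each marginal tail probability using the Horseshoe's pole at zero together with the small global scale bound $\tau<n^{-(2+b)}$ from Assumption~\ref{assump:prior}, and then applying Markov's inequality; this avoids the deterministic inequality at the cost of a more intricate hierarchical-posterior calculation, since the $\eta_j$'s are coupled through the shared parameters $\theta_1$ and $\sigma^2$.
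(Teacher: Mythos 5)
Your primary route --- deriving the bound on $\#A(\theta,\sigma)$ deterministically from the $L_2$ contraction event --- fails, and the failure is quantitative rather than a matter of sharpening a compatibility constant. On the event $\|\theta-\theta_0\|_2^2\le M^2\sigma_0^2 s_0\log n$ one does get $\sum_{j}(\eta_j-\eta_{0,j})^2\le 4M^2\sigma_0^2 s_0\log n$ for free, since the differencing map $\theta\mapsto\eta$ is $2$-Lipschitz in $L_2$; no restricted-eigenvalue argument is needed, and the cancellation issue you raise only afflicts the inverse (cumulative-sum) direction, which this argument never uses. The real problem is that each $j\in A(\theta,\sigma)$ contributes only $\sigma^2\epsilon_n^2/n^2$ to that sum, so the inequality you propose yields $\#A(\theta,\sigma)\lesssim (\sigma_0^2/\sigma^2)\,\sigma_0^2 s_0\log n\cdot n^2/(\sigma_0^2\epsilon_n^2)\asymp n^3$, which is vacuous. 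To extract $\#A(\theta,\sigma)\le Ks_0$ from an $L_2$ ball of radius $\sigma_0\sqrt{s_0\log n}$ the per-coordinate threshold would have to be of order $\sigma\sqrt{\log n}$, whereas the theorem uses $\sigma\epsilon_n/n\asymp\sigma\sqrt{s_0\log n}\,n^{-3/2}$. Selection consistency at such a fine threshold is strictly stronger than $L_2$ contraction and cannot be deduced from it; no choice of $\kappa_n$ repairs this.

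Your fallback route --- bounding $\sum_{j\notin T_0}\Pi^n(|\eta_j|/\sigma\ge\epsilon_n/n\mid y)$ coordinatewise using the Horseshoe's concentration near zero at scale $\tau<n^{-(2+b)}$ and then applying Markov --- is the correct idea, and it is essentially the mechanism behind the result. The paper, however, does not carry out that computation directly: it verifies the prior conditions of \cite{song2020bayesian} --- prior mass at least $1-n^{-b'}$ in an interval of radius $s_0\log n/n^2$ around zero (Lemma~\ref{lemma:priormass}), prior thickness on $[-L,L]$ (Lemma~\ref{lemma:priorthickness}), plus routine conditions on the priors for $\theta_1$ and $\sigma^2$ --- and then invokes their Theorems 2.1 and 2.2, which deliver the contraction rate and the false-positive bound simultaneously. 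So a complete proof along your second route would require you to reproduce the Song--Cheng argument controlling the joint posterior despite the coupling of the $\eta_j$ through $\tau$ and $\sigma^2$; as written, your proposal names the right ingredient but leaves that entire argument undone, and your first route should be discarded.
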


Note that the thresholding rule depends on the posterior contraction rate, that involves the number $(s_0)$ of true blocks. However, in practical scenarios, knowledge of $s_0$ may not be readily available. To tackle such situations, we propose to use a threshold analogous to the concept of shrinkage weights as in the sparse normal sequence model. We outline the details in the appendix.

\section{Simulation studies}
\label{sec:simulation}

To evaluate the performance of our method and compare with other competing approaches, we carry out simulation studies for varying signal and noise levels. Three different types of piecewise constant functions of length $n = 100$ are considered -- (i) 10 evenly spaced signal pieces, with each of the pieces having length 10, (ii) 10 unevenly spaced signal pieces, with the shorter pieces each having length 5, and (iii) 10 very unevenly spaced signal pieces, with the shorter pieces each having length 2. The response variable $y$ is generated from an $n$-dimensional Gaussian distribution with mean $\theta_{0}$ and variance $\sigma^2$, where $\theta_0$ is the true signal vector, and $\sigma \in \{0.1, 0.3, 0.5\}.$ 

We estimate the true signal using our Horseshoe prior based fusion estimation approach, and compare with three other approaches -- (a) Bayesian $t$-fusion method, as proposed in \cite{song2020bayesian}, (b) Bayesian fusion approach based on a Laplace prior, and (c) the $L_1$-fusion method. For the Bayesian methods, we consider 5000 MCMC samples, with initial 500 samples as burn-in. The MCMC-details including Gibbs sampler updates for the $t$-fusion and Laplace fusion approaches, and the choice of the corresponding scale parameters for the above priors are taken as suggested in \cite{song2020bayesian}. The hyperparameter values for the prior on error variance are taken as $a_{\sigma} = b_{\sigma} = 0.5$ across all the Bayesian methods, with the local scale parameter $\lambda_1 = 5$. For the frequentist fused lasso method based on $L_1$-fusion penalty, we use the \texttt{genlasso} package in \texttt{R}, and choose the fusion penalty parameter using a 5-fold cross-validation approach. To evaluate the performance of the estimators, we use the Mean Squared Error (MSE) and the adjusted MSE, respectively defined as, 
\begin{eqnarray}
	\mathrm{MSE} &=& \|\hat{\theta} - \theta_0\|_2^2/n,\nonumber \\
	\mathrm{adj. MSE} &=& \|\hat{\theta} - \theta_0\|_2^2/\|\theta_0\|_2^2,\nonumber 
\end{eqnarray}
where $\hat{\theta}$ is the estimated signal, given by the posterior mean in case of the Bayesian methods and the fused lasso estimate for the frequentist method. All the computations were performed in \texttt{R} on a laptop having an Intel(R) Core(TM) i7-10750H CPU @ 2.60GHz with 16GB RAM and a 64-bit OS, x64-based processor. The \texttt{R} codes to implement our proposed method are available at \url{https://github.com/sayantanbiimi/HSfusion}.

The summary measures for the performance of the four methods using MSE, adjusted MSE and their associated standard errors (in brackets) based on 100 Monte-Carlo replications are presented in Table~\ref{table:simu-results}. We find that our proposed Horseshoe based fusion estimation method has excellent signal estimation performance across all the different types of signal pieces and noise levels. Within a Bayesian framework, the performance of the Laplace based fusion estimation method is not at all promising. Though the estimation performances for the Horseshoe-fusion, $t$-fusion and fused lasso are comparable in case of low noise levels, our proposed method performs much better in the higher noise scenario. Additionally, the Bayesian methods provide credible bands as well that can be utilized further for uncertainty quantification. In that regard, we observe from Figures~\ref{fig1}, \ref{fig2} and \ref{fig3} that the Horseshoe based estimates have comparatively narrower credible bands as compared to other Bayesian competing methods considered here. Overall, we could successfully demonstrate the superiority of using a heavy-tailed prior for successive differences in the signals in contrast to an exponentially lighter one. In addition to that, using a prior with a comparatively sharper spike at zero results in better signal and structure recovery, especially in situations where the noise level is higher.

\begin{sidewaystable}[h]
	\small
	\begin{tabular}{ll|ll|ll|ll|ll}
		\hline 
		&       & \multicolumn{2}{c}{HS-fusion} & \multicolumn{2}{c}{$t$-fusion} & \multicolumn{2}{c}{Laplace fusion} & \multicolumn{2}{c}{$L_1$ fusion}  \\
		Signal      & $\sigma$ & MSE            &  adj MSE & MSE            &  adj MSE      & MSE              & adj MSE         & MSE          & adj MSE         \\
		\hline 
		& 0.1 & 0.002 (0.000)     & 0.000 (0.000)     & 0.002 (0.000)     & 0.001 (0.000)     & 0.504 (0.002) & 0.114 (0.004) & 0.005 (0.003) & 0.001 (0.001) \\
	Even	& 0.3 & 0.020 (0.001) & 0.004 (0.000)     & 0.030 (0.002) & 0.007 (0.001) & 0.516 (0.002) & 0.116 (0.004) & 0.035 (0.019) & 0.008 (0.004) \\
		& 0.5 & 0.062 (0.002) & 0.014 (0.000) & 0.133 (0.009) & 0.030 (0.002) & 0.539 (0.003) & 0.121 (0.004) & 0.087 (0.039) & 0.019 (0.009) \\
		&&&&&&&&& \\
		& 0.1 & 0.002 (0.000)     & 0.001 (0.000)     & 0.002 (0.000)     & 0.001 (0.000)  & 0.566 (0.002) & 0.249 (0.003) & 0.005 (0.003) & 0.002 (0.001) \\
	Uneven	& 0.3 & 0.019 (0.001) & 0.008 (0.000)     & 0.036 (0.002) & 0.016 (0.001) & 0.575 (0.002) & 0.253 (0.004) & 0.037 (0.019) & 0.016 (0.008) \\
		& 0.5 & 0.060 (0.002) & 0.026 (0.001) & 0.212 (0.012) & 0.094 (0.006) & 0.596 (0.003) & 0.263 (0.004) & 0.090  (0.044) & 0.039 (0.019) \\
		&&&&&&&&&\\
		& 0.1 & 0.002 (0.000)  & 0.002 (0.000)     & 0.005 (0.000)     & 0.005 (0.000)     & 0.479 (0.001) & 0.520  (0.001) & 0.009 (0.002) & 0.010  (0.002) \\
	V. Uneven	& 0.3 & 0.020 (0.001) & 0.022 (0.001) & 0.114 (0.007) & 0.124 (0.007) & 0.489 (0.001) & 0.531 (0.001) & 0.066 (0.024) & 0.072 (0.026) \\
		& 0.5 & 0.064 (0.002) & 0.070 (0.002) & 0.581 (0.020)  & 0.631 (0.021) & 0.507 (0.002) & 0.551 (0.002) & 0.143 (0.057) & 0.155 (0.062)\\
		\hline 
			\end{tabular}
	\caption{MSE and adjusted MSE values (with associated standard errors in parentheses) for Horseshoe-fusion, $t$-fusion, Laplace fusion, and fused lasso method, when the true signal is evenly spaced (``Even''), unevenly spaced (``Uneven''), and very unevenly spaced (``V. Uneven'').}
	\label{table:simu-results}
\end{sidewaystable}

\begin{figure}
	\begin{tabular}{lccc}
		& Evenly spaced pieces &  Unevenly spaced pieces &  Very unevenly spaced pieces \\
		True&&&\\
		&\includegraphics[width=40mm]{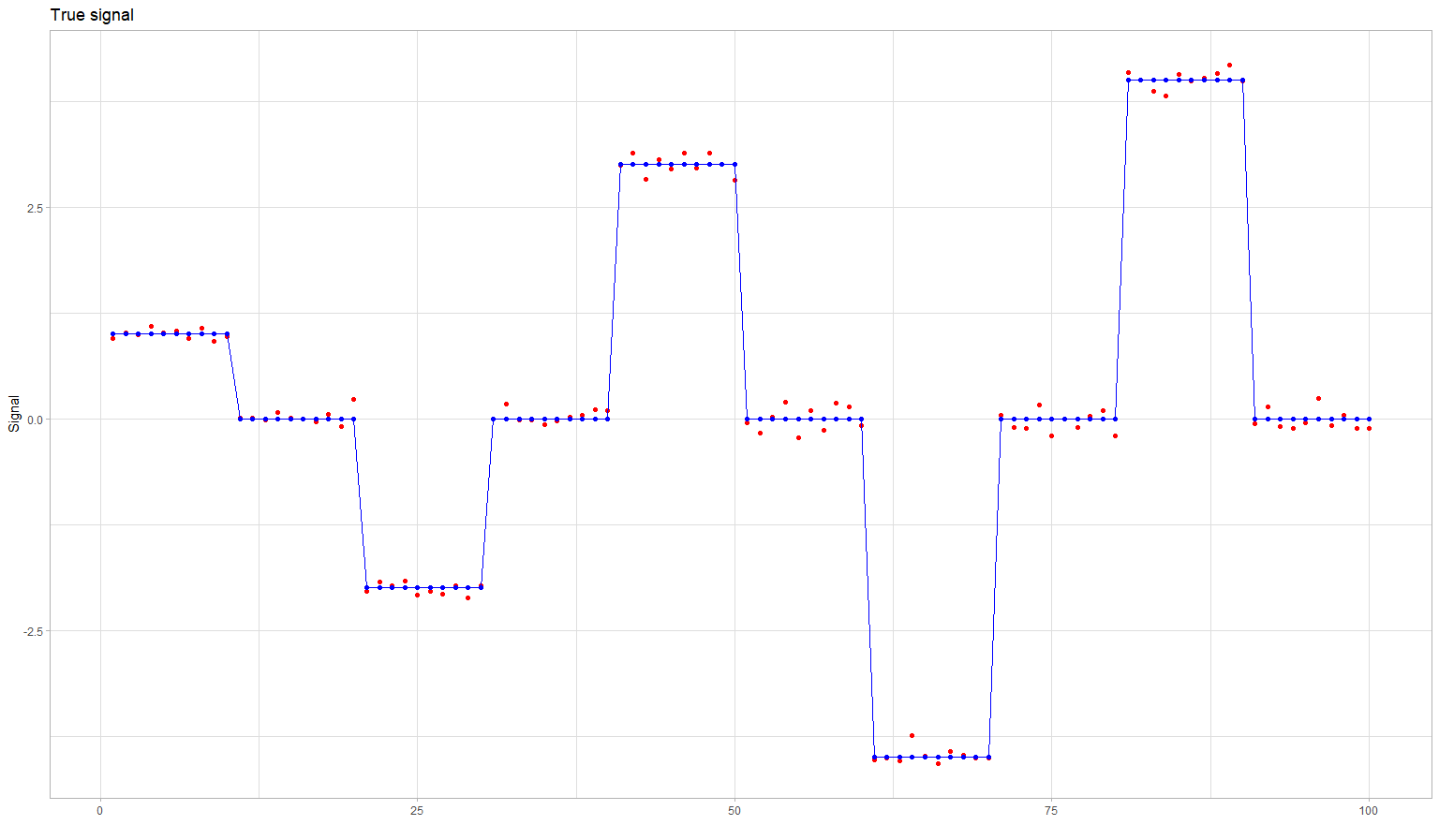} &   \includegraphics[width=40mm]{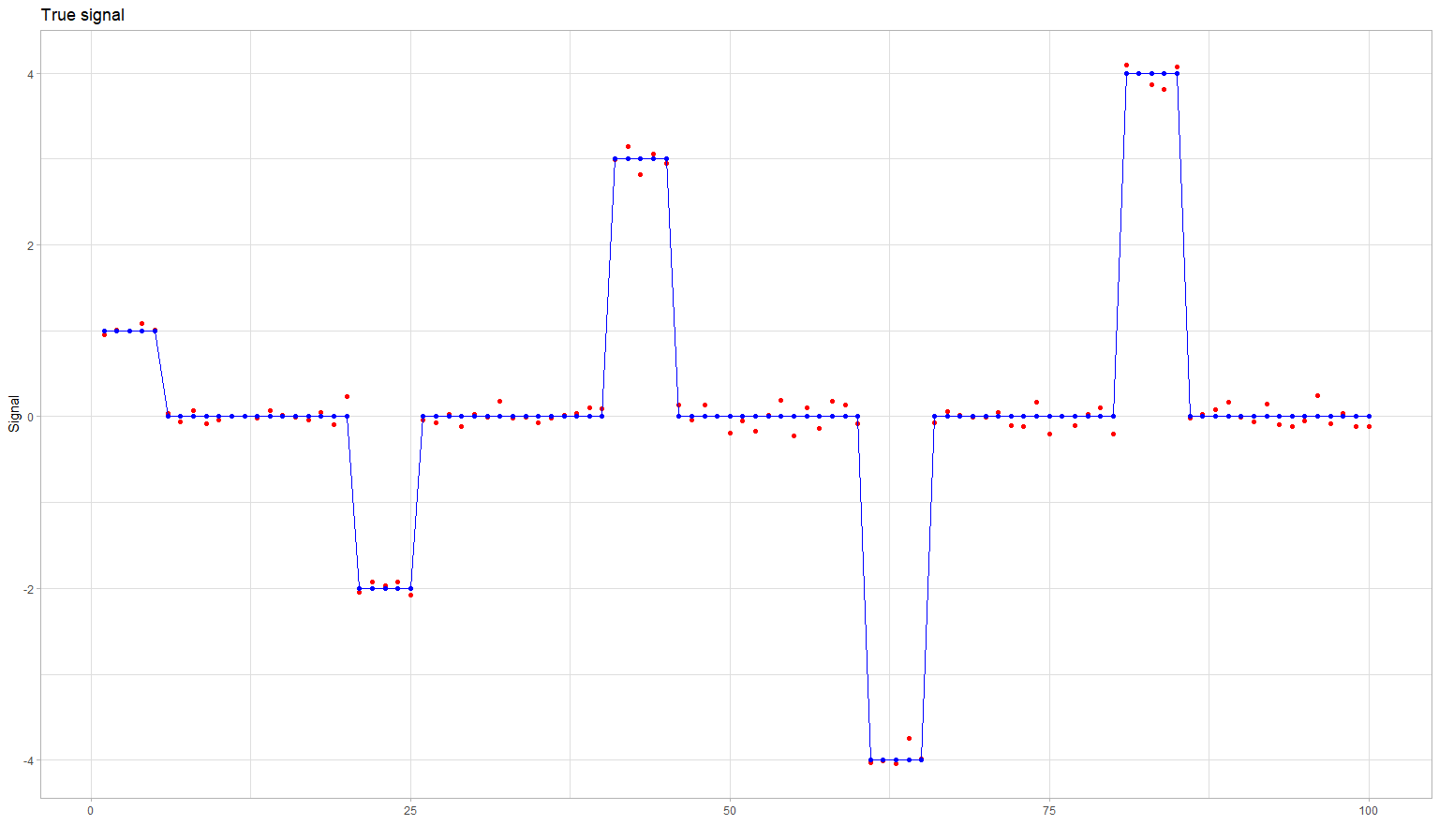} &   \includegraphics[width=40mm]{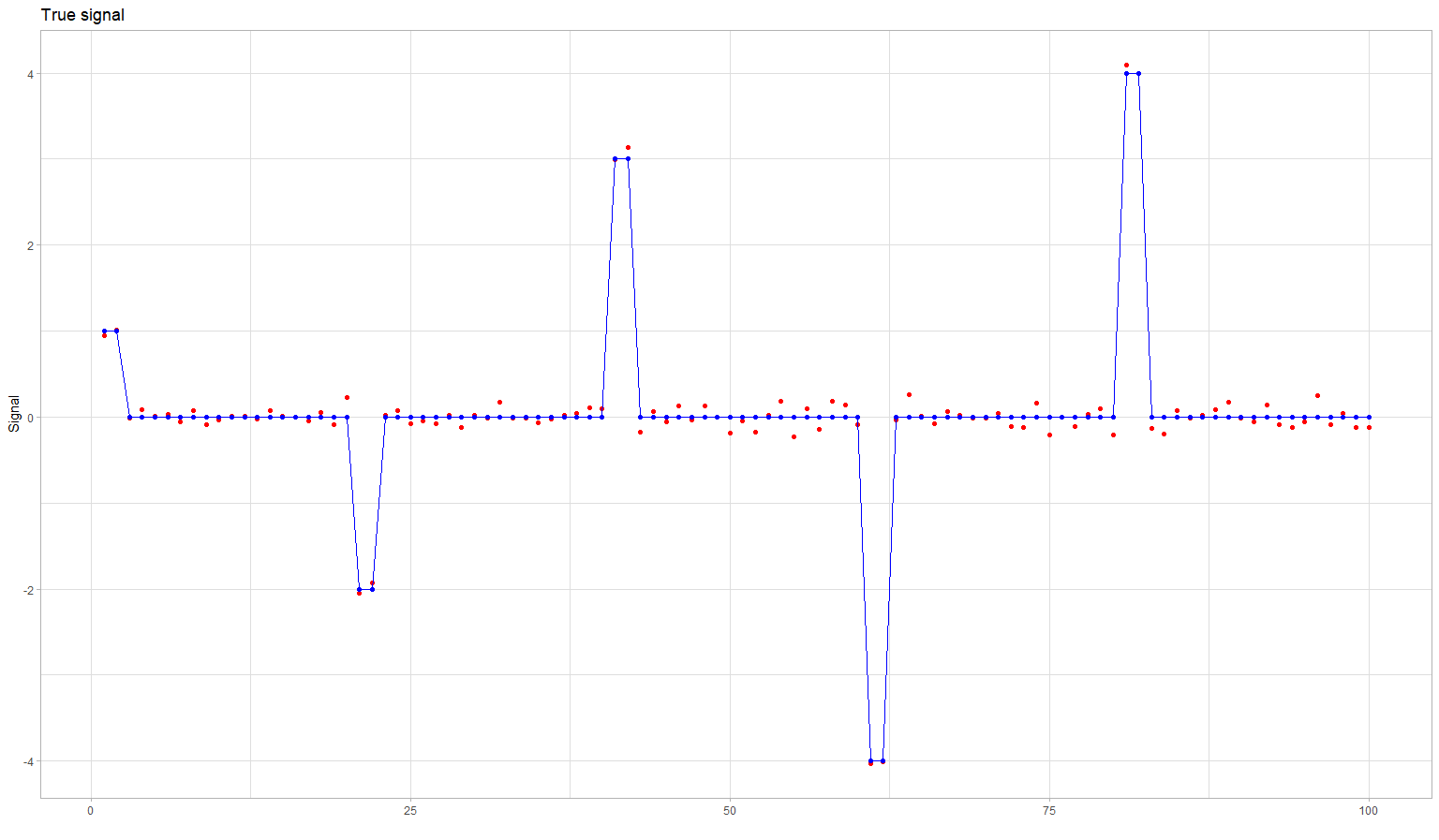} \\
		HS &&&\\
		& \includegraphics[width=40mm]{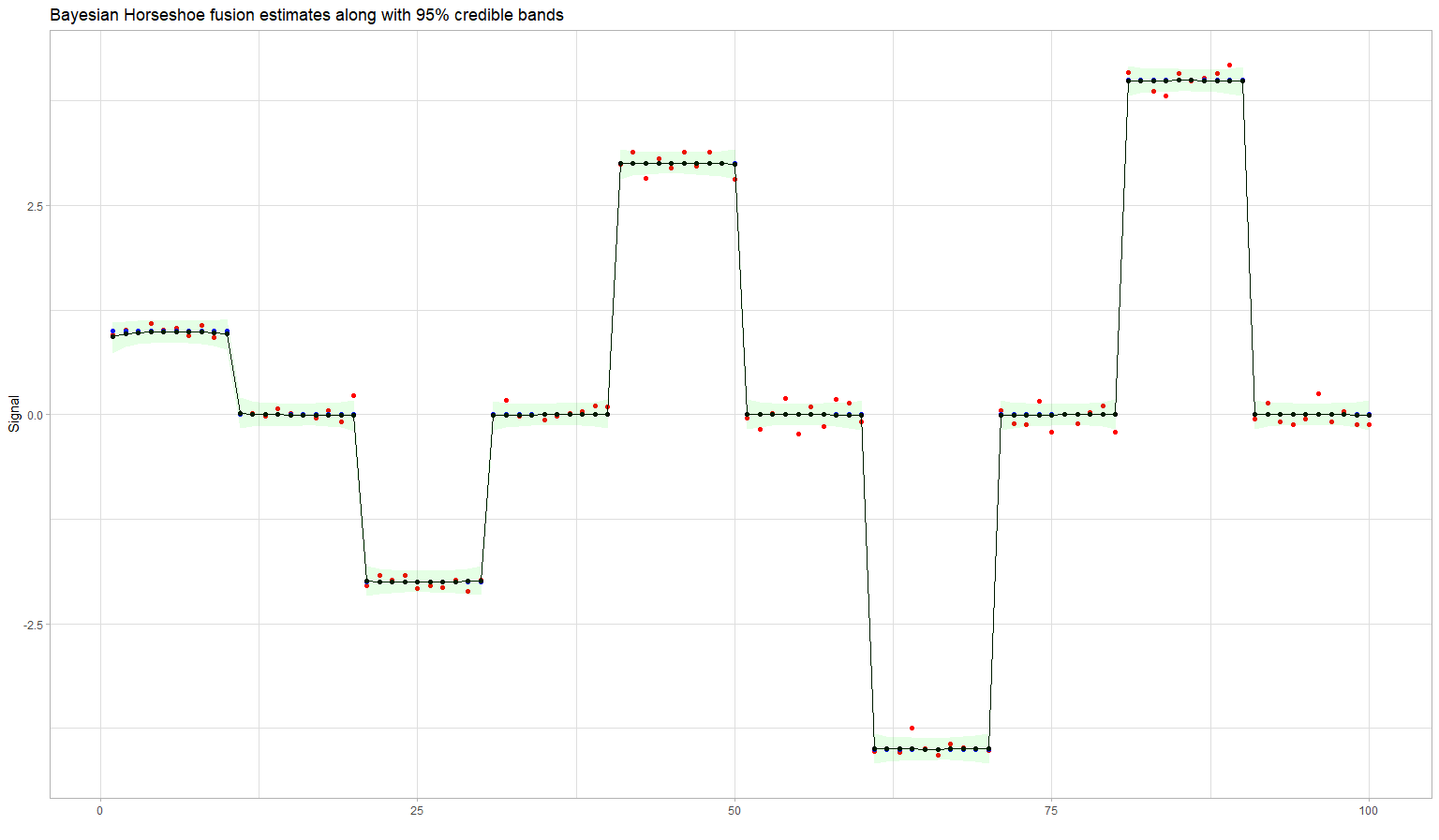} &   \includegraphics[width=40mm]{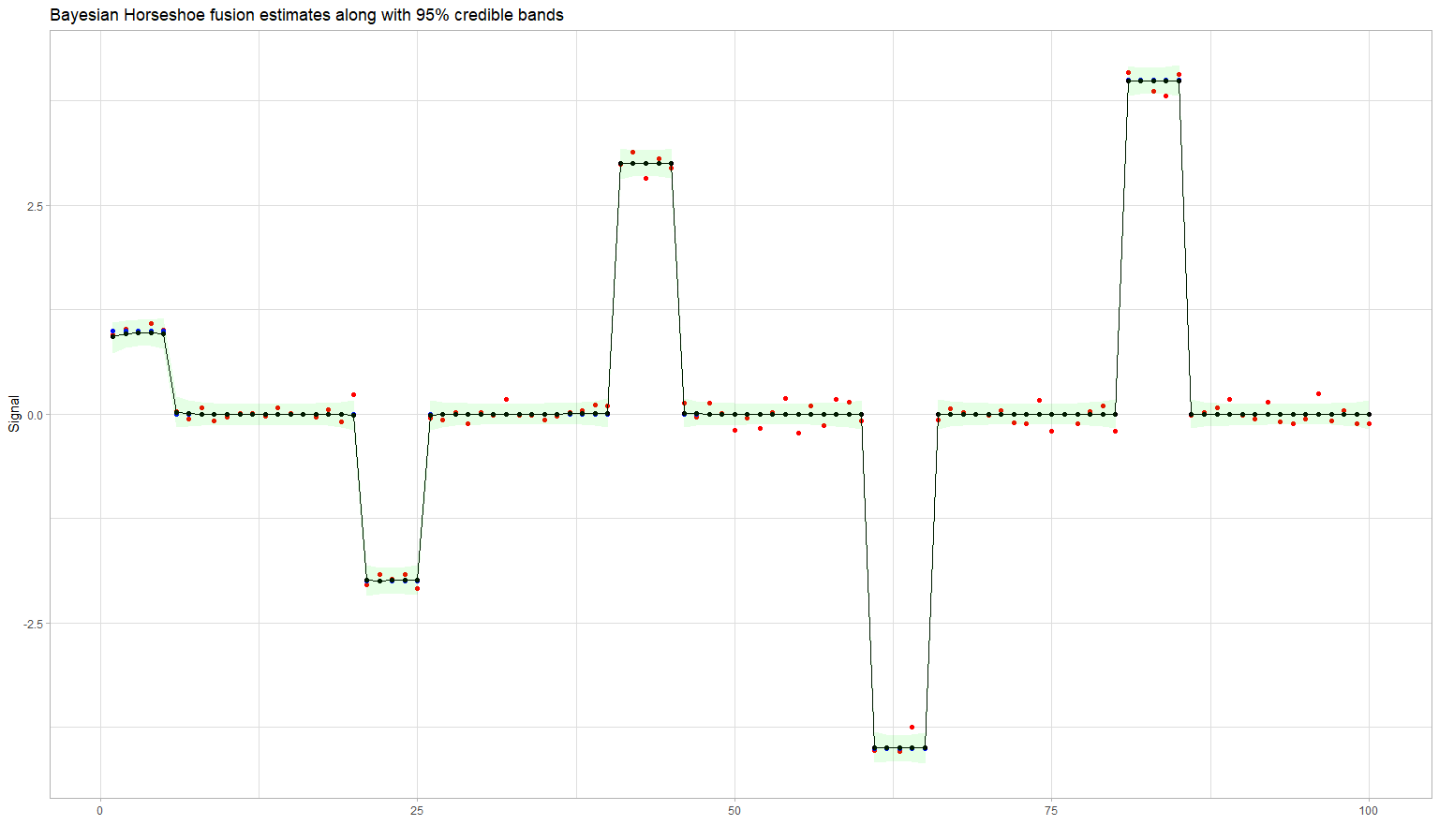} &   \includegraphics[width=40mm]{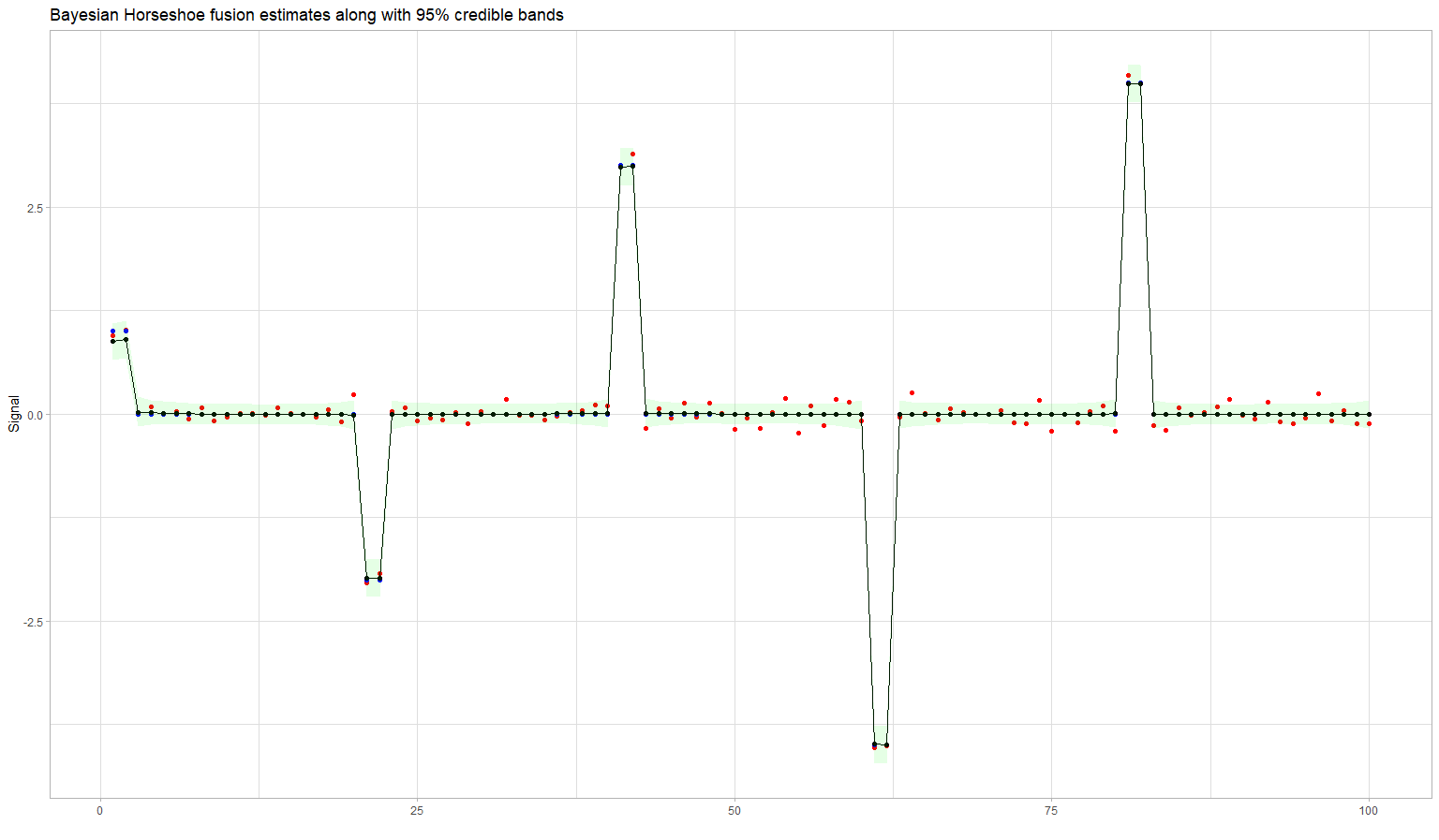} \\
	$t$ &&&\\
	&\includegraphics[width=40mm]{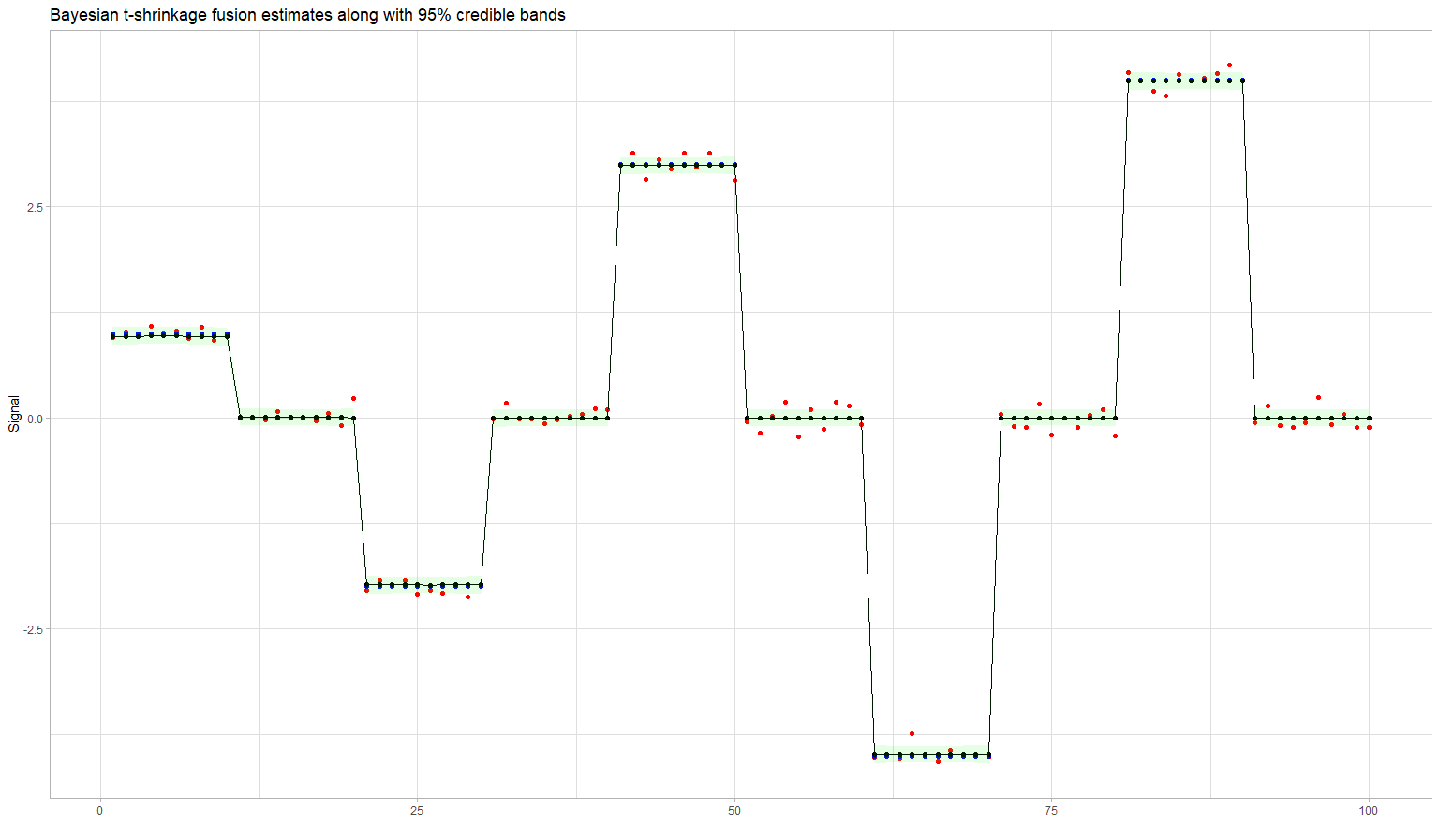} &   \includegraphics[width=40mm]{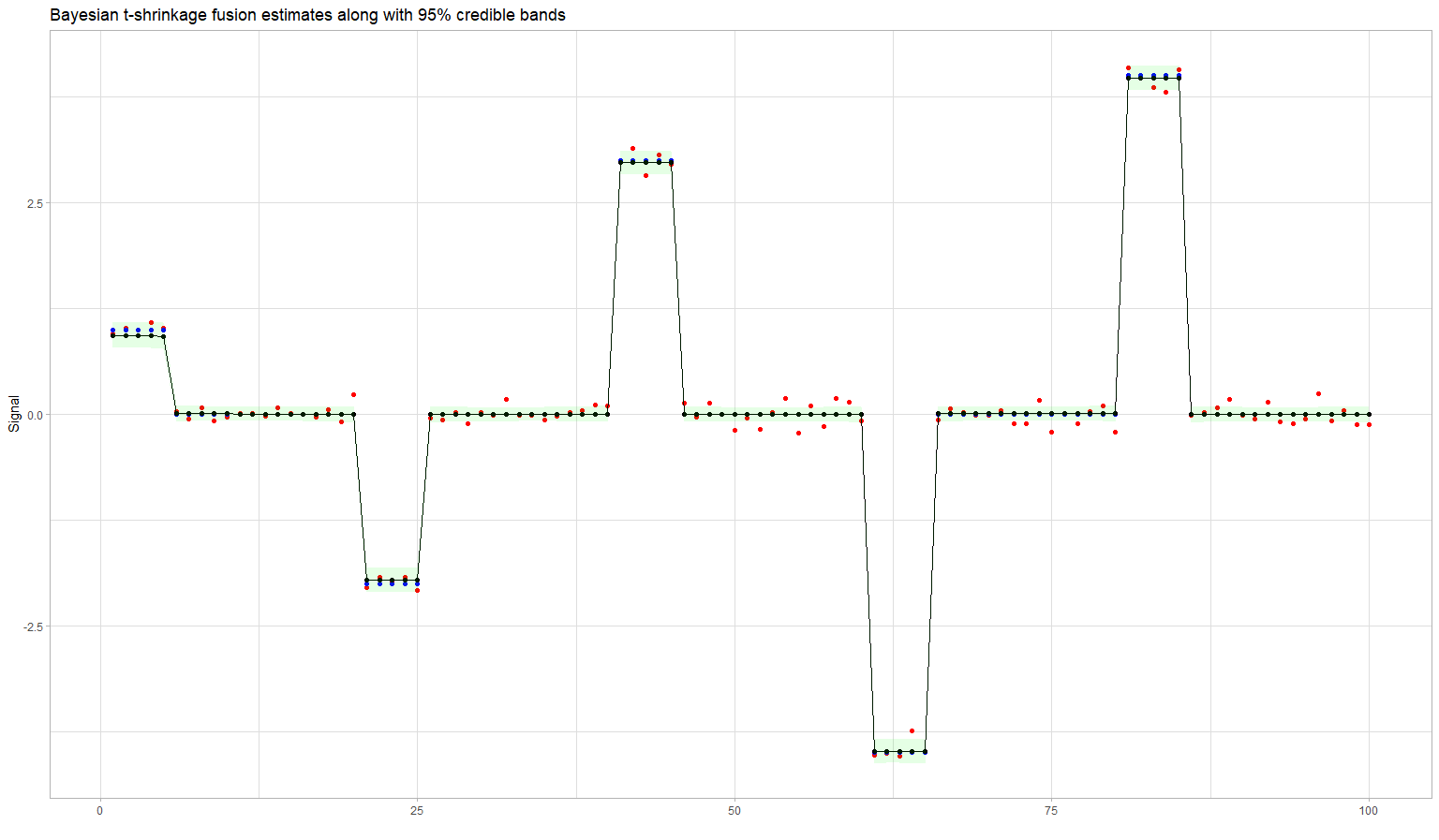} &   \includegraphics[width=40mm]{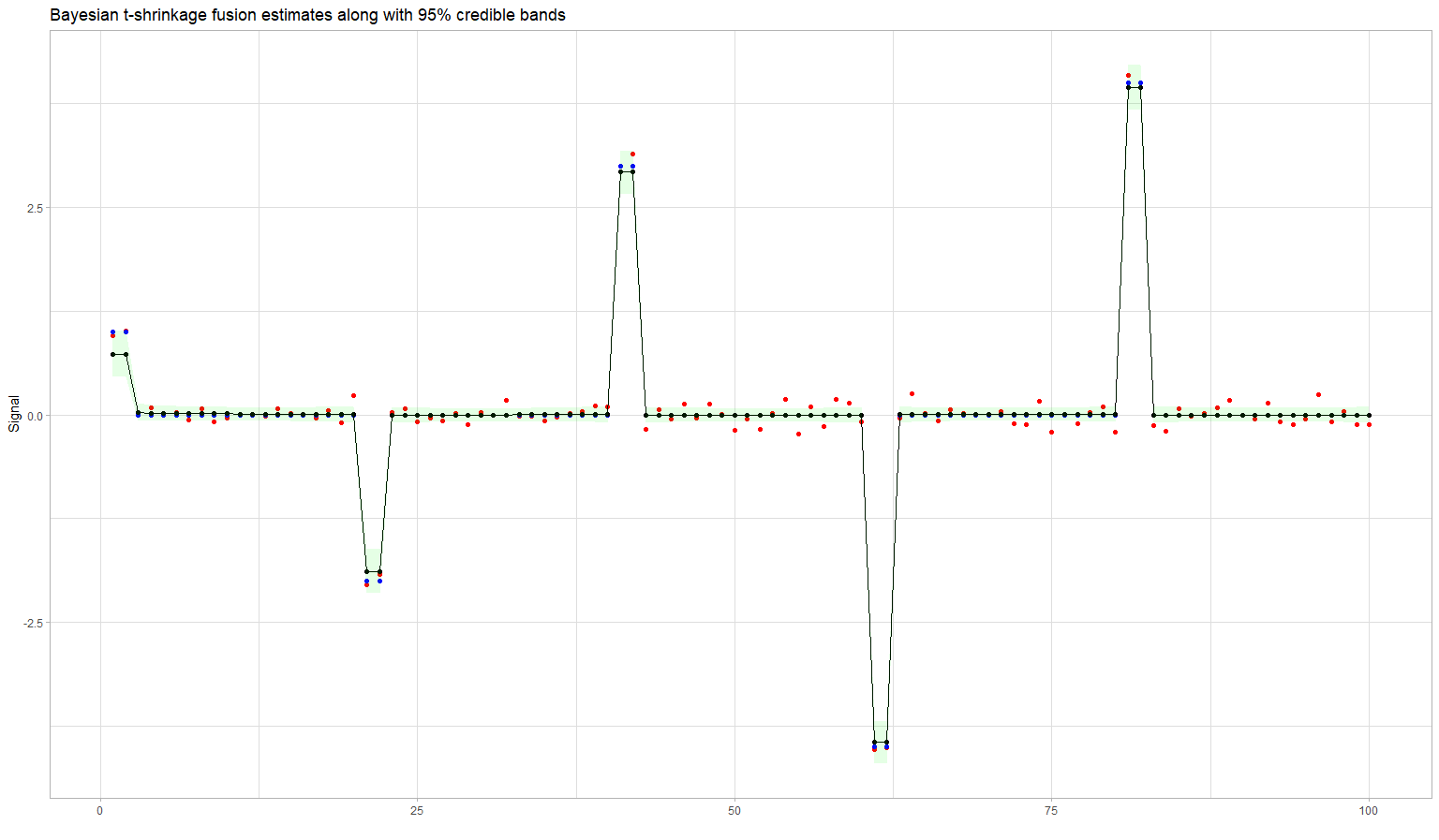} \\
	Laplace &&&\\
	 &	\includegraphics[width=40mm]{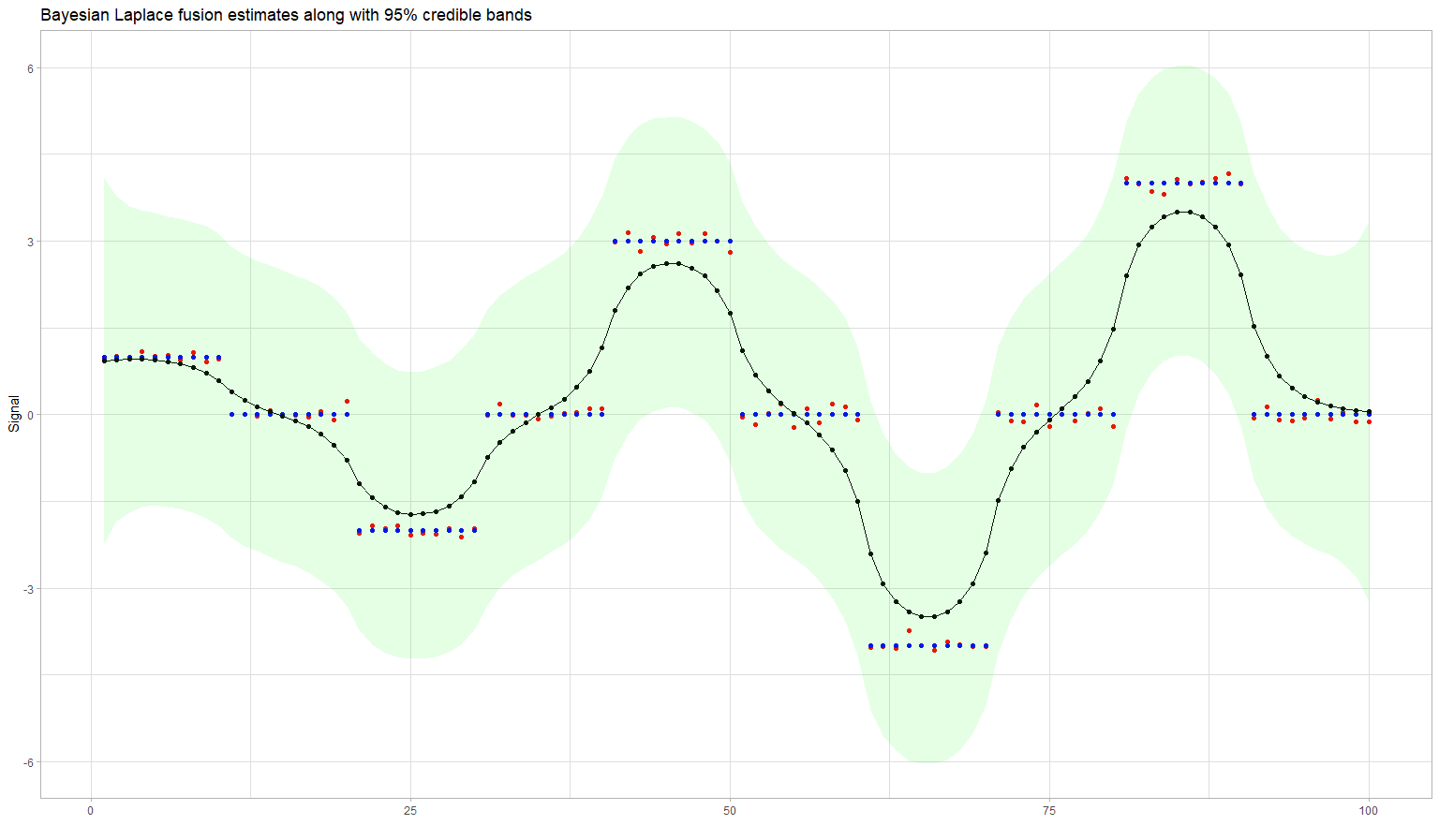} &   \includegraphics[width=40mm]{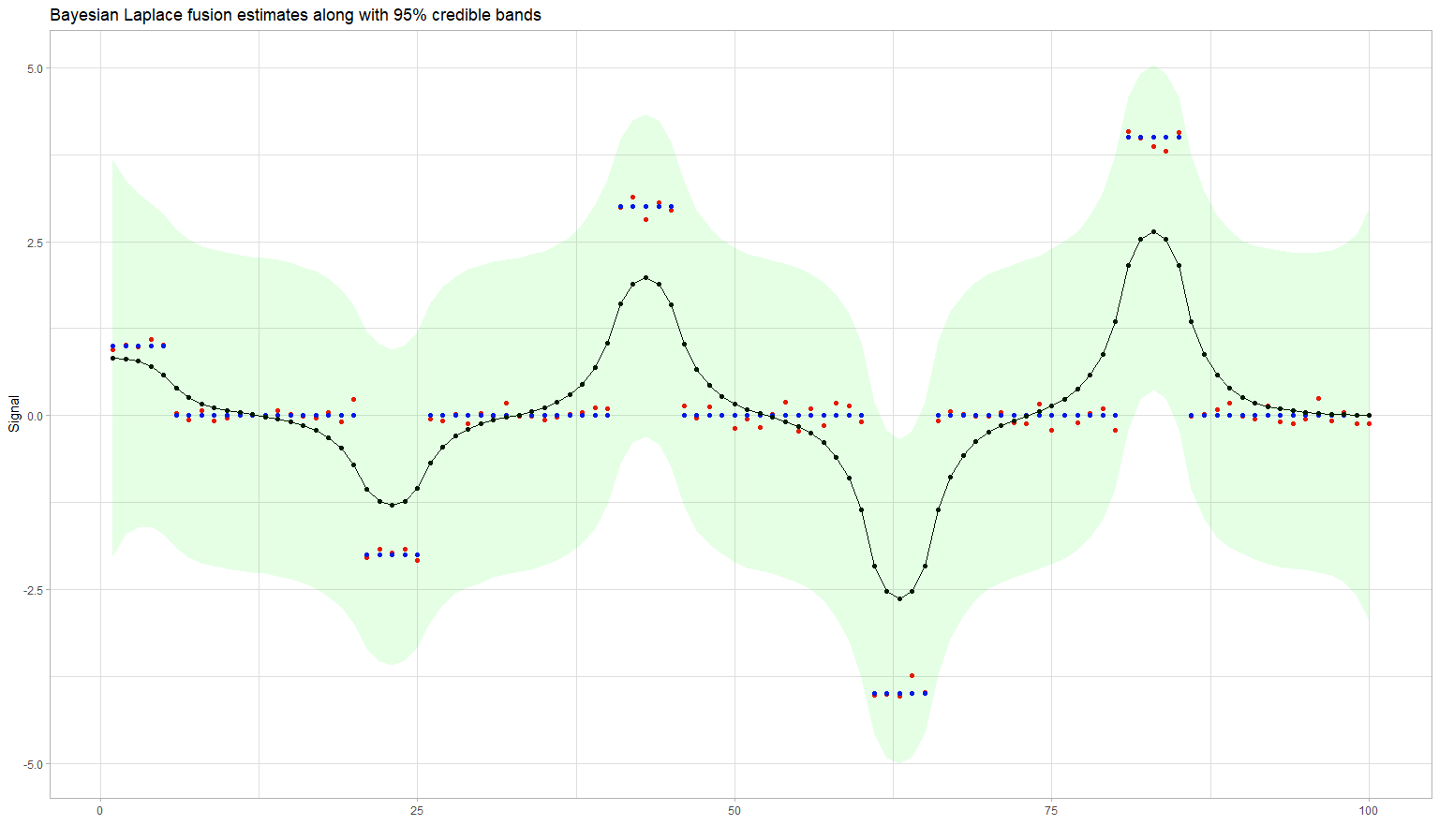} &   \includegraphics[width=40mm]{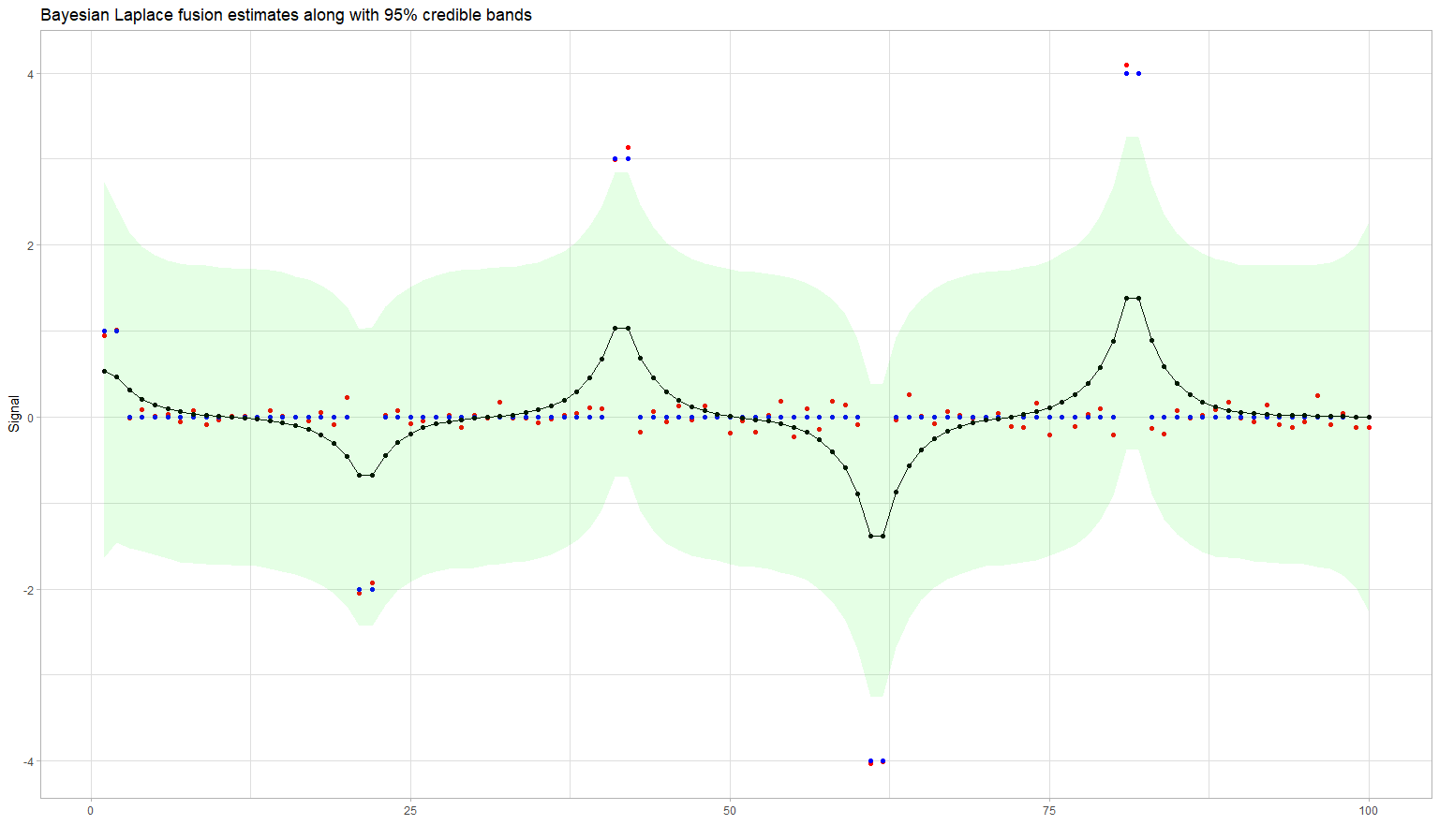} \\
	Fused &&&\\
	 & \includegraphics[width=40mm]{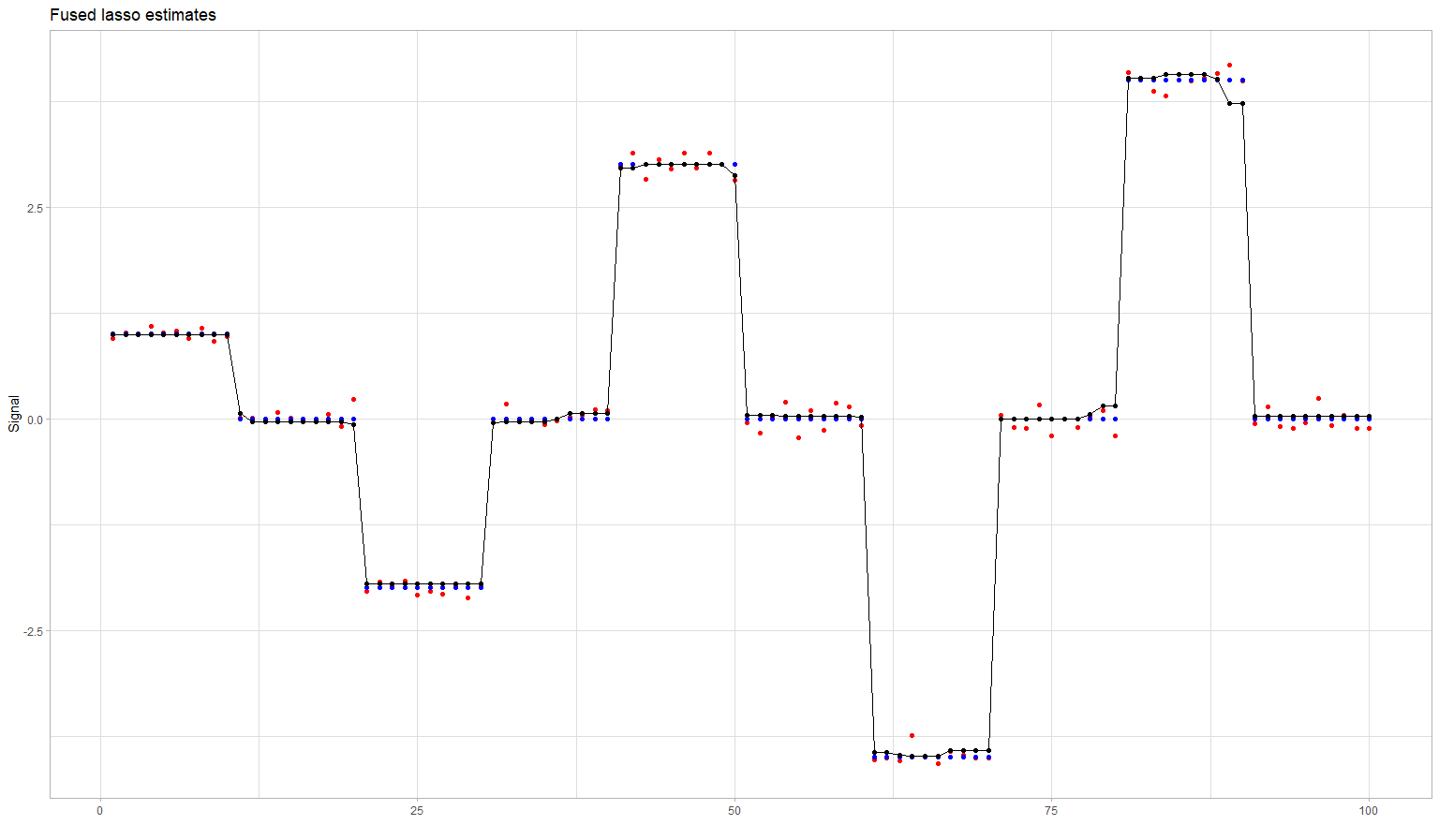} &   \includegraphics[width=40mm]{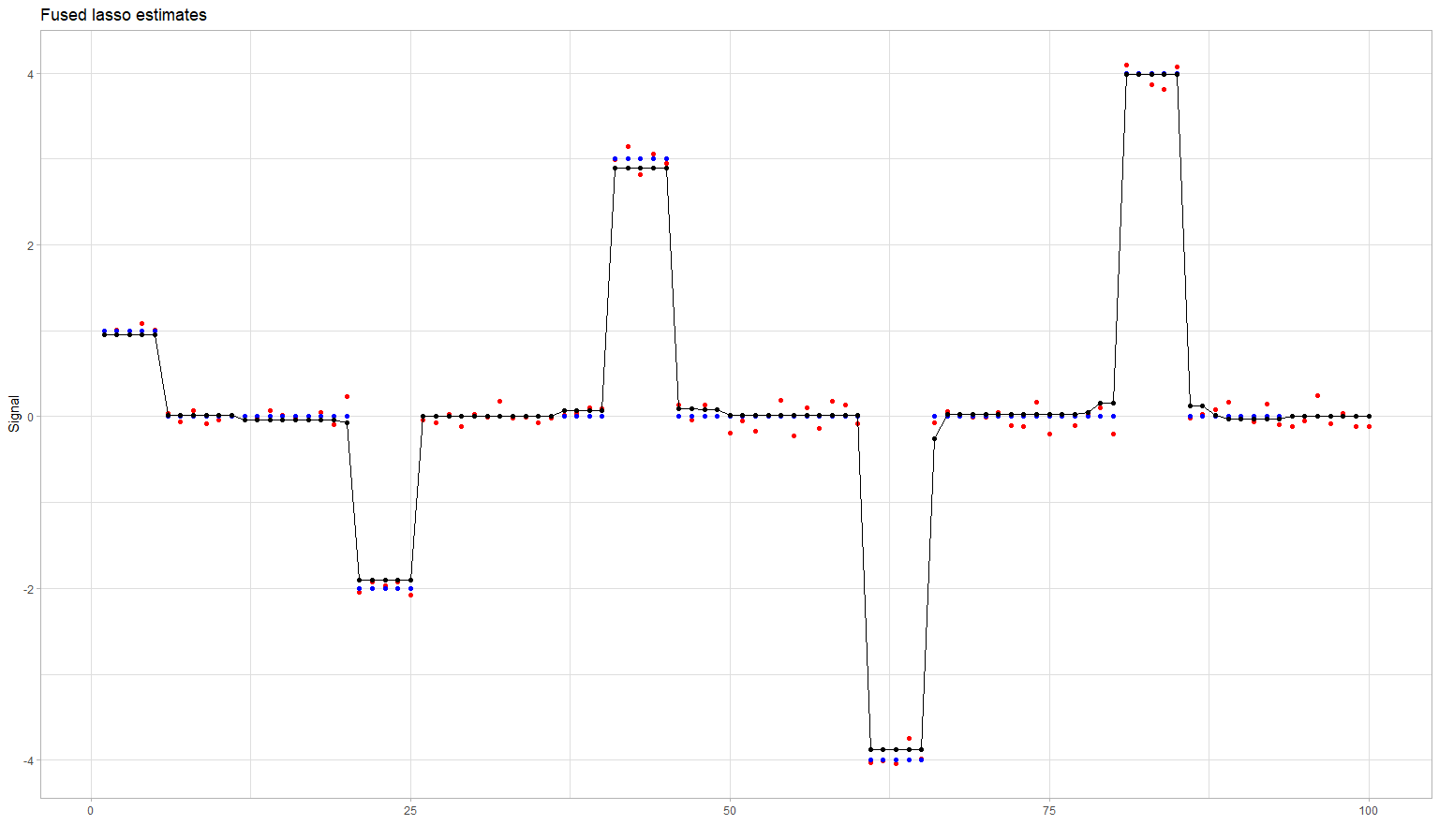} &   \includegraphics[width=40mm]{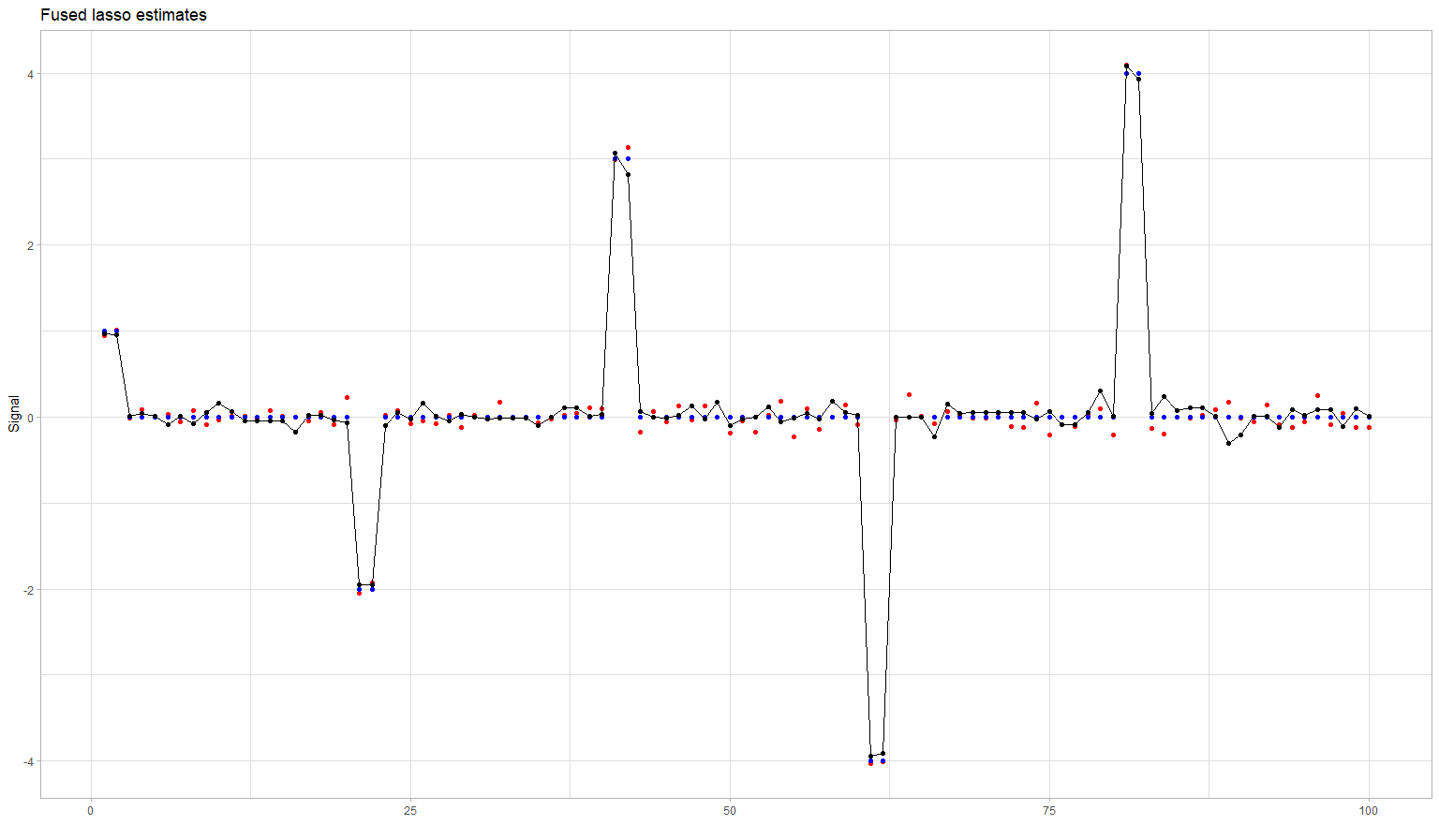} \\	
		\end{tabular}
	\caption{Fusion estimation performance with differently spaced signals and error sd $\sigma = 0.1.$ Observations are represented in red dots, true signals in blue dots, point estimates in black dots, and 95\% credible bands of the Bayesian procedures in green.}
	\label{fig1}
\end{figure}

\begin{figure}
	\begin{tabular}{lccc}
		& Evenly spaced pieces &  Unevenly spaced pieces &  Very unevenly spaced pieces \\
		True &&&\\
		&\includegraphics[width=40mm]{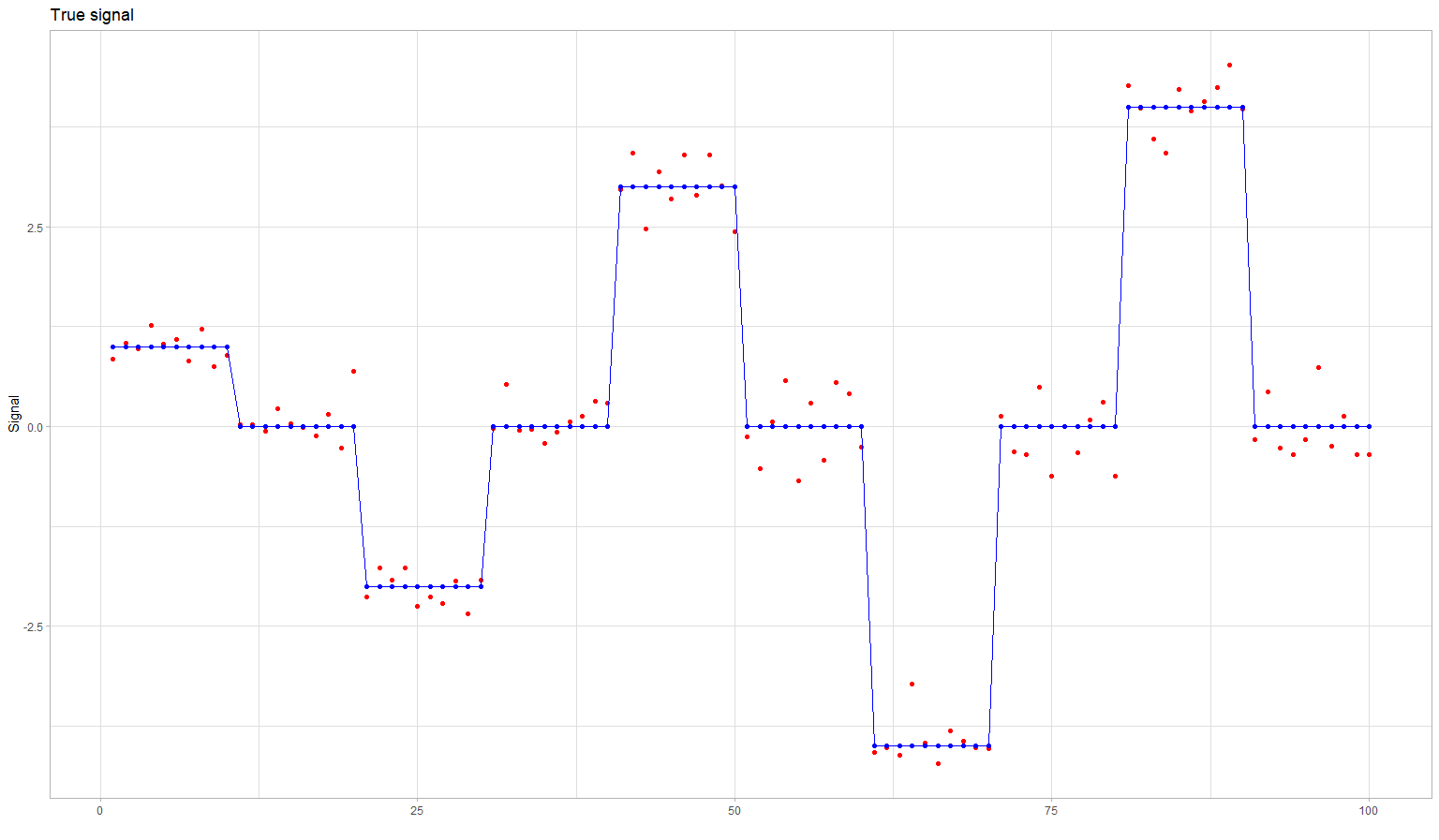} &   \includegraphics[width=40mm]{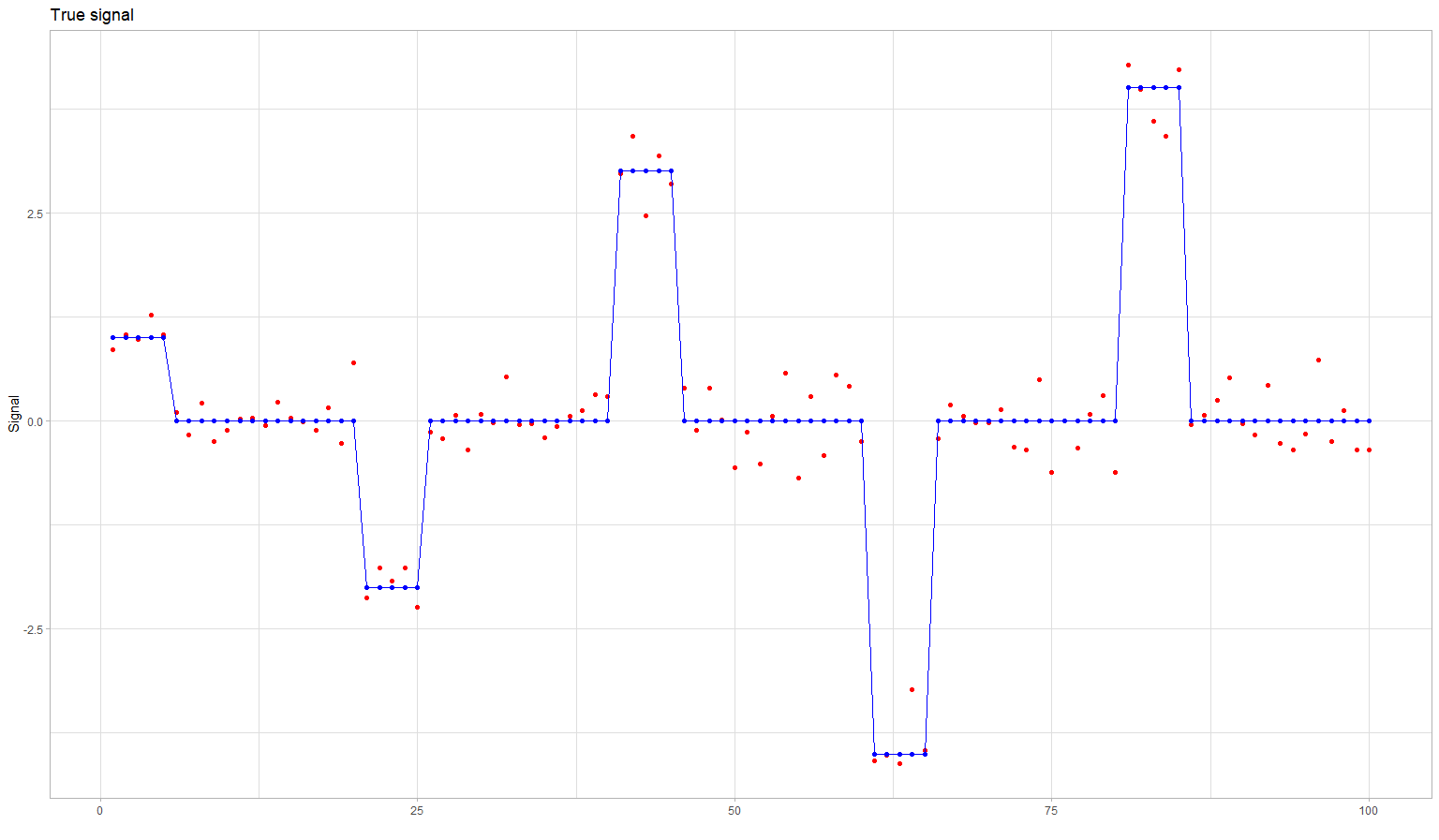} &   \includegraphics[width=40mm]{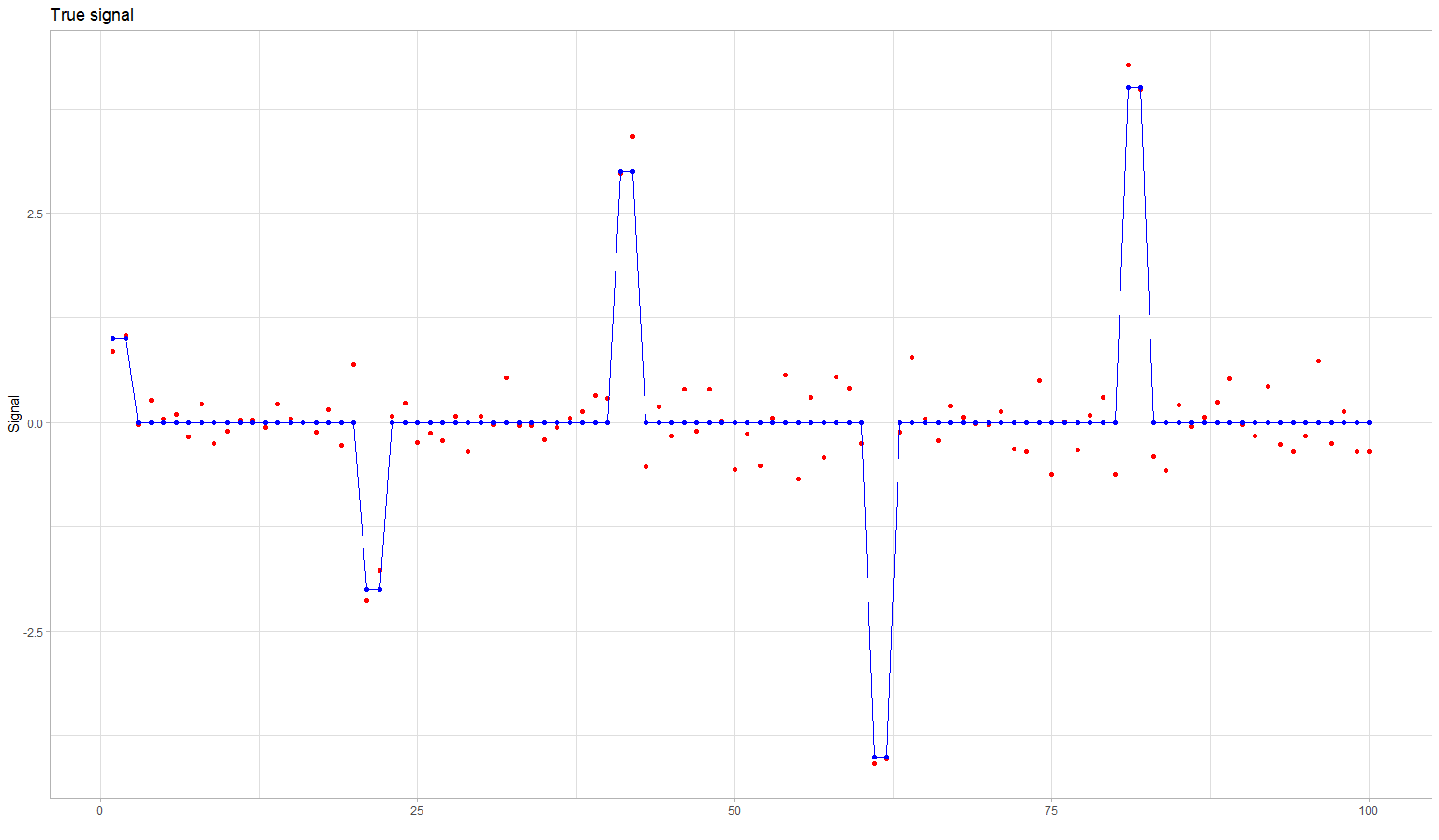} \\
		HS &&&\\
		& \includegraphics[width=40mm]{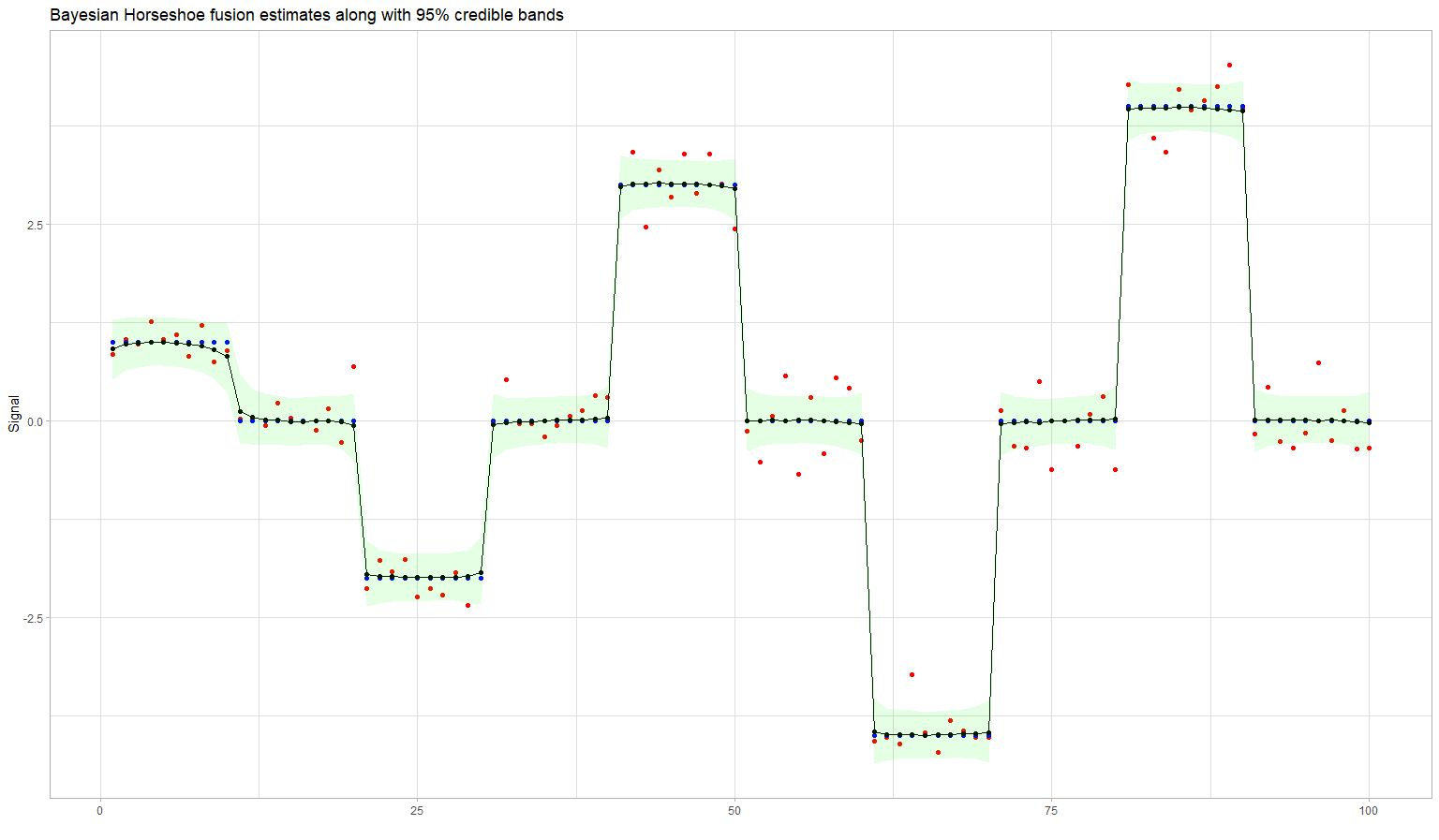} &   \includegraphics[width=40mm]{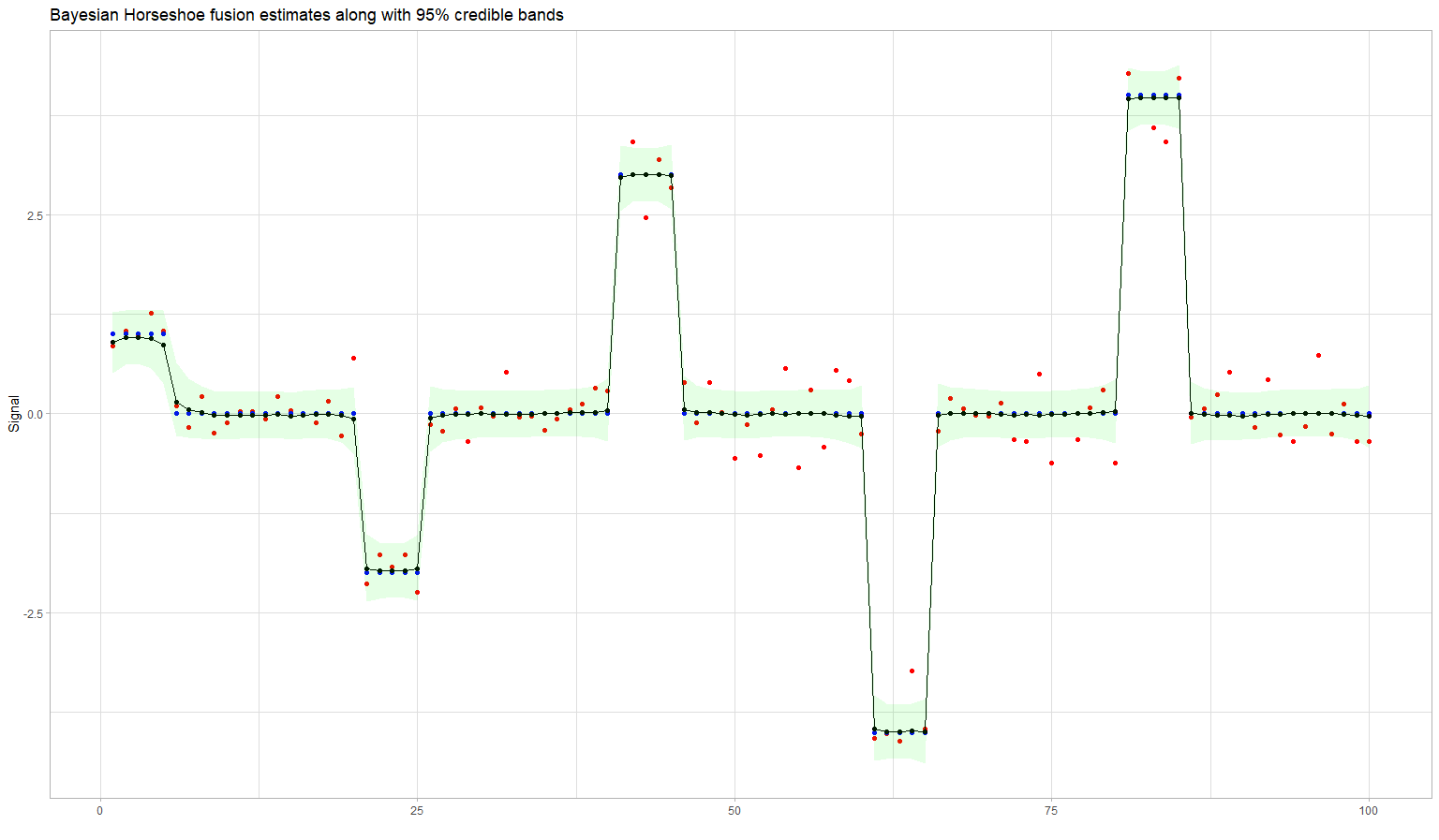} &   \includegraphics[width=40mm]{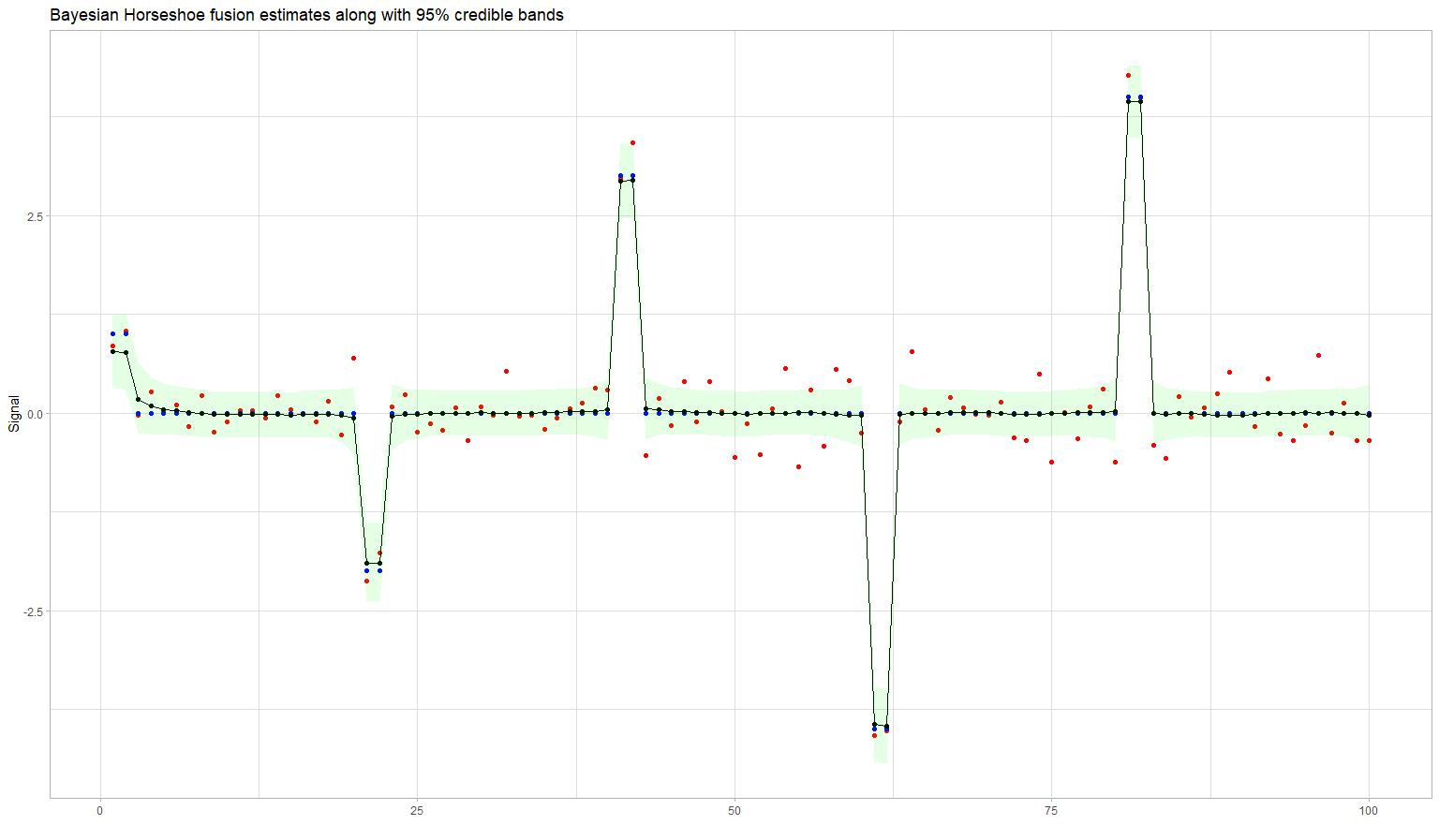} \\
		$t$ &&&\\
		&\includegraphics[width=40mm]{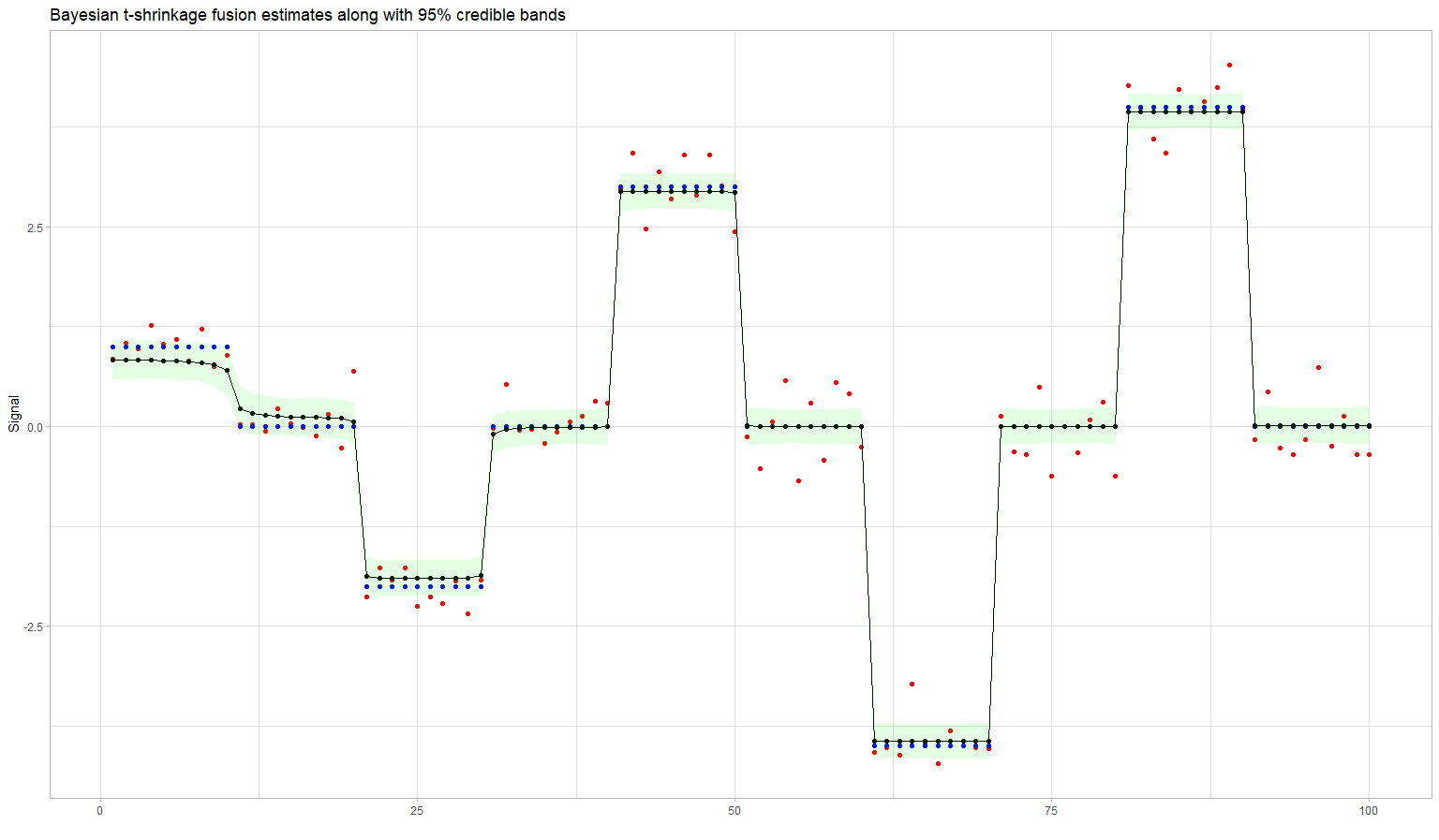} &   \includegraphics[width=40mm]{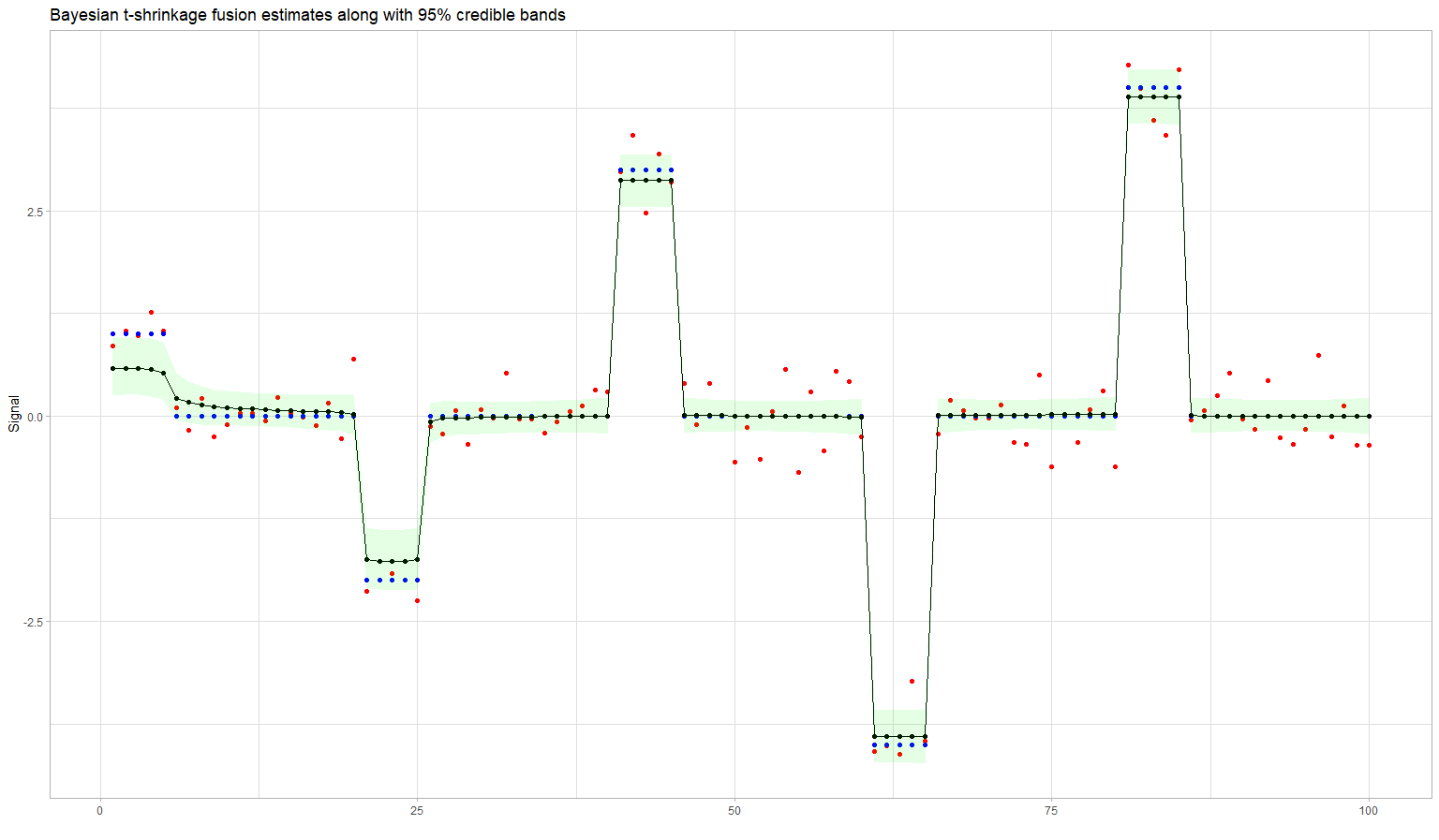} &   \includegraphics[width=40mm]{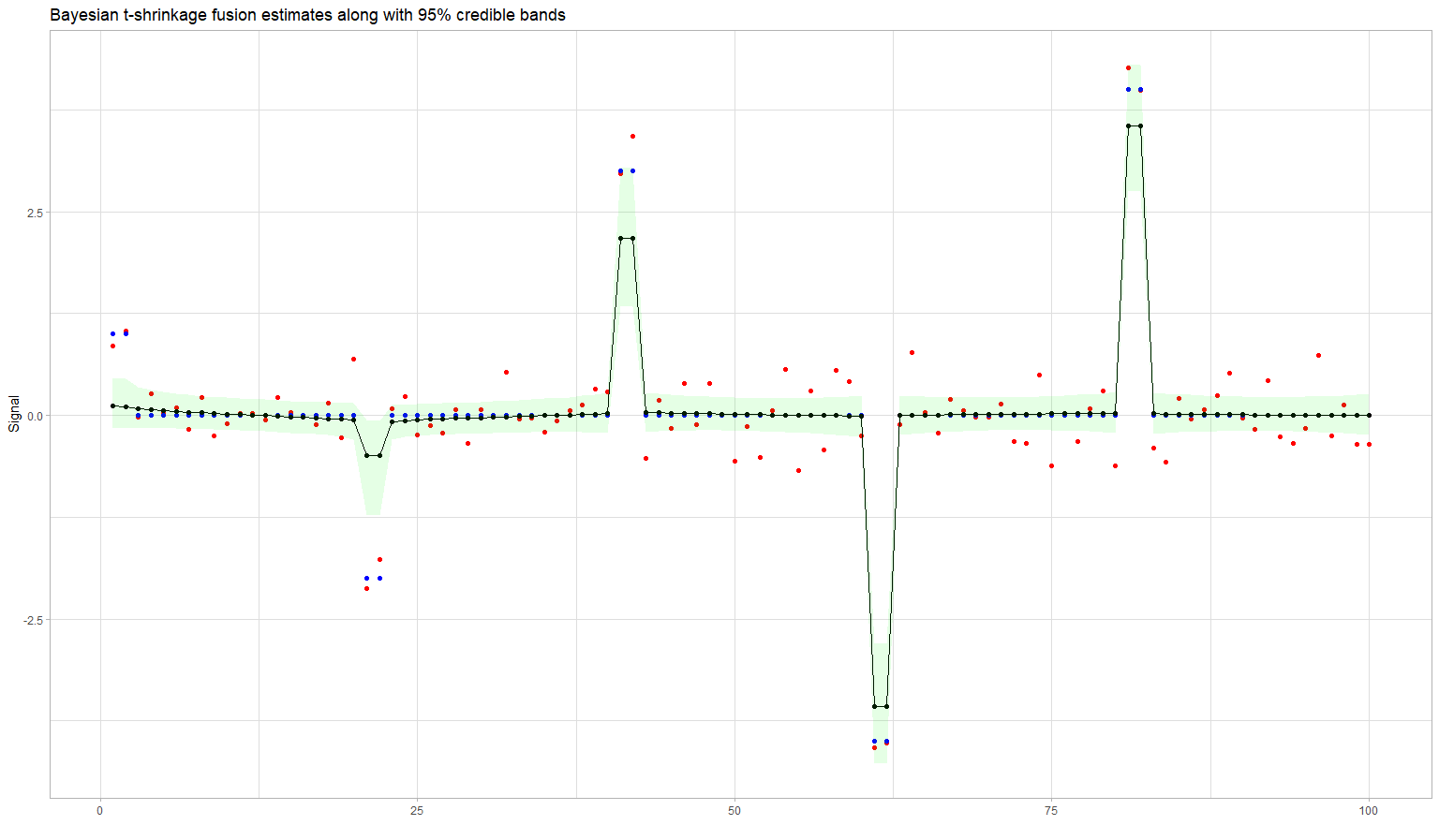} \\
		Laplace &&&\\
		&	\includegraphics[width=40mm]{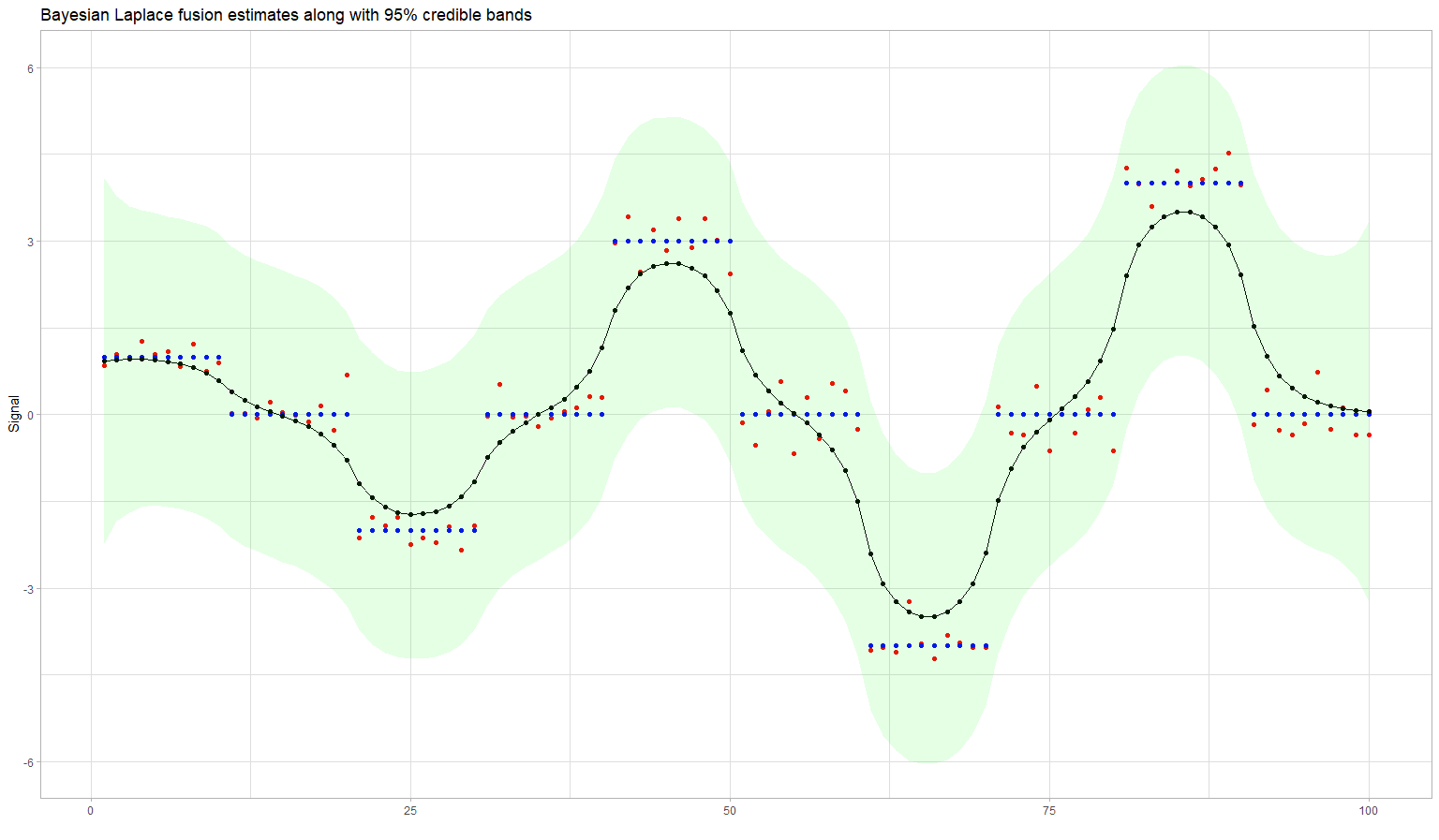} &   \includegraphics[width=40mm]{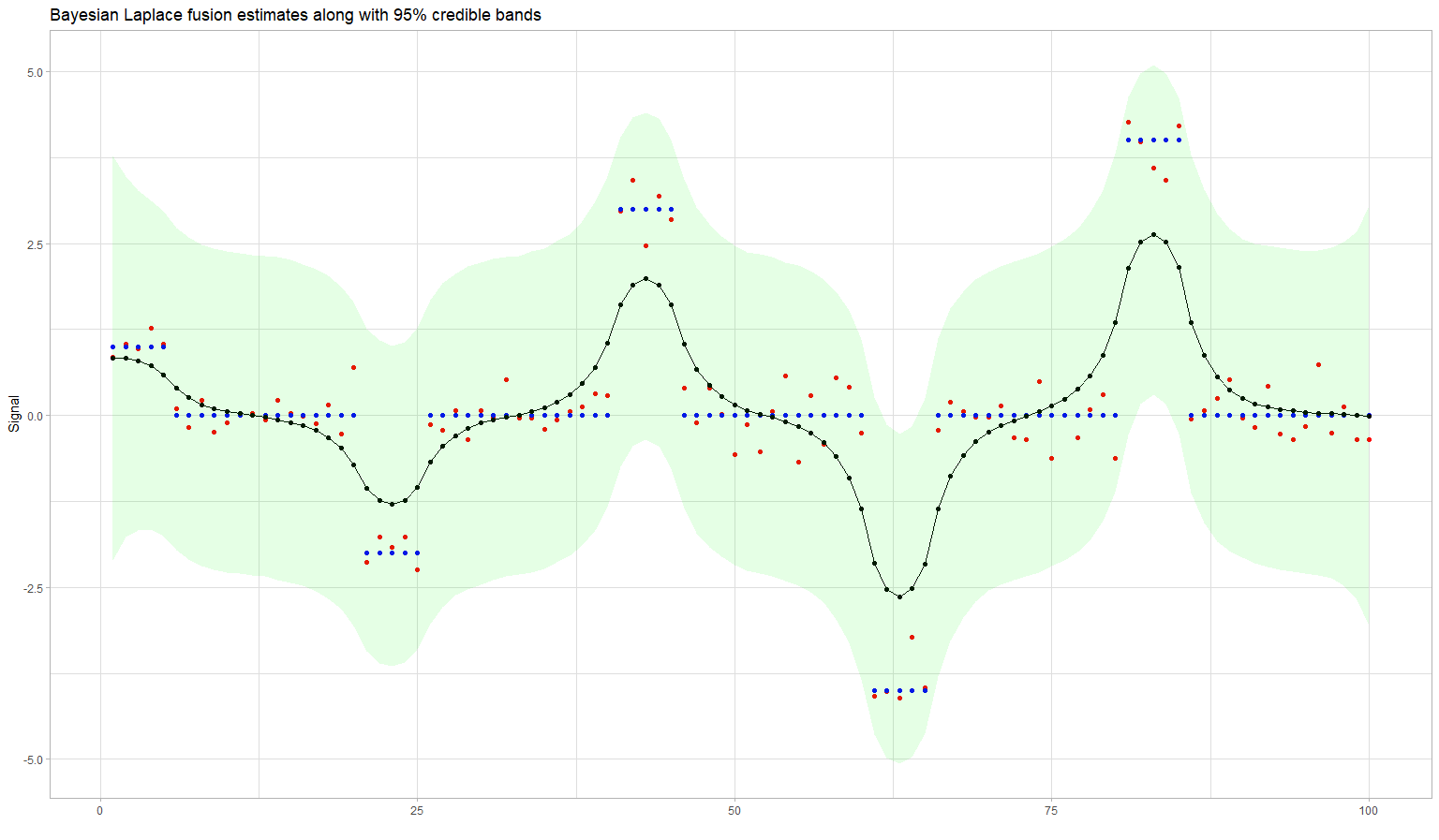} &   \includegraphics[width=40mm]{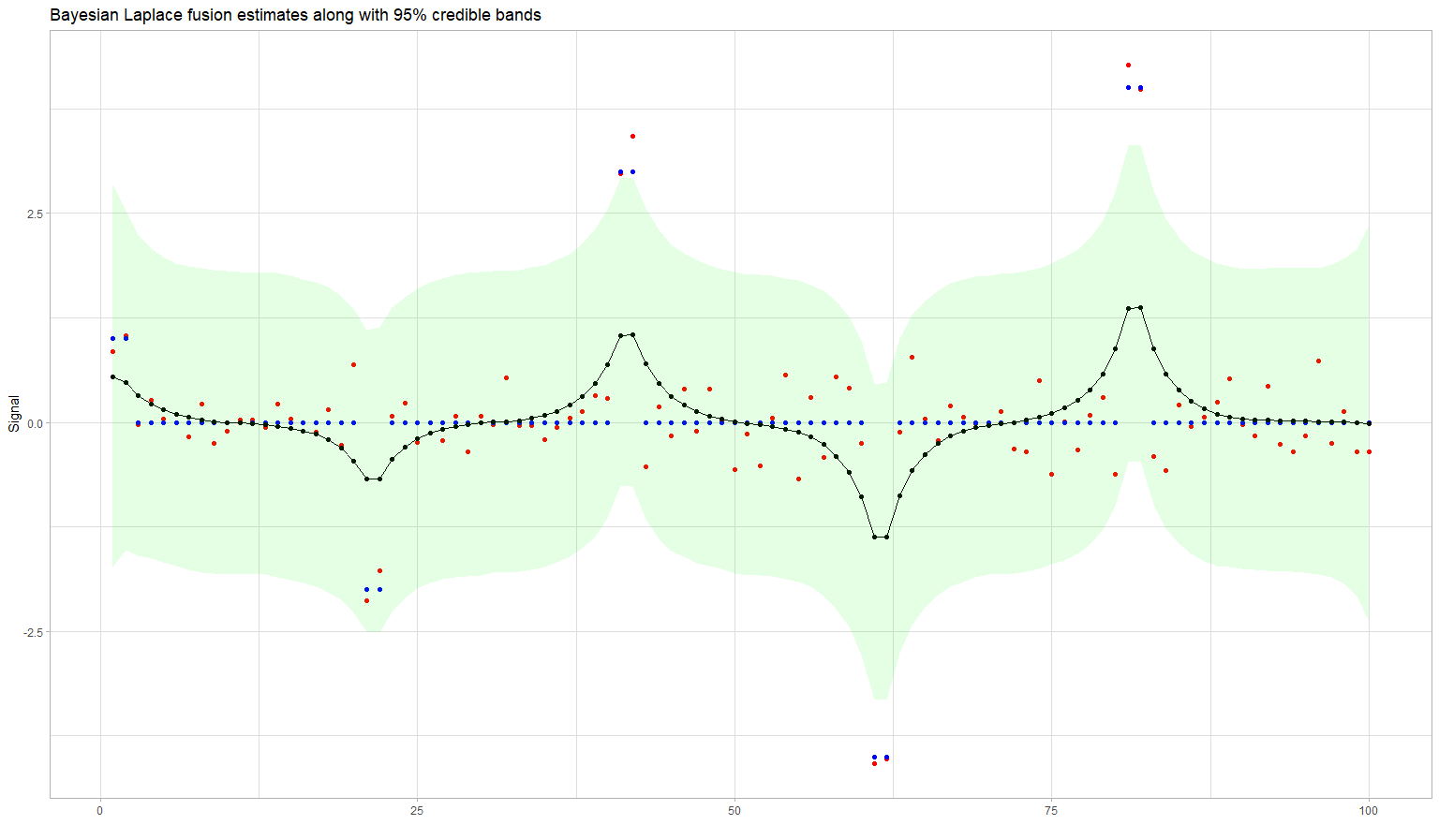} \\
		Fused  &&&\\
		& \includegraphics[width=40mm]{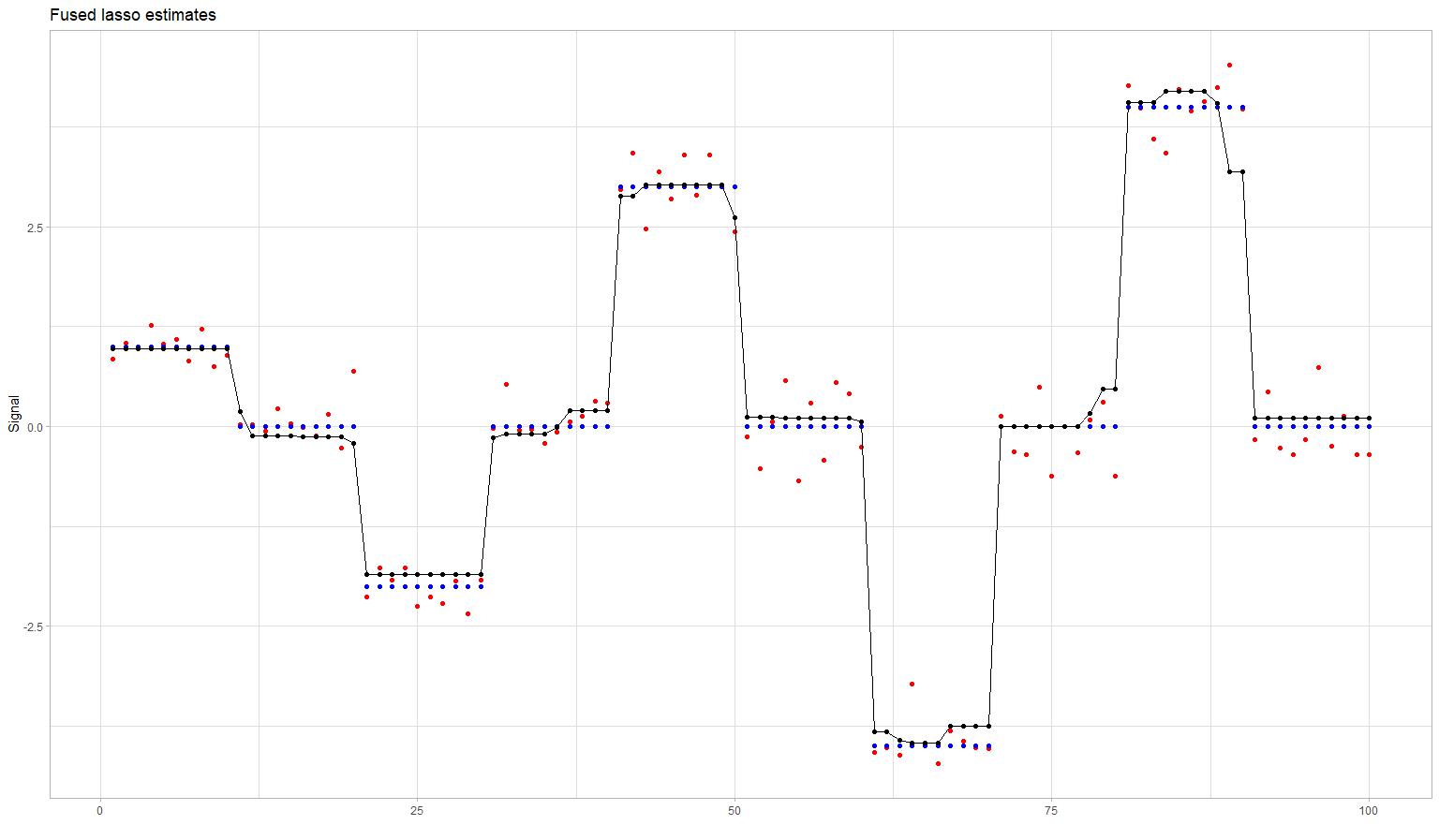} &   \includegraphics[width=40mm]{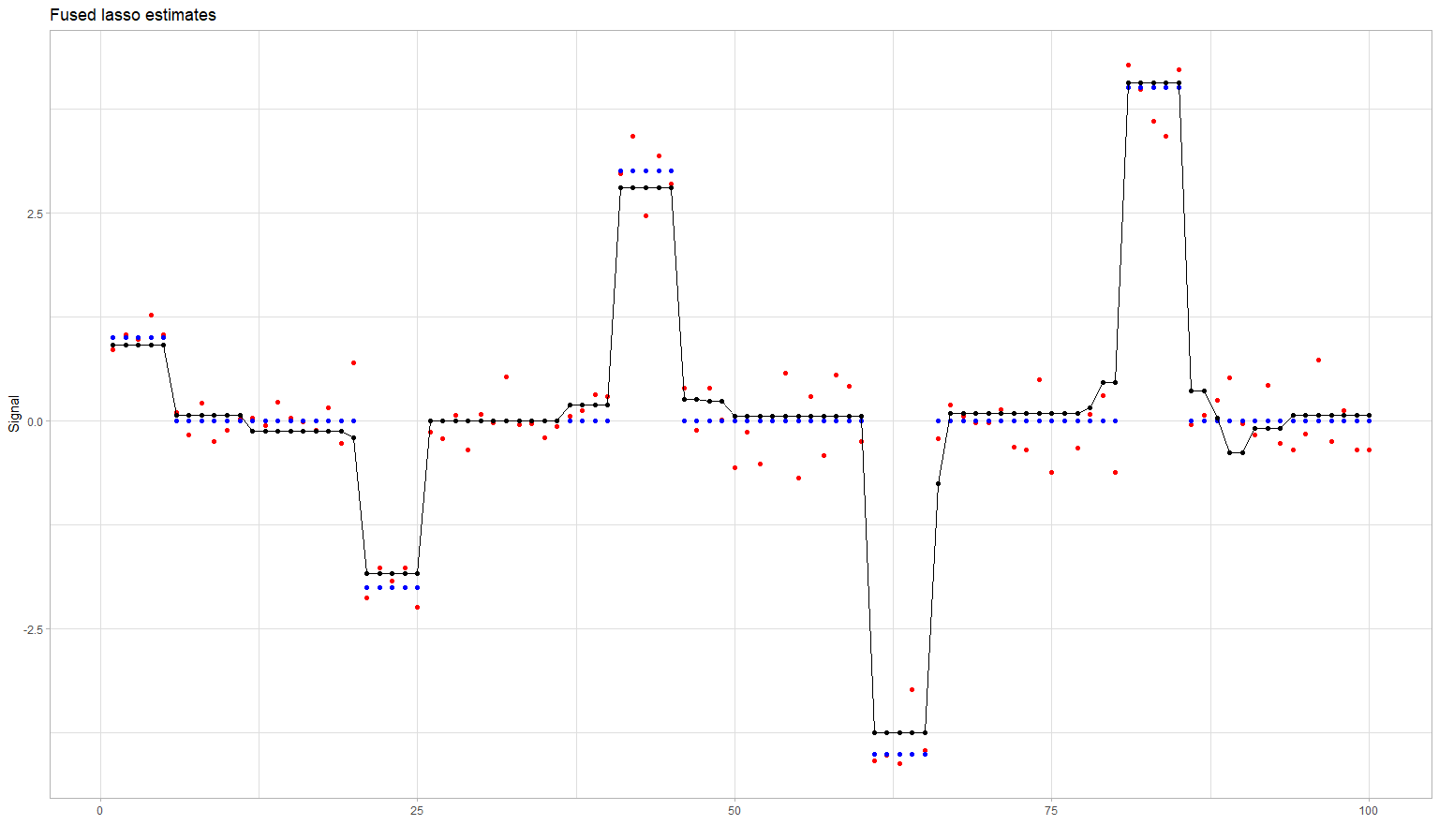} &   \includegraphics[width=40mm]{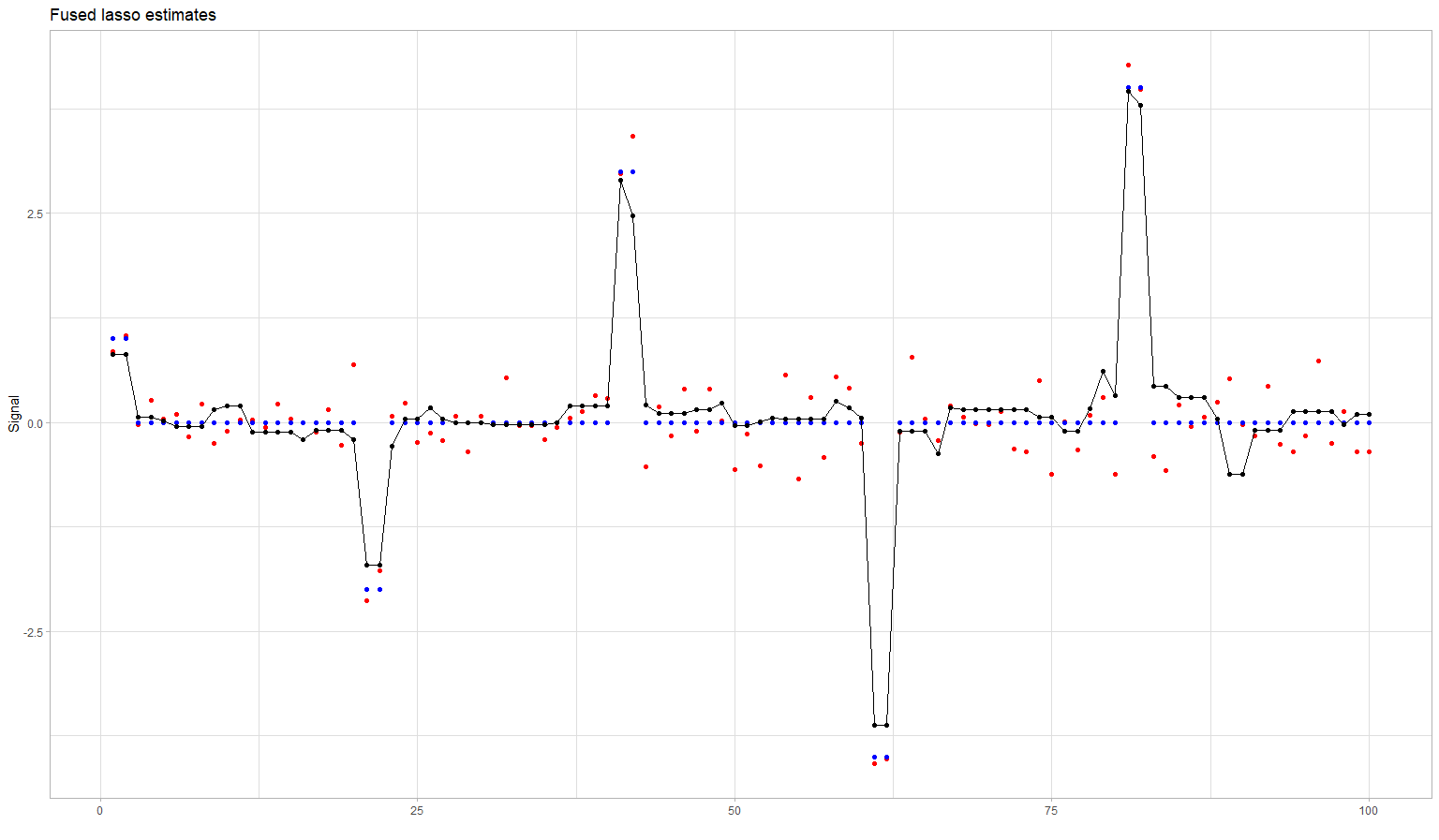} \\	
	\end{tabular}
	\caption{Fusion estimation performance with differently spaced signals and error sd $\sigma = 0.3.$ Observations are represented in red dots, true signals in blue dots, point estimates in black dots, and 95\% credible bands of the Bayesian procedures in green.}
	\label{fig2}
\end{figure}

\begin{figure}
	\begin{tabular}{lccc}
		& Evenly spaced pieces &  Unevenly spaced pieces &  Very unevenly spaced pieces \\
		True&&&\\
		&\includegraphics[width=40mm]{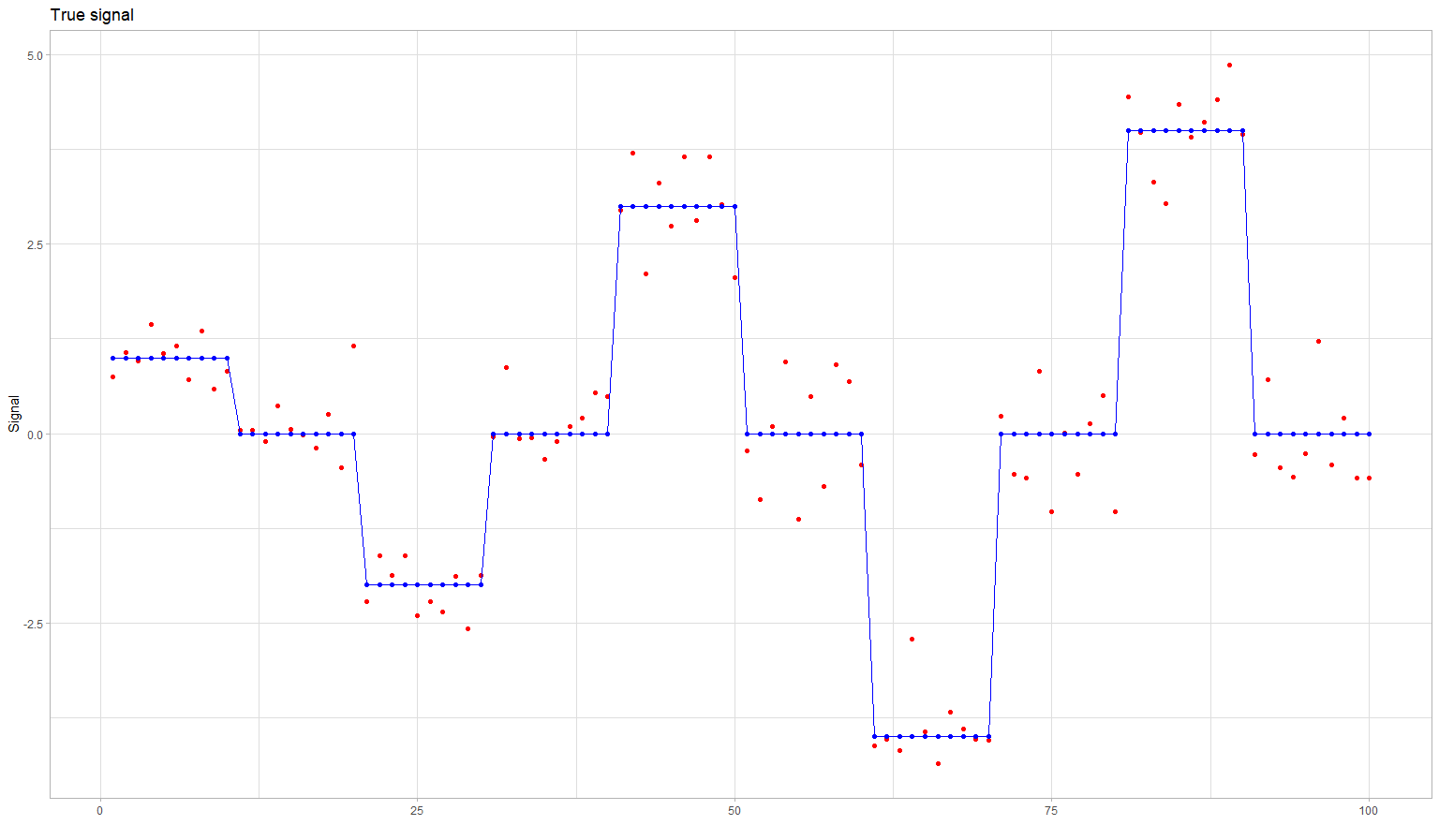} &   \includegraphics[width=40mm]{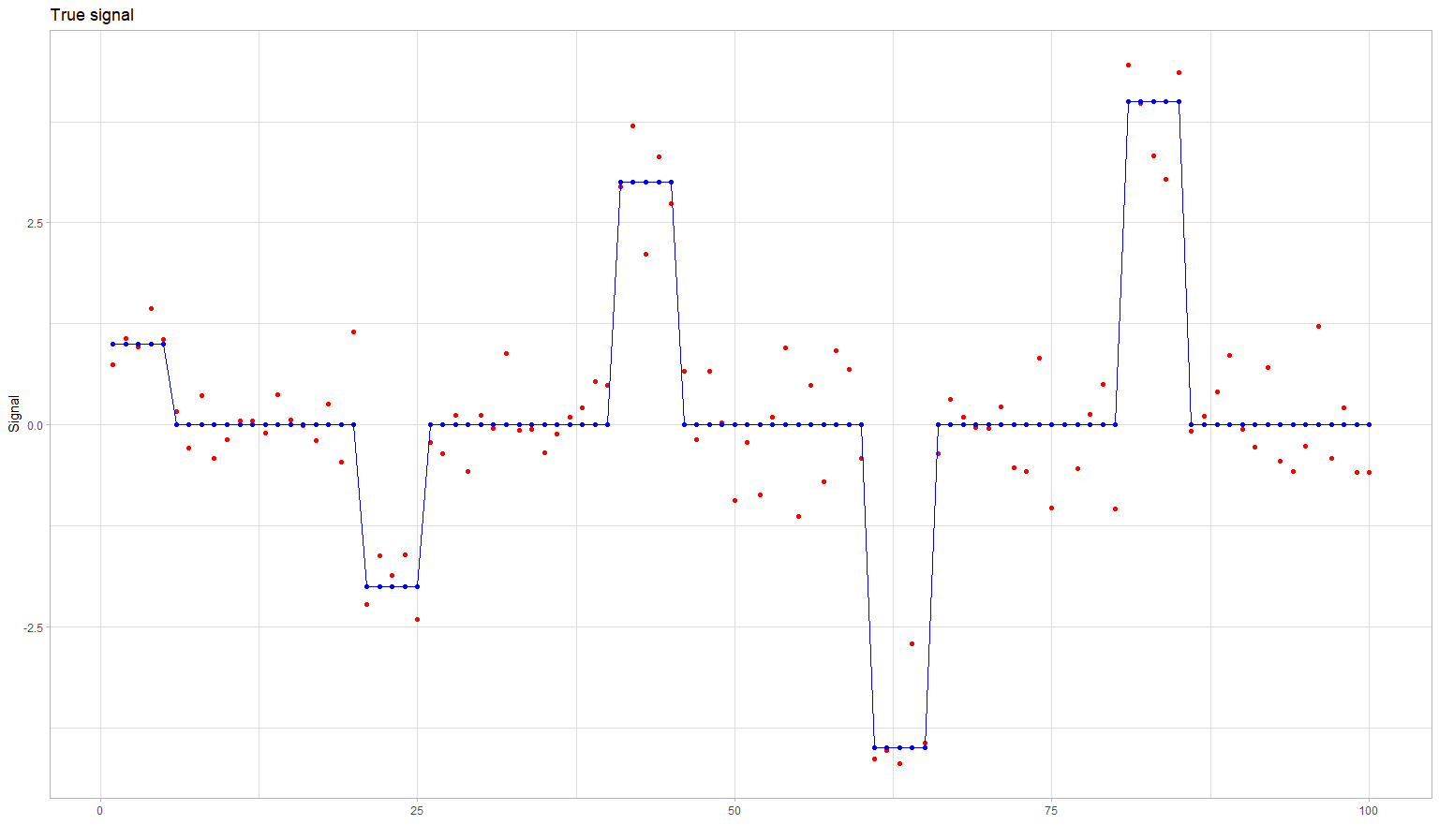} &   \includegraphics[width=40mm]{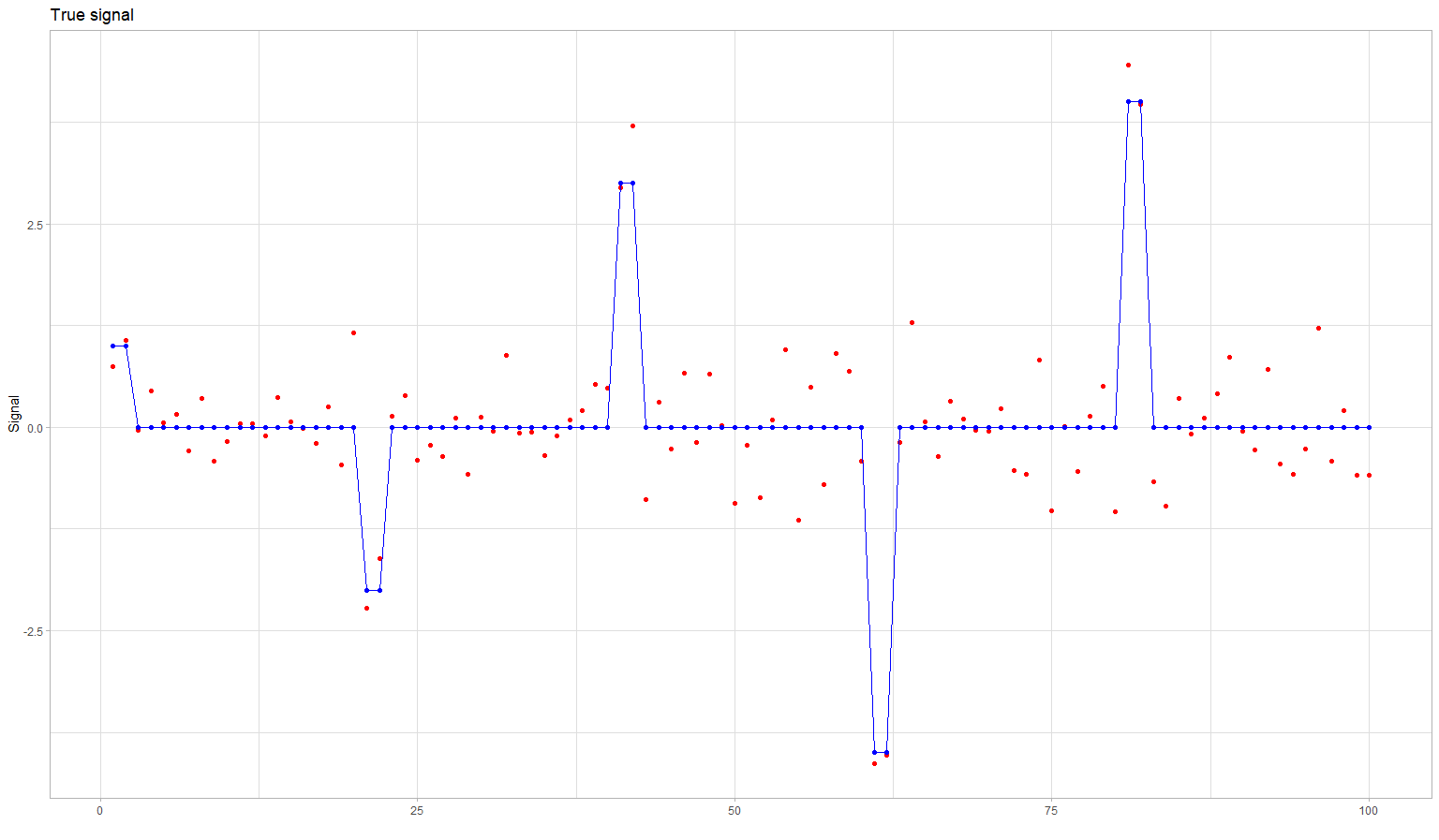} \\
		HS &&&\\
		 & \includegraphics[width=40mm]{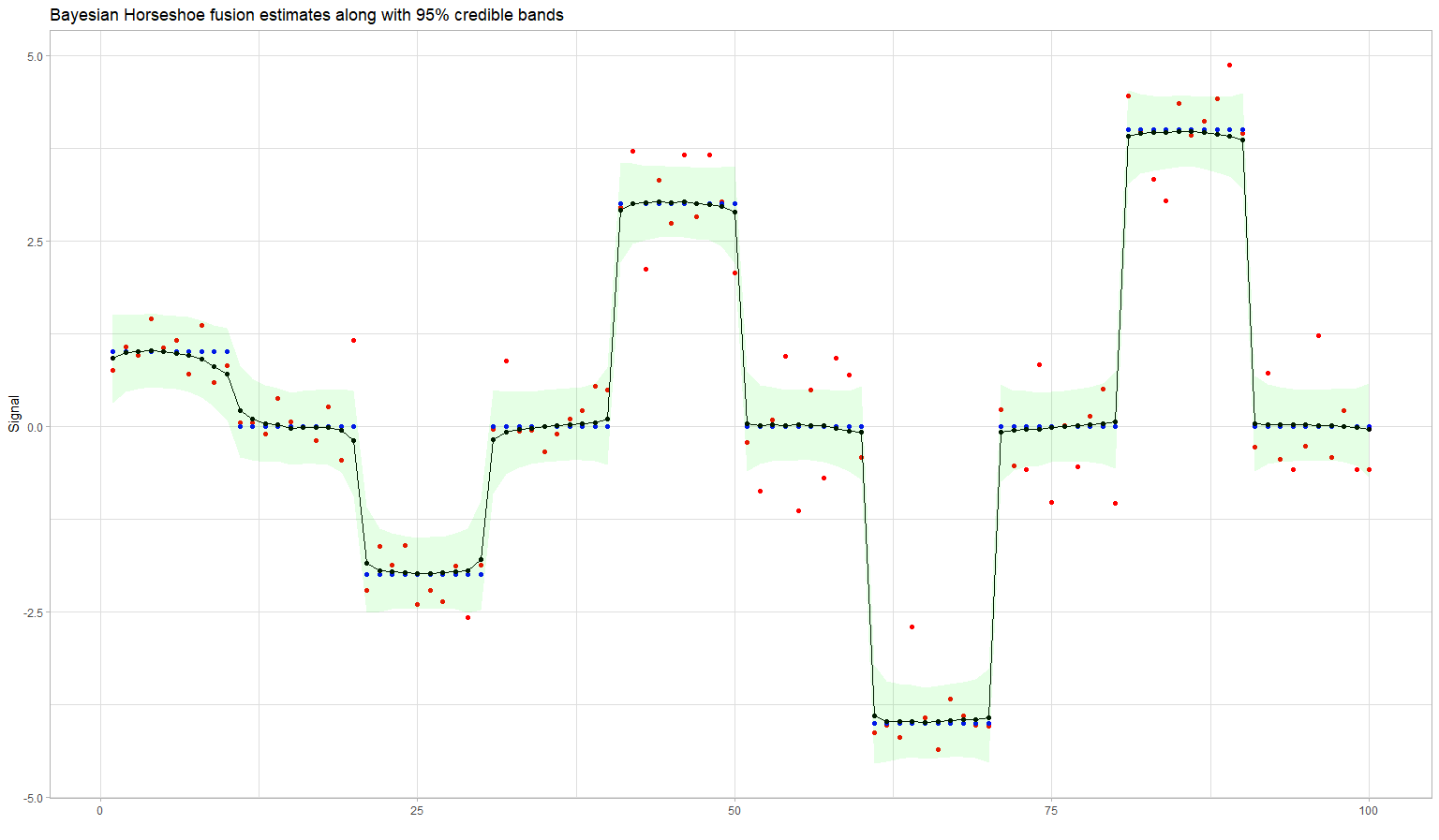} &   \includegraphics[width=40mm]{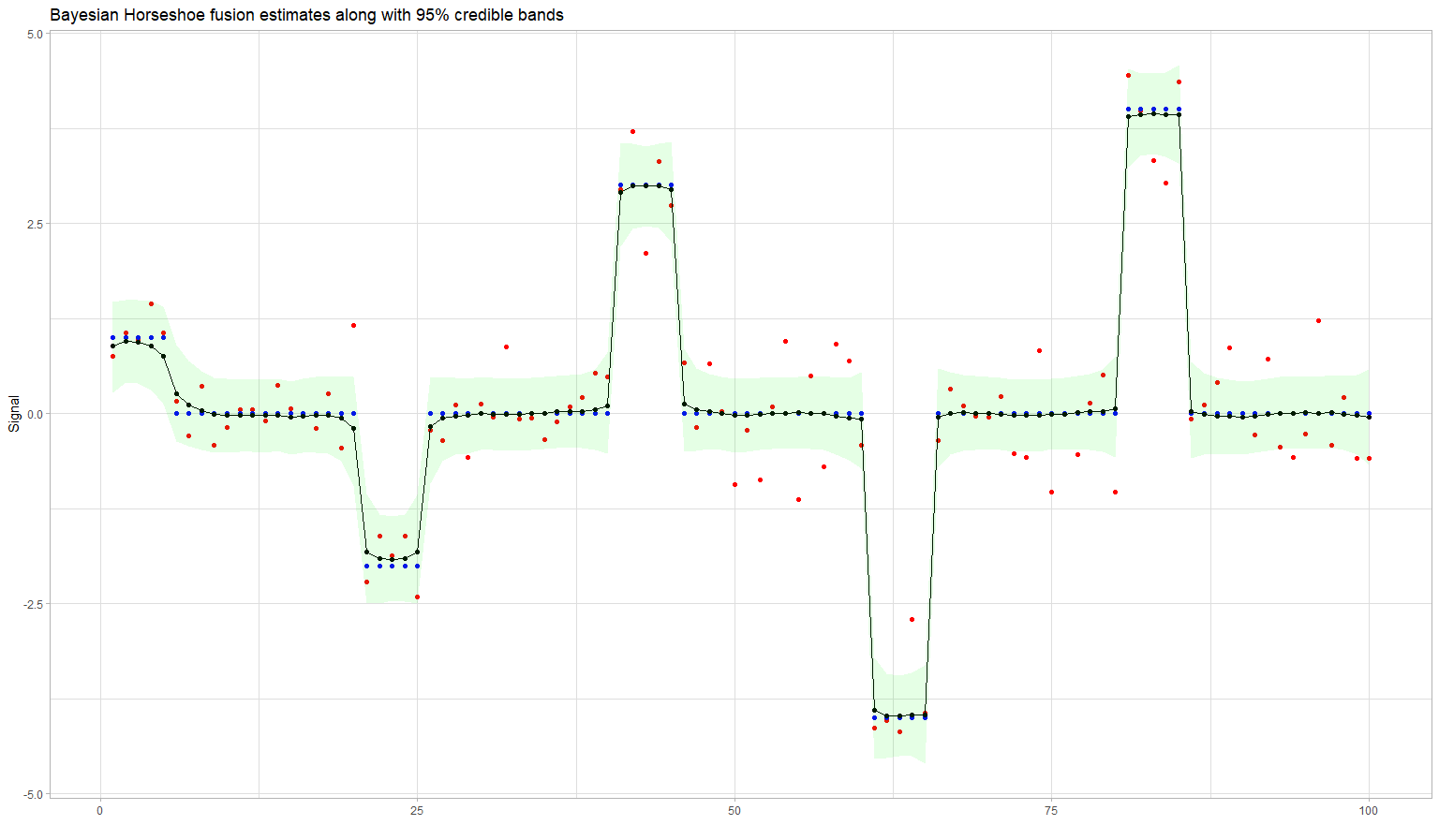} &   \includegraphics[width=40mm]{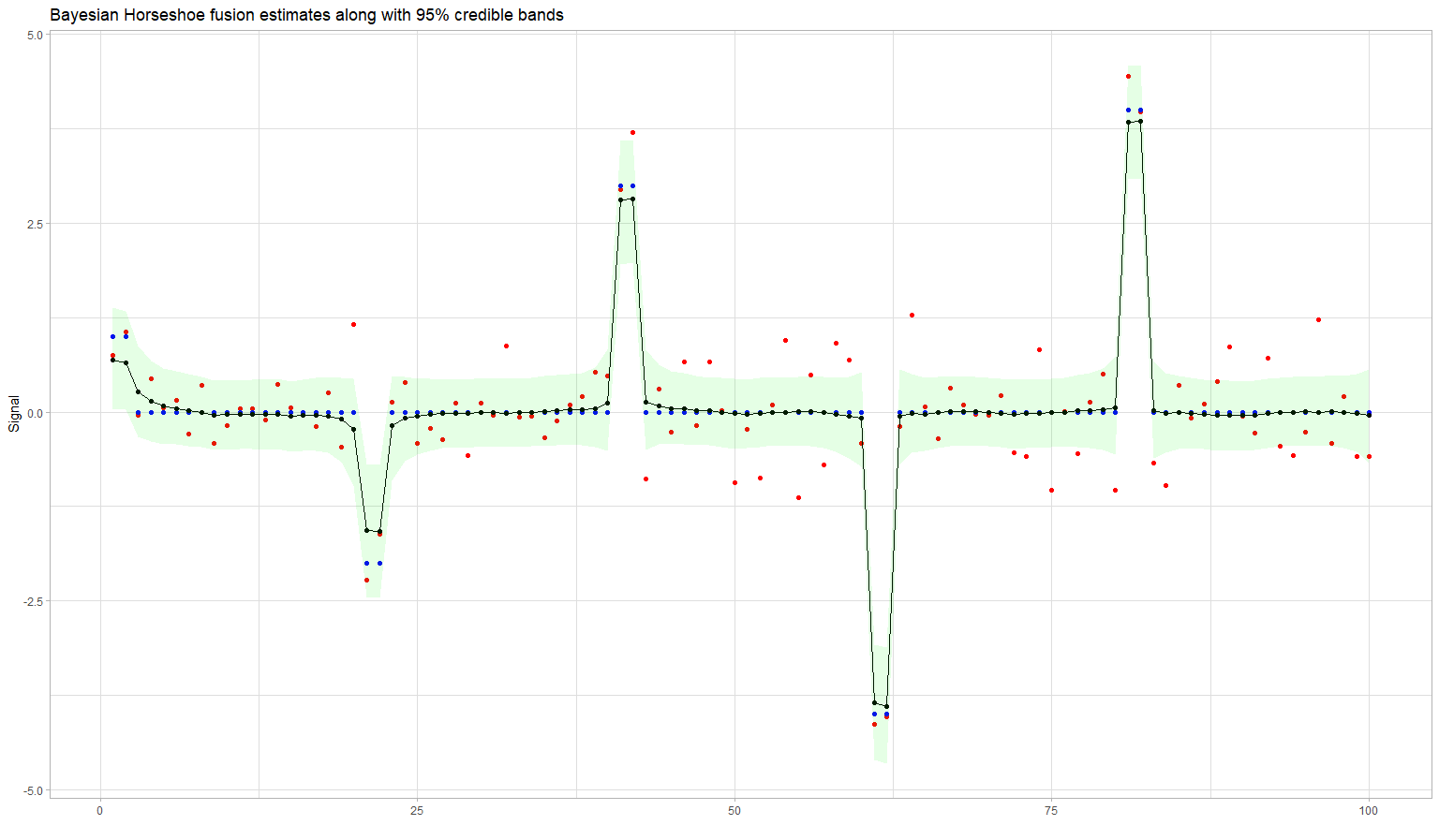} \\
		$t$ &&&\\
		 &\includegraphics[width=40mm]{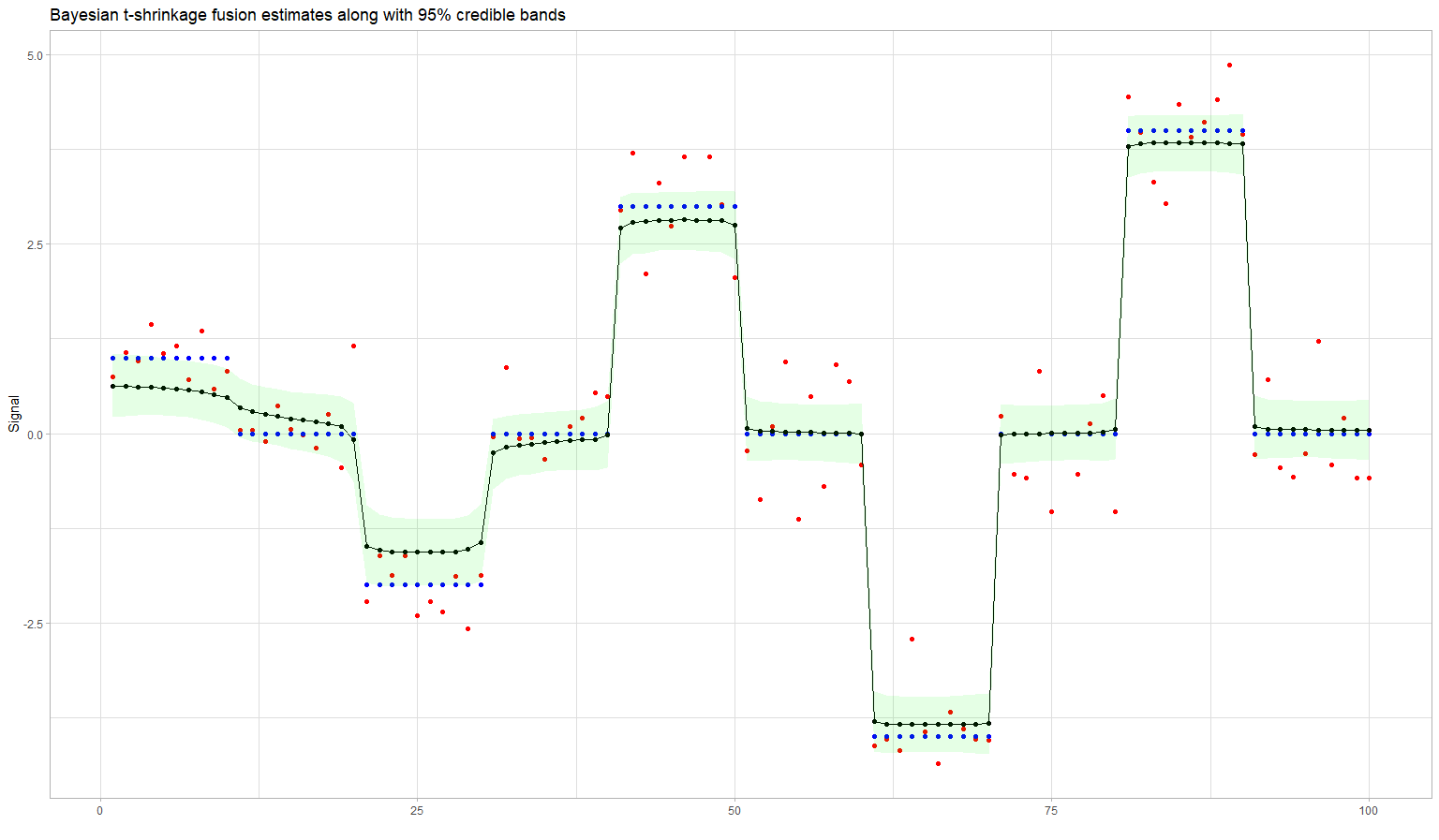} &   \includegraphics[width=40mm]{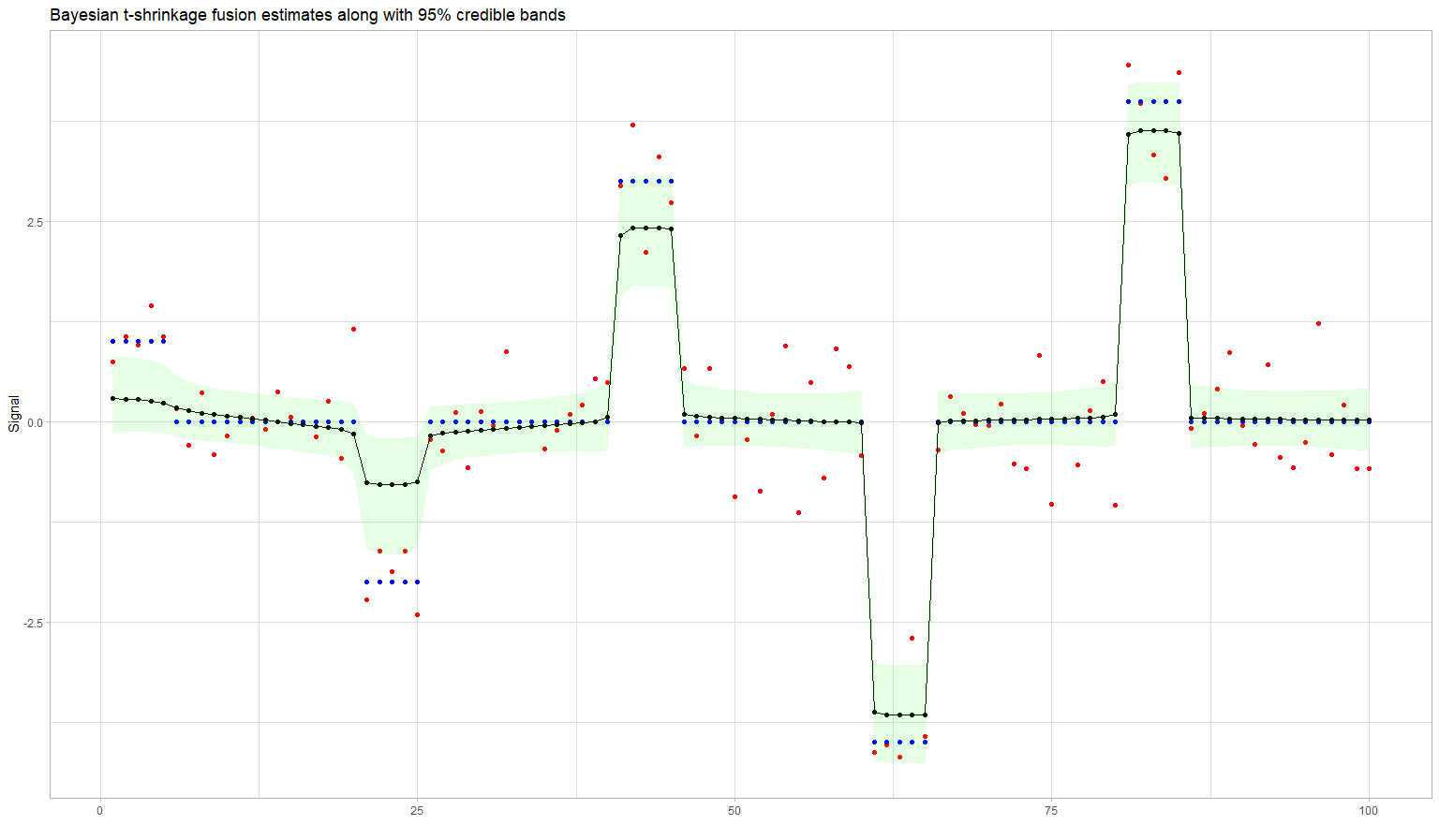} &   \includegraphics[width=40mm]{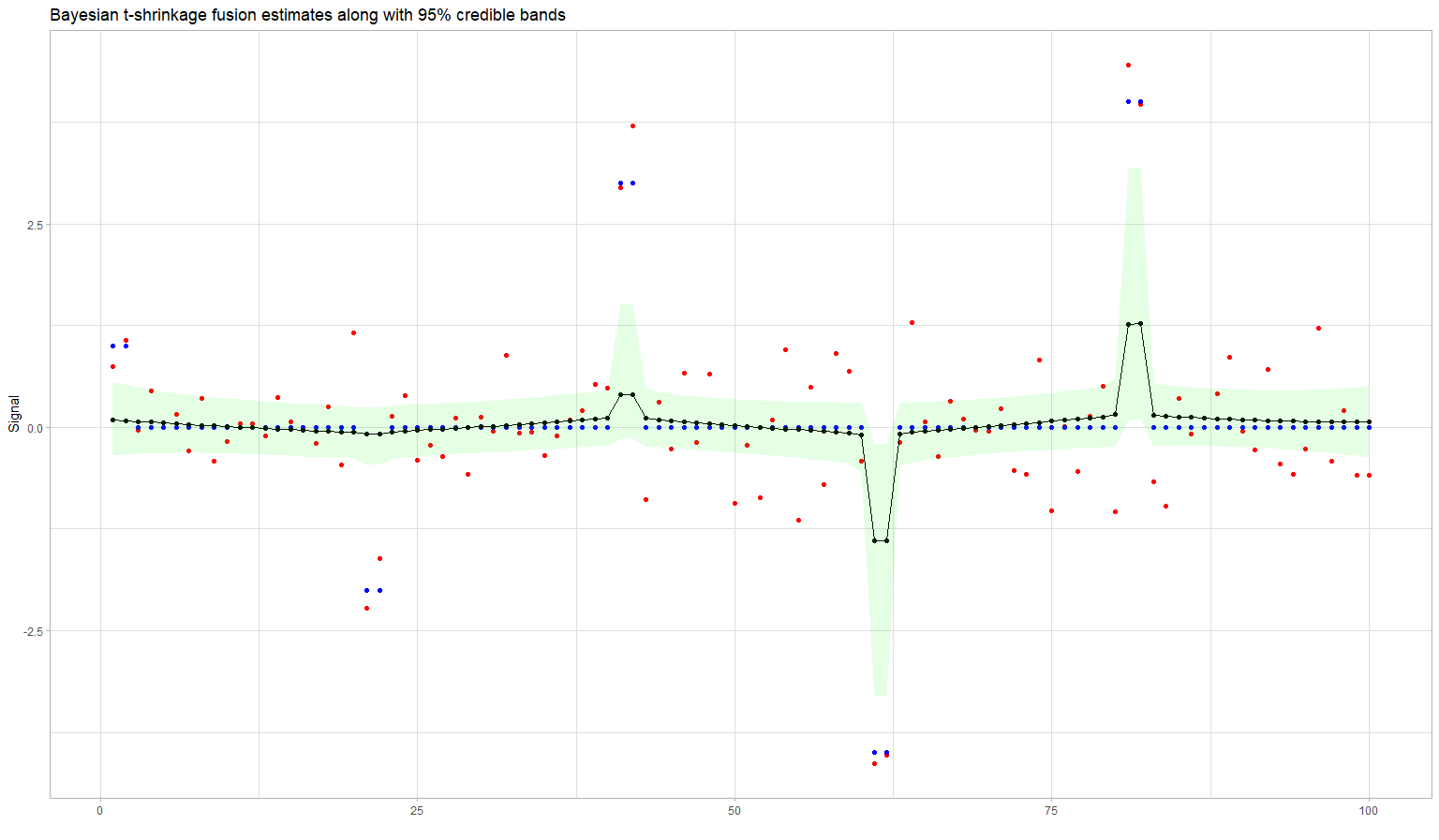} \\
		Laplace &&&\\
		 &	\includegraphics[width=40mm]{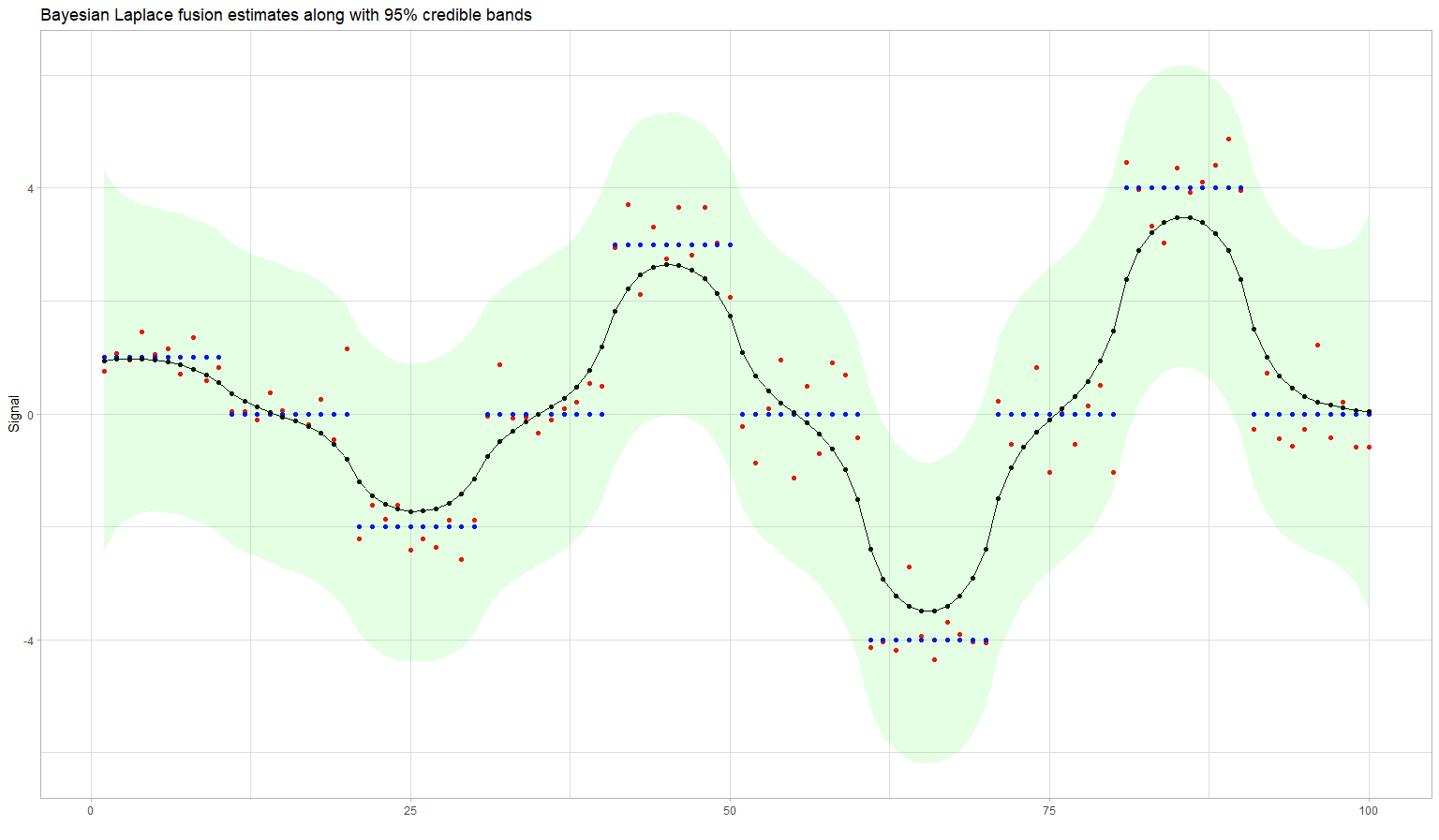} &   \includegraphics[width=40mm]{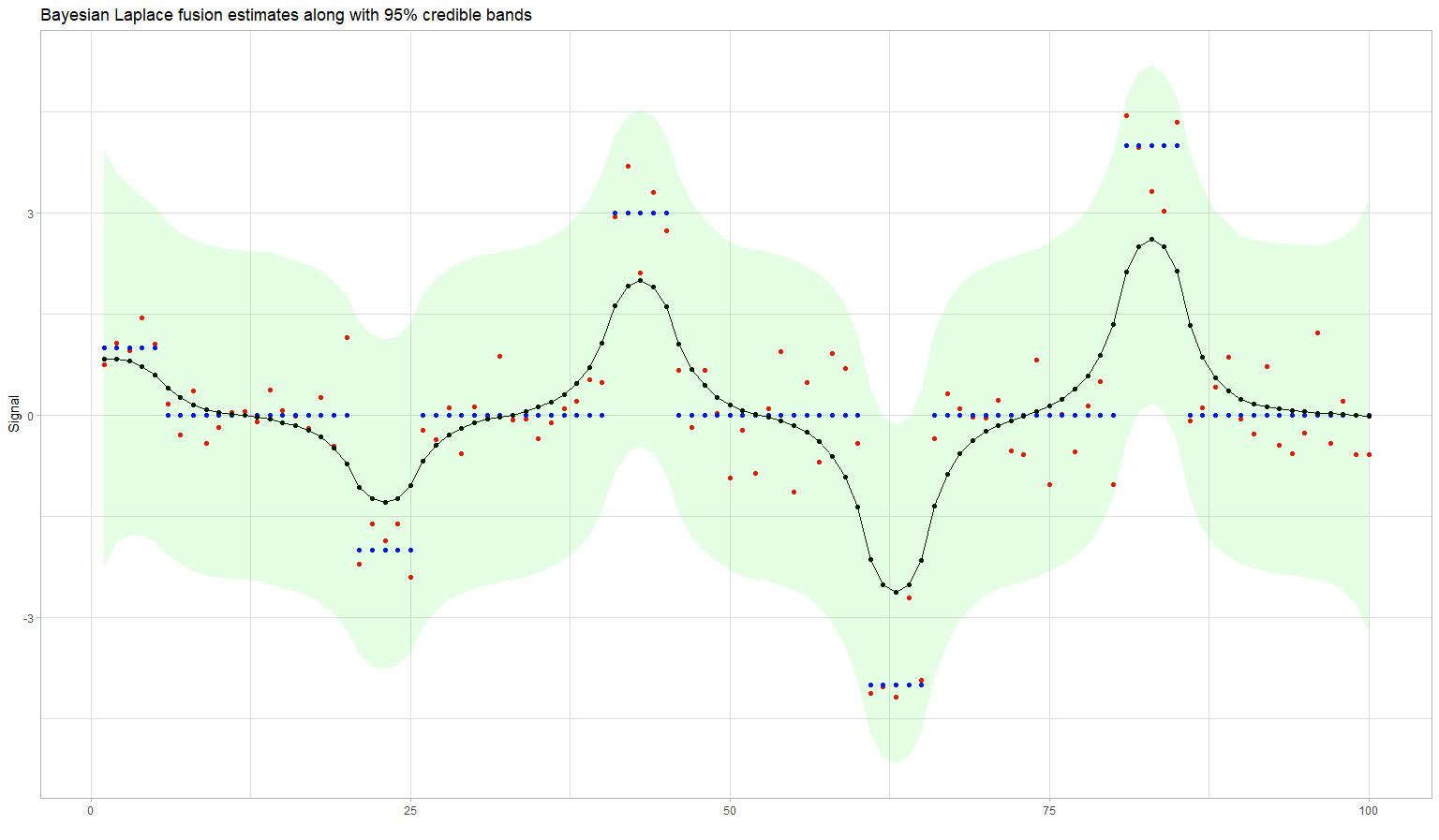} &   \includegraphics[width=40mm]{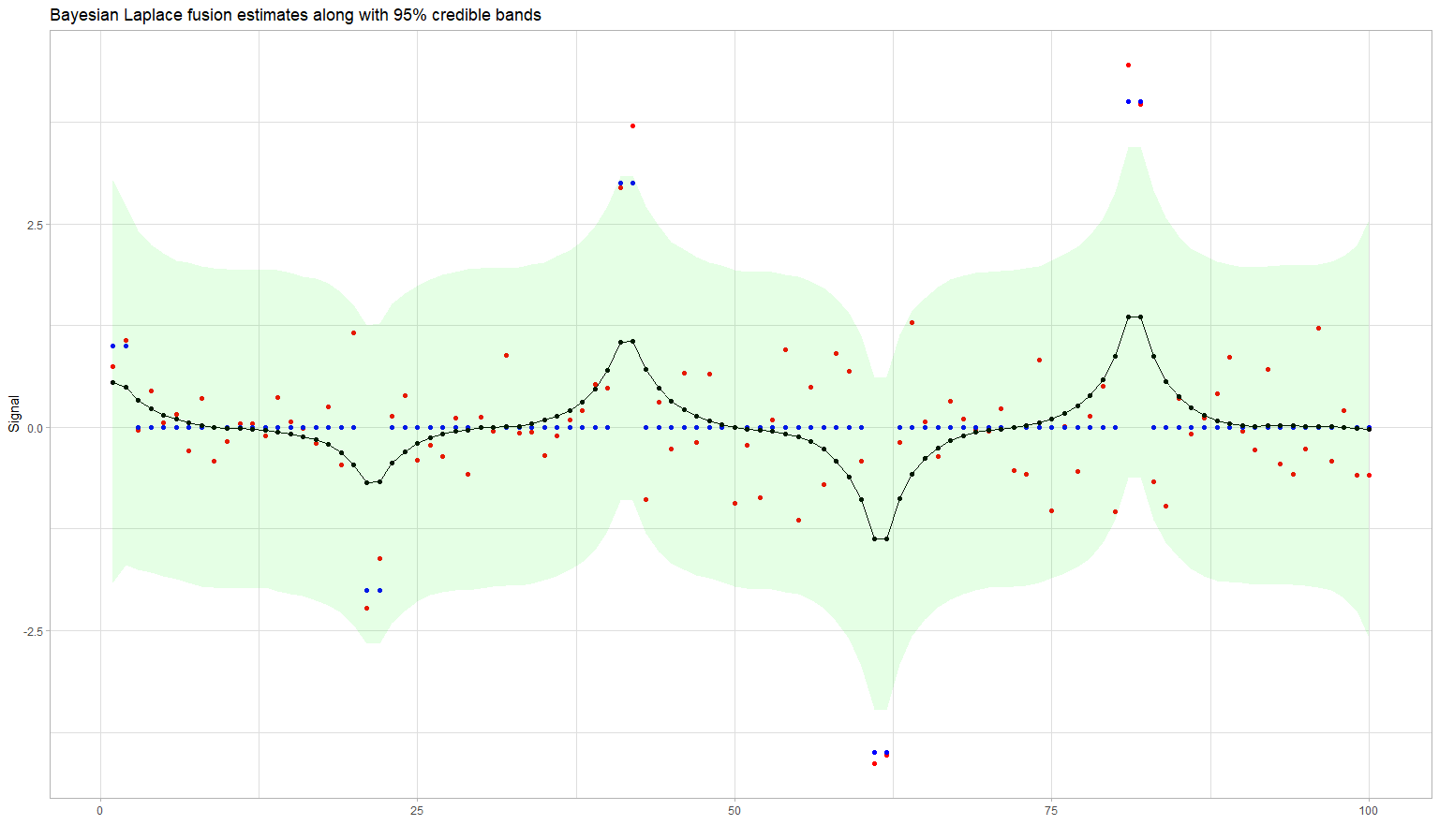} \\
		Fused &&&\\
		 & \includegraphics[width=40mm]{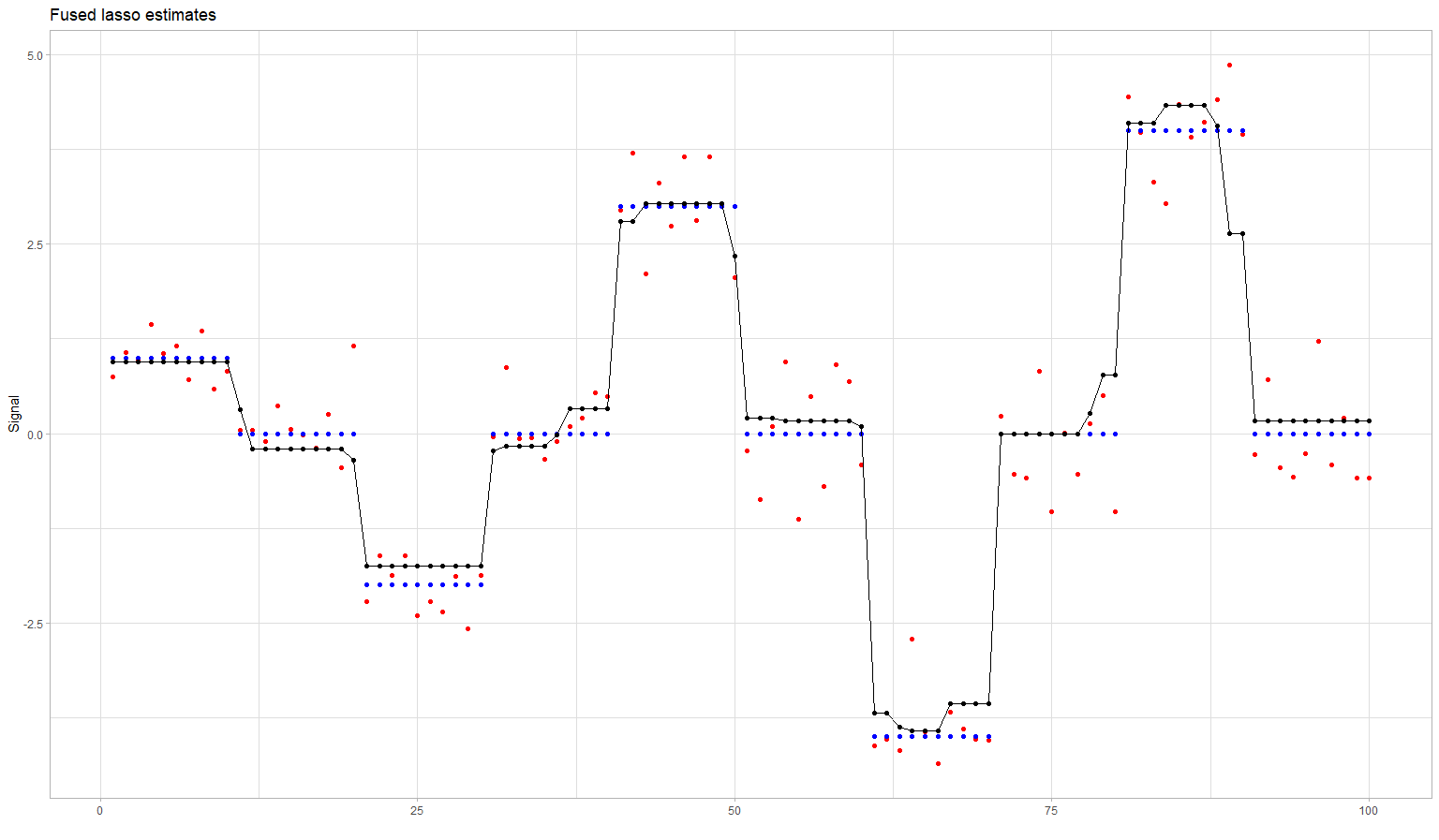} &   \includegraphics[width=40mm]{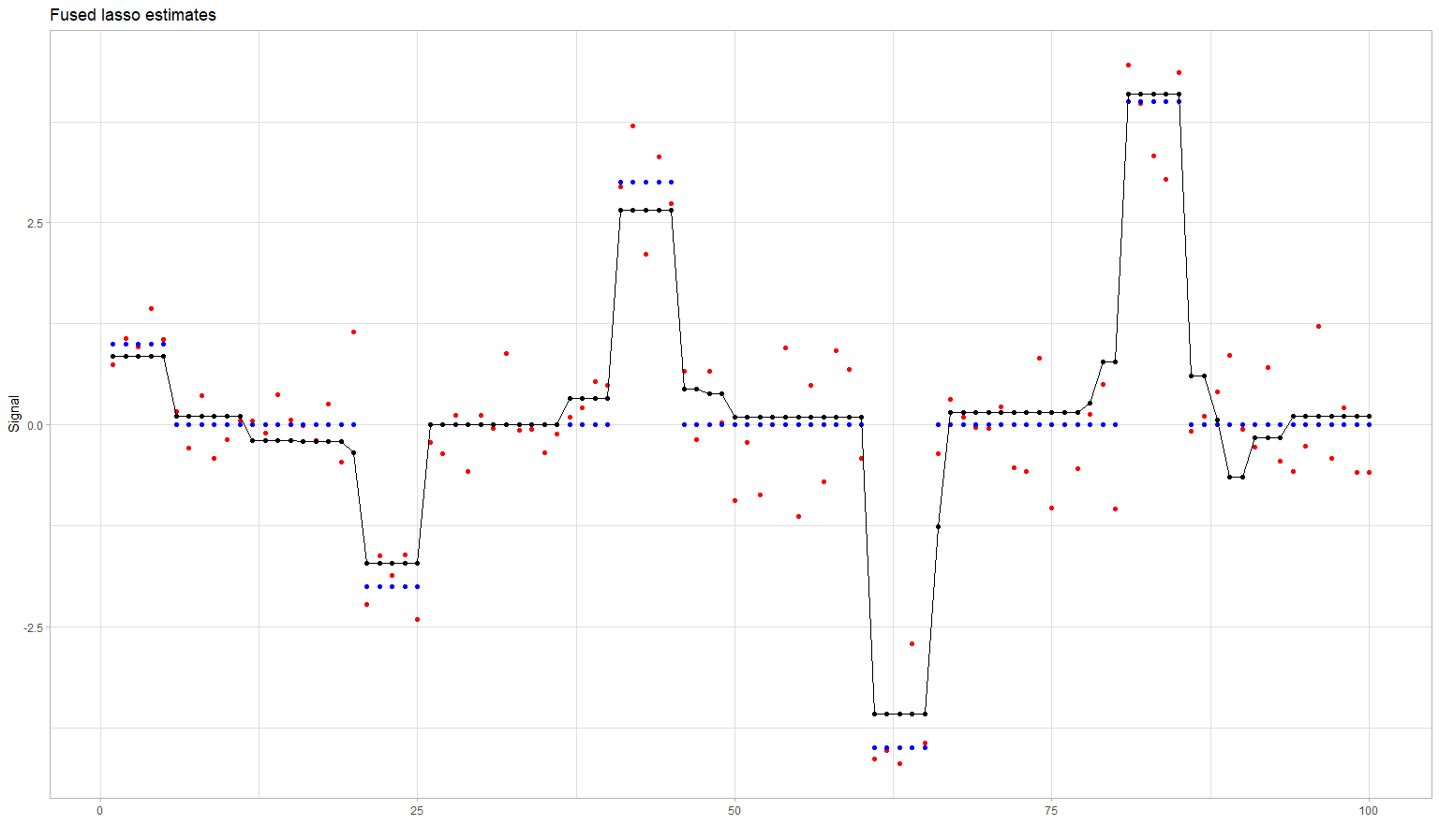} &   \includegraphics[width=40mm]{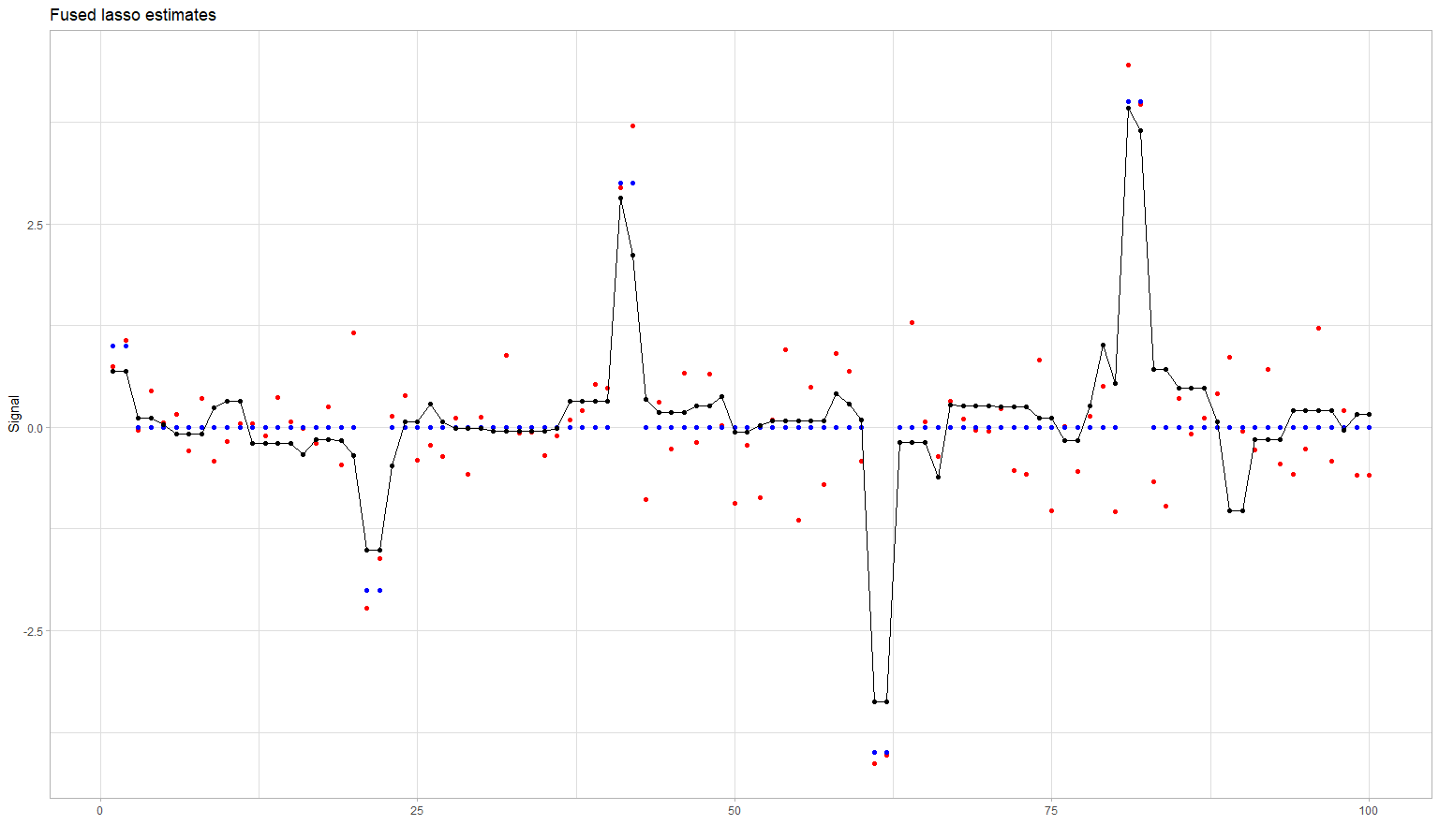} \\	
	\end{tabular}
	\caption{Fusion estimation performance with differently spaced signals and error sd $\sigma = 0.5.$ Observations are represented in red dots, true signals in blue dots, point estimates in black dots, and 95\% credible bands of the Bayesian procedures in green.}
	\label{fig3}
\end{figure}

\section{Real-data examples}
\label{sec:real-data-analysis}

We check the performance of our proposed method on two real-world data examples where the observed data can be modeled using piecewise constant functions. The first example involves DNA copy-number analysis of Glioblastoma multiforme (GBM) data, and the second one involves analysis of Solar X-Ray flux data for the period of October-November 2003 that encompasses the Halloween 2003 Solar storm events. 

\subsection{DNA copy-number analysis of Glioblastoma multiforme (GBM)}

Array Comparative Genomic Hybridization (aCGH) is a high-throughput technique used to identify chromosomal abnormalities in the genomic DNA. Identification and characterization of chromosomal aberrations are extremely important for pathogenesis of various diseases, including cancer. In cancerous cells, genes in a particular chromosome may get amplified or deleted owing to mutations, thus resulting in gains or losses in DNA copies of the genes. Array CGH data measure the $\log_2$-ratio between the DNA copy number of genes in tumor cells and those in control cells. The aCGH data thus may be considered as a piecewise constant signal with certain non-zero blocks corresponding to aberrations owing to additions/deletions at the DNA level.

Glioblastoma multiforme (GBM) is the most common malignant brain tumors in adults \citep{holland2000glioblastoma}. \cite{lai2005comparative} analyzed array CGH data in GBM using glioma data from \cite{bredel2005high} along with control samples. \cite{tibshirani2008spatial} used the data from \cite{lai2005comparative}, and created aCGH data corresponding to a pseudo-chromosome so that the resulting genome sequence have shorter regions of high amplitude gains (copy number additions) as well as a broad region of low amplitude loss (copy number deletions). To be precise, the combined array consists of aCGH data from chromosome 7 in GBM29 and chromosome 13 in GBM31. The aCGH data is presented in Figure~\ref{fig:cgh-data}, and is available from the archived version of the \texttt{cghFlasso} package in \texttt{R}. 

We apply the Horseshoe prior based Bayesian fusion estimation method and also compare the performance with the fused lasso method. We post-process the Bayesian estimates by discretization via the thresholding method as outlined in the appendix. The results are displayed in Figure~\ref{fig: CGH}. We find that our proposed Bayesian fusion estimation method has successfully detected the high amplitude regions of copy number additions, as well as the broad low amplitude region of copy number deletions. The results are very much comparable with the fused lasso estimates, and the findings are similar to that in \cite{lai2005comparative}. The Bayesian credible intervals obtained using our proposed method may further be utilized in addressing future research problems, like controlling false discovery rates.

\begin{figure}
	\centering
	\includegraphics[height = 3in, width=5in]{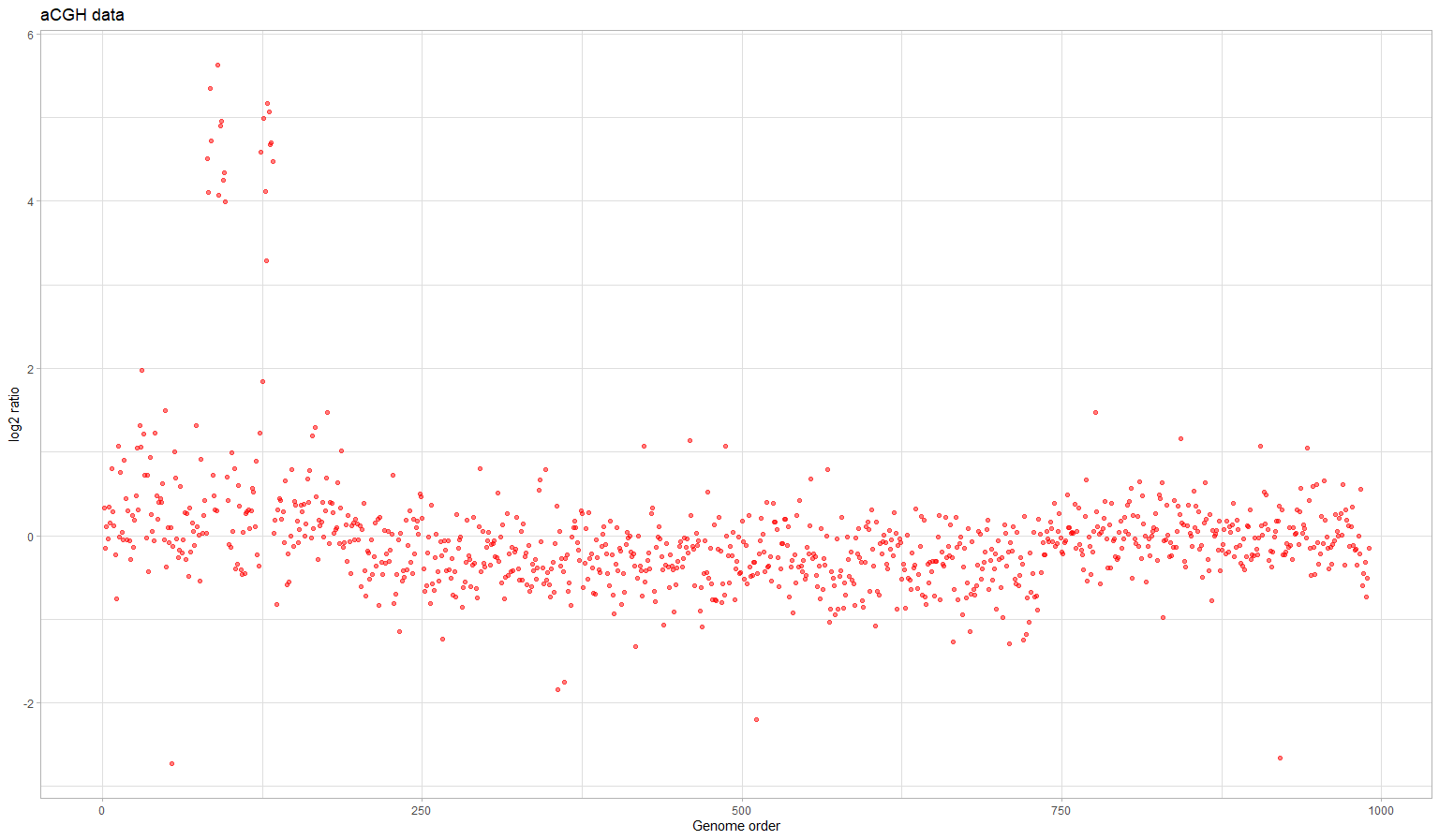}
	\caption{Figure showing array CGH data for Chromosome 7 in GBM29 and chromosome 13 in GBM31.}
	\label{fig:cgh-data}
\end{figure}

\begin{figure}
	\centering
	\begin{tabular}{c}
		\includegraphics[height=3in, width=5in]{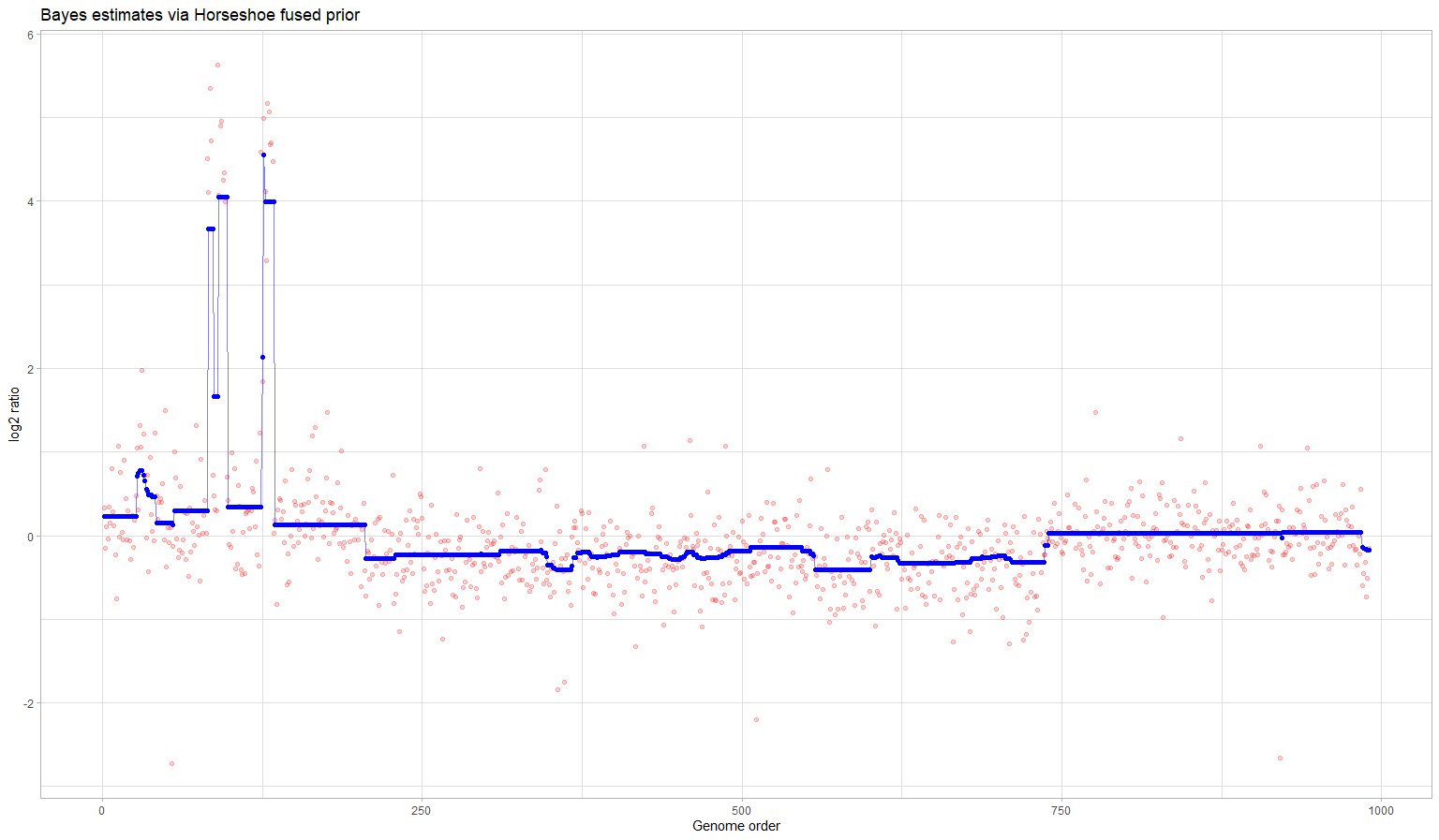}  \\
		(a) Bayesian Horseshoe fusion method  \\[1pt]
		\includegraphics[height=3in, width=5in]{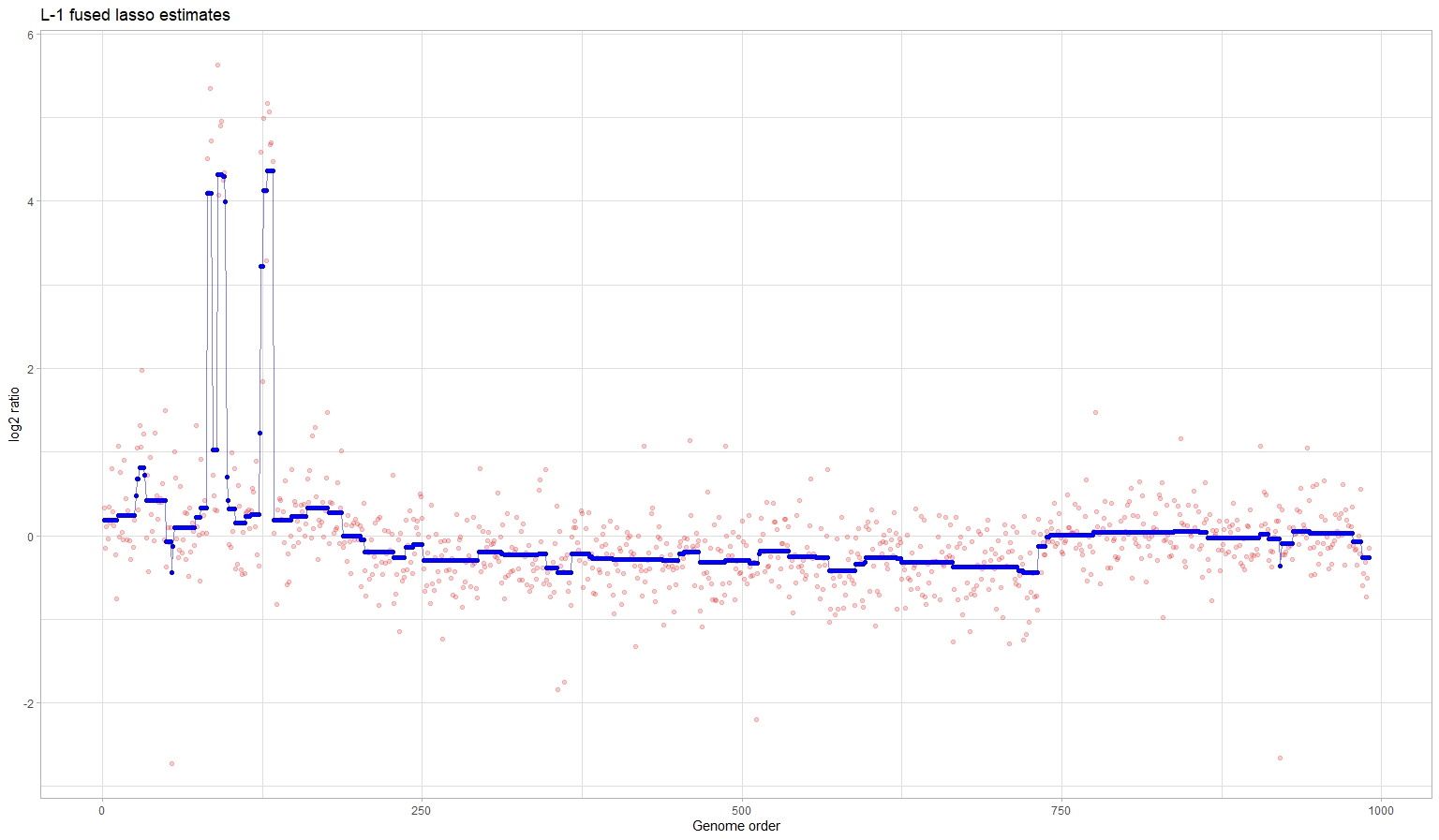} \\
		(b) $L_1$ fused lasso method \\[1pt]
	\end{tabular}
	\caption{Figure showing estimation performances of (a) the proposed Bayesian Horseshoe prior based fusion method, and (b) the $L_1$ fused lasso method. The blue lines represent the estimated signals, and the red dots represent the $\log_2$ ratios between the DNA copy numbers of genes in tumor cells and those in control cells.}
	\label{fig: CGH}
\end{figure}

\subsection{Solar X-Ray flux analysis: Halloween Solar Storms 2003}

The Earth's ionosphere is responsible for blocking high frequency radio waves on the side facing the Sun. Large solar flares affect the ionosphere, thus interrupting satellite communications, and posing threats of radiation hazards to astronauts and spacecrafts. These solar flares are also associated with Coronal Mass Ejections (CMEs) that have the potential to trigger geomagnetic storms, which in turn, have the ability to disrupt satellite and radio communications, cause massive black-outs via electric grid failures, and affect global navigation systems. The Geostationary Operational Environmental Satellite (GOES), operated by the United States' National Oceanic and Atmospheric Administration (NOAA), records data on solar flares. Measured in Watts-per-square-meter ($W/m^2$) unit, solar X-Ray flux can be classified in five categories, namely, A, B, C, M, and X, depending on the approximate peak flux with wavelengths in the range of $1-8$ Angstroms (long wave) and $0.5-4$ Angstroms (short wave). Solar flares in category A have the least peak flux ($< 10^{-7}\,W/m^2$), and those in caregory X are the most powerful ones ($>10^{-4}\,W/m^2$). 

Around Halloween in 2003, a series of strong solar storms affected our planet, leading to disruptions in satellite communication, damages to spacecrafts, and massive power outages in Sweden. The increased levels of solar activity made it possible to observe the Northern lights (Aurora Borealis) even from far south latitudes in Texas, USA. We consider the short wave solar X-ray flux data for the two months of October and November 2003. The data have been accessed from NOAA's website (\url{https://www.ngdc.noaa.gov/stp/satellite/goes/dataaccess.html}). We use the 5-minute average flux data for all the days in the two months, and extract the information for the flux level for shortwave X-rays.

The data contain some missing values, which we impute via a linear interpolation method. Also, we further use 1-hour averages of the X-ray flux data, so as to arrive at 24 data points for each day, thus leading to a total of 1464 data points. Figure \ref{fig:solar-data} shows the plot of the logarithm of the flux levels for the two months. As mentioned in \cite{little2011generalized}, we can approximate this data as a piecewise constant signal, and use fusion-based estimation methods to identify days of extreme solar activities. We apply our Horseshoe prior based Bayesian fusion estimation method, and post-process the estimates by using the thresholding approach as mentioned in the appendix. We also apply the $L_1$-fusion estimation method. The fitted estimates are shown in Figure \ref{fig: solar}. We find that our Horseshoe prior based Bayesian approach has been able to detect the periods of increased solar activity around October 28 and also on November 04, 2003, besides giving piecewise constant estimates. Our findings match with related analyses; see \cite{pulkkinen2005geomagnetic}. On the other hand, the fused lasso method suffers from serious over-fitting problems.

\begin{figure}
	\centering
	\includegraphics[height = 3in, width=5in]{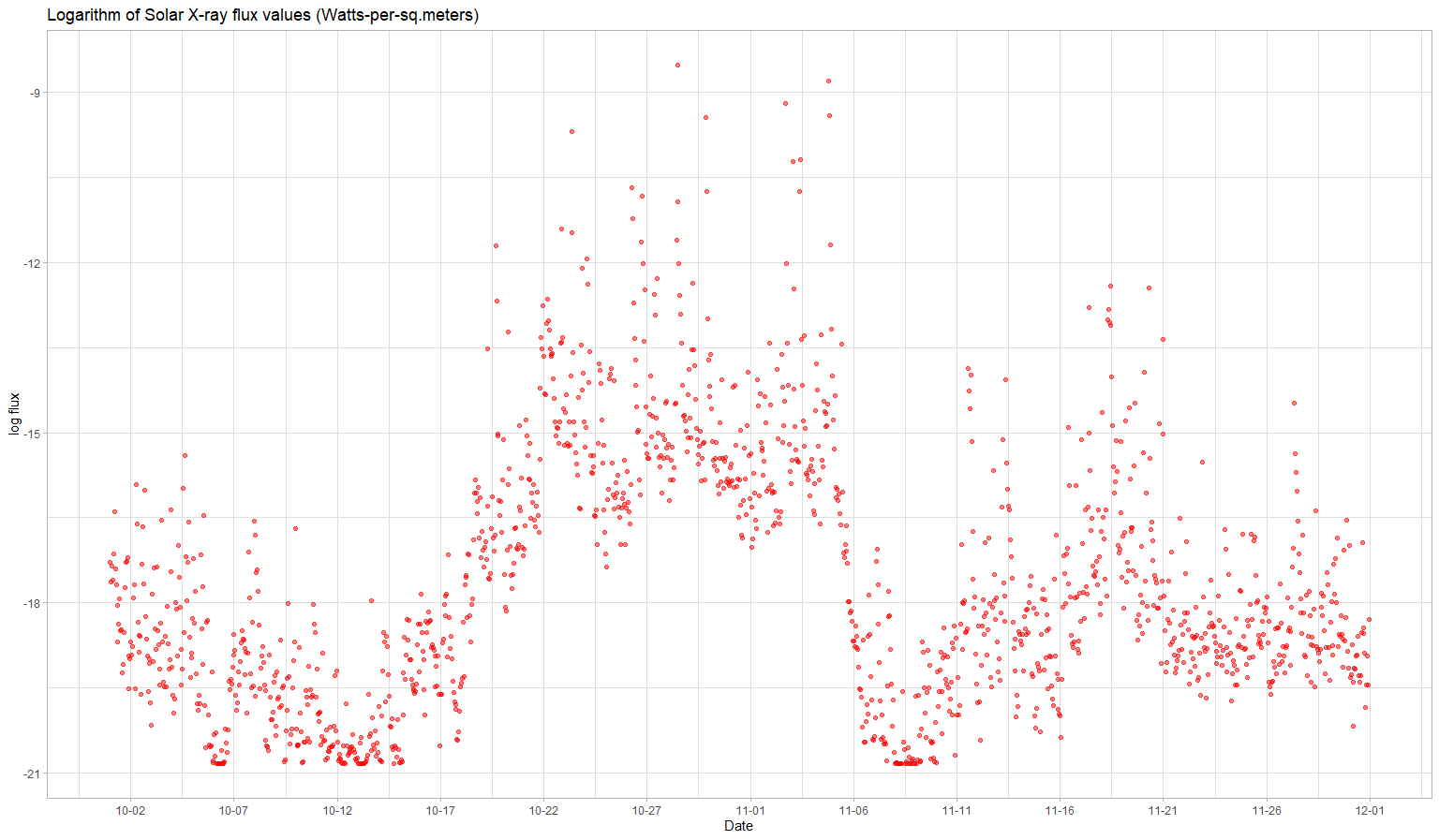}
	\caption{Figure showing scatter plot of logarithm of solar X-ray flare flux values for the months of October and November 2003, recorded by GOES.}
	\label{fig:solar-data}
\end{figure}

\begin{figure}
	\centering
	\begin{tabular}{c}
		\includegraphics[height=3in, width=5in]{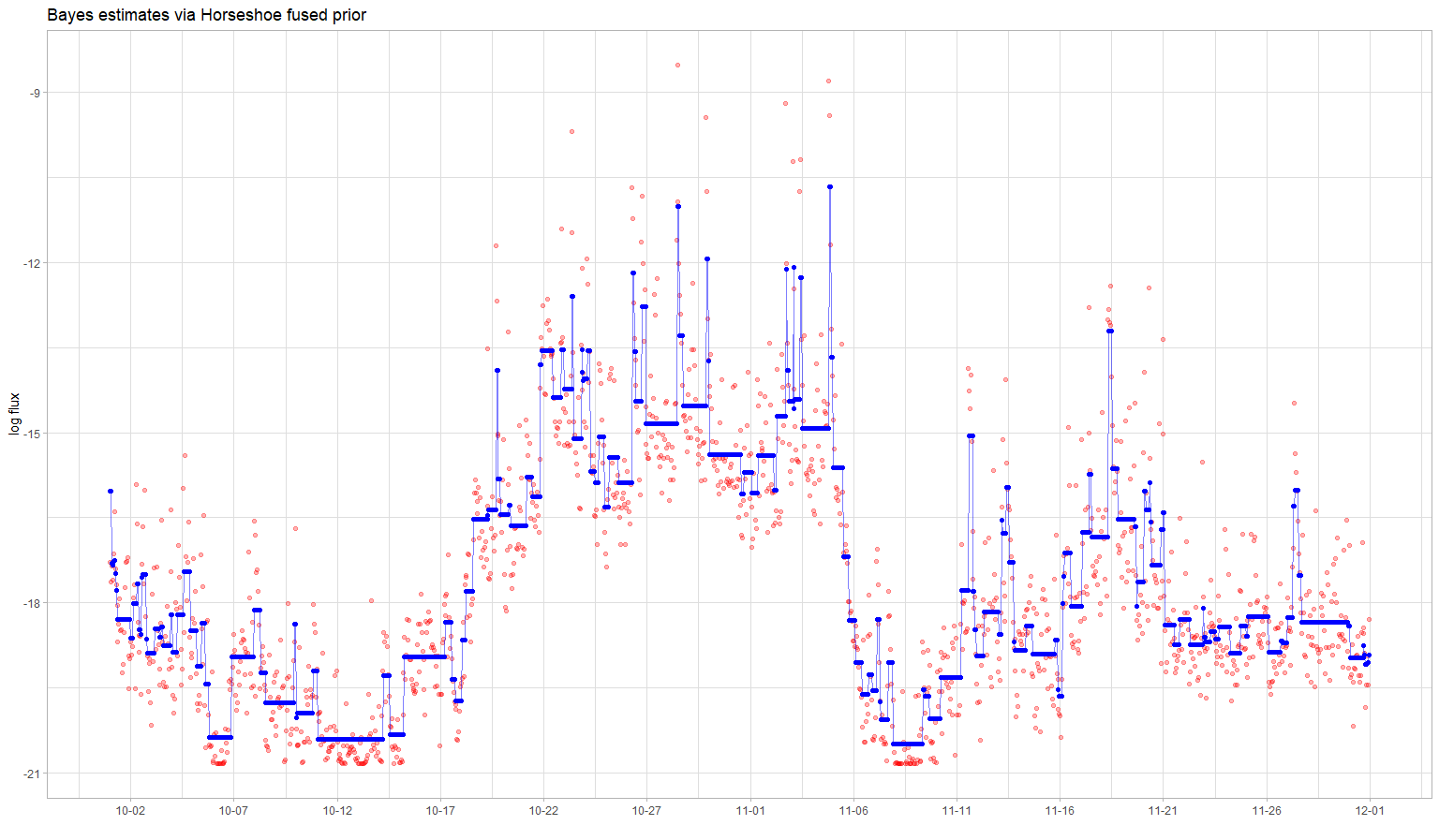}  \\
		(a) Bayesian Horseshoe fusion method  \\[1pt]
		\includegraphics[height=3in, width=5in]{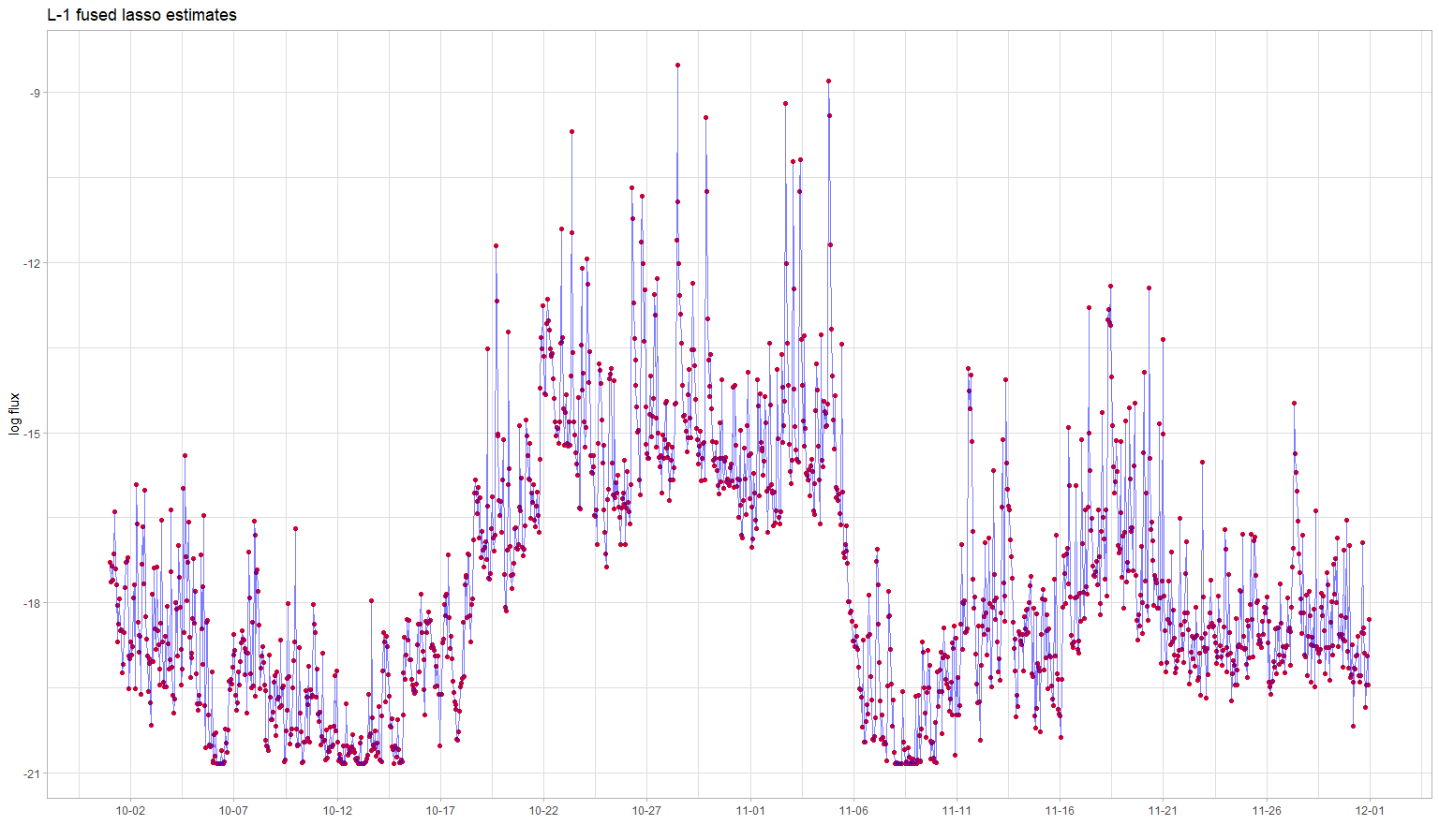} \\
		(b) $L_1$ fused lasso method \\[1pt]
	\end{tabular}
	\caption{Figure showing estimation performances of (a) the proposed Bayesian Horseshoe prior based fusion method, and (b) the $L_1$ fused lasso method. The blue lines represent the estimated signals, and the red dots represent the logarithm of the shortwave X-ray solar flare flux values as recorded by GOES.}
	\label{fig: solar}
\end{figure}

\section{Signal de-noising over arbitrary graphs}
\label{sec:graph de-noising}

In this section, we present an extension of our method for signal de-noising over arbitrary graphs. As mentioned earlier, we consider the normal sequence model as given by (\ref{eqn:gaussian-means}), where the signal $\theta = (\theta_1,\ldots,\theta_n)^T$ is defined over an undirected graph $G = (V,E)$. The true signal is then assumed to be piecewise constant over $G$, and the parameter space is given by $l_0[G,s] := \{\theta \in \mathbb{R}^n: \#\{(i,j) \in E: \theta_i \neq \theta_j \} \leq s, \, 0 \leq s(n) = s \leq n\}$. 

Motivated by the fused lasso approach of \cite{Padilla2017} to deal with the graph de-noising problem, \cite{banerjee2020graph} recently proposed a Bayesian fusion estimation method using $t$-shrinkage priors. In these two works, the actual graph $G$ is first reduced to a linear chain graph via the Depth-First-Search algorithm \citep{tarjan1972depth}. \cite{Padilla2017} provided theoretical justifications of reducing an arbitrary graph $G$ to a DFS-chain graph $G_C = (V,E_C)$, by showing that the total variation of the signal defined over $G_C$ achieves an upper bound of at most twice its total variation wrt the $L_1$-norm over $G$, that is,
$\sum_{(i,j) \in E_C}|\theta_i - \theta_j| \leq 2 \sum_{(i,j) \in E}|\theta_i - \theta_j|.$ Hence, if the true signal is smooth over the graph $G$, so that $\sum_{(i,j) \in E}|\theta_i - \theta_j| \leq \delta$, then, it is reasonable to perform de-noising over $G_C$ as well, in the sense that the total variation over $G_C$ would satisfy $\sum_{(i,j) \in E_C}|\theta_i - \theta_j| \leq 2\delta.$ In addition to this, a similar result holds for the total variation in the $L_0$-norm, so that, $\sum_{(i,j) \in E_C}\Ind\{\theta_i \neq \theta_j\} \leq 2\sum_{(i,j) \in E}\Ind\{\theta_i \neq \theta_j\}.$ A signal defined over a linear chain graph is then identical to the sequence model considered earlier in this paper, upto a permutation of the indices of the signal.

\subsection{Graph de-noising via Horseshoe fusion}

We use our proposed Horseshoe fusion estimation method for graph signal de-noising. Note that reducing a graph to a DFS-chain graph is not unique, and hence would result in different chain graphs. We need to first put a prior on the root vertex, say $\theta_r,\, r \in V$, and then put suitable priors on the successive differences $\theta_i - \theta_j$, for $(i,j) \in E_C$. For inference, we propose to use multiple root nodes to arrive at different DFS-chain graphs, and then combine the posterior samples to arrive at a final estimate. Thus, our proposed Bayesian approach can account for the uncertainty in fusion estimation as well as that in the DFS method. The prior specification is given by,
\begin{eqnarray}
	\theta_r \mid \lambda_0^2, \sigma^2 \sim  N(0, \lambda_0^2\sigma^2),\;\theta_i - \theta_j \mid \lambda_{k(i,j)}^2,\tau^2, \sigma^2 \stackrel{ind}{\sim}  N(0, \lambda_{k(i,j)}^2\tau^2\sigma^2), \; (i,j) \in E_C,\nonumber \\
	\lambda_i^2 \mid \nu_i \stackrel{ind}{\sim}  IG(1/2, 1/\nu_i),\; 1 \leq i \leq n-1,\;\tau^2 \mid \xi \sim  IG(1/2,1/\xi), \nonumber \\
	\nu_1,\ldots,\nu_{n-1}, \xi \stackrel{iid}{\sim}  IG(1/2,1),\; \sigma^2 \sim  IG(a_\sigma, b_\sigma).
	\label{eqn:prior-graph}
\end{eqnarray}
Here $k(i,j)$ refers to the index corresponding to the edge in $E_C$ that connects vertices $i$ and $j$. The hyperparameters in the above prior specification are identical to that in (\ref{eqn:prior-2}). The conditional posterior distributions of the parameters and the hyperparameters are given as follows. 
The normal means have the conditional posterior distribution
\begin{equation}
	\theta_i \mid \cdots \sim N(\mu_i, \zeta_i),\; 1 \leq i \leq n,
\end{equation}

where $\mu_i$ and $\zeta_i$ are given by,
$$\zeta_r^{-1} = \dfrac{1}{\sigma^2}\left(1 + \sum_{j=1}^{n}\dfrac{\Ind\{(r,j) \in E_C\}}{\lambda_{k(r,j)}^2\tau^2} + \dfrac{1}{\lambda_0^2} \right),\; \mu_r = \dfrac{\zeta_r}{\sigma^2}\left(y_r + \sum_{j=1}^{n}\dfrac{\theta_{j}\Ind\{(r,j) \in E_C\}}{\lambda_{k(r,j)}^2\tau^2}\right) ,$$
$$\zeta_i^{-1} = \dfrac{1}{\sigma^2}\left(1 + \sum_{j=1}^{n}\dfrac{\Ind\{(i,j) \in E_C\}}{\lambda_{k(i,j)}^2\tau^2} + \sum_{j=1}^{n}\dfrac{\Ind\{(j,i) \in E_C\}}{\lambda_{k(j,i)}^2\tau^2} \right), \; i \in V\backslash\{r\},$$
$$\mu_i = \dfrac{\zeta_i}{\sigma^2}\left(y_i + \sum_{j=1}^{n}\dfrac{\theta_{j}\Ind\{(i,j) \in E_C\}}{\lambda_{k(i,j)}^2\tau^2} + \sum_{j=1}^{n}\dfrac{\theta_{j}\Ind\{(j,i) \in E_C\}}{\lambda_{k(j,i)}^2\tau^2}\right),\; i \in V\backslash\{r\}.$$ 
The conditional posteriors for the rest of the parameters are given by,

\begin{eqnarray}
	\lambda_{k(i,j)}^2 \mid  \cdots &\sim & IG\left(1, \dfrac{1}{\nu_{k(i,j)}} + \dfrac{(\theta_i - \theta_{j})^2}{2\tau^2\sigma^2}  \right),\; (i,j) \in E_C, \nonumber \\
	\sigma^2 \mid \cdots &\sim & IG\left(n + a_\sigma, b_\sigma + \dfrac{1}{2}\left[\sum_{i=1}^{n}(y_i - \theta_i)^2 + \dfrac{1}{\tau^2}\sum_{(i,j)\in E_C}\dfrac{(\theta_i - \theta_{j})^2}{\lambda_{k(i,j)}^2} + \dfrac{\theta_r^2}{\lambda_0^2}\right]\right), \nonumber \\
	\tau^2 \mid \cdots &\sim & IG\left(\dfrac{n}{2}, \dfrac{1}{\xi} + \dfrac{1}{2\sigma^2}\sum_{(i,j) \in E_C}^{n}\dfrac{(\theta_i - \theta_{j})^2}{\lambda_{k(i,j)}^2}\right), \nonumber \\
	\nu_i \mid \cdots &\sim & IG\left(1, 1 + \dfrac{1}{\lambda_i^2}\right),\; 1 \leq i \leq n-1, \nonumber \\
	\xi \mid \cdots &\sim & IG\left(1, 1 + \dfrac{1}{\tau^2}\right). 
\end{eqnarray}

\subsection{Theoretical guarantees}

We now present the results on posterior concentration of the estimates obtained using the proposed Horseshoe based method. The assumptions on the true model and prior parameters (and, hyperparameters) are almost identical to that in Section~\ref{sec:posterior}, except for some modifications owing to the presence of the graph $G$. We present the assumptions in detail for the sake of completeness.
\begin{assumption}
	The number $s_0$ of true subgraphs of $G$ for which the components of the signal $\theta_0$ are piecewise constant in the model, satisfies $s_0 \prec n/\log n$.
	\label{assump-graph:true-blocks}
\end{assumption}

\begin{assumption}
	The true mean parameter vector $\theta_0 = (\theta_{0,1},\ldots, \theta_{0,n})^T$ and the true error variance $\sigma_0^2$ satisfy the following conditions:
	\begin{enumerate}
		\item[(i)] $\max_{(i,j) \in E} |\theta_{0,i} - \theta_{0,j}|/\sigma_0 < L,$ where $\log L = O(\log n).$
		\item[(ii)] For root node $r \in V$, $\theta_{0,r}/(\lambda_0^2\sigma_0^2) + 2\log \lambda_0 = O(\log n),$ where $\lambda_0$ is the prior hyperparameter appearing in the prior for $\theta_r$ in (\ref{eqn:prior-graph}).
	\end{enumerate}
	\label{assump-graph:true-param}
\end{assumption}

\begin{assumption}
	The global scale parameter $\tau$ in the prior specification (\ref{eqn:prior-graph}) satisfies $\tau < n^{-(2 + b)}$ for some constant $b > 0$, and $-\log \tau = O(\log n)$.
	\label{assump-graph:prior}
\end{assumption}

Under the above assumptions, we present the result on posterior concentration of the proposed estimator. 

\begin{theorem}
		Consider the Gaussian means model (\ref{eqn:gaussian-means}) over a graph $G = (V,E)$ with prior specification as in (\ref{eqn:prior-graph}), and suppose that assumptions \ref{assump-graph:true-blocks}, \ref{assump-graph:true-param} and \ref{assump-graph:prior} hold. Then the posterior distribution of $\theta$, given by $\Pi^n(\cdot \mid y)$, satisfies
	\begin{equation}
		\Pi^n(\|\theta - \theta_0\|_2/\sqrt{n} \geq M\sigma_0 \epsilon_n \mid y) \rightarrow 0, \; \mathrm{as}\, n \rightarrow \infty,
	\end{equation}
	in probability or in $L_1$ wrt the probability measure of $y$, for $\epsilon_n \asymp \sqrt{s_0\log n/n}$ and a constant $M > 0$.
	\label{theorem-graph:posteriorconvergencerate}
	
\end{theorem}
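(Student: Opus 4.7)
The plan is to reduce the graph problem to the already-established chain result in Theorem~\ref{theorem:posteriorconvergencerate} by exploiting the Depth-First-Search (DFS) construction described earlier in the section. Fix the root $r \in V$ and let $G_C = (V, E_C)$ denote the DFS-chain graph. After relabeling the vertices along the chain starting at $r$, the prior (\ref{eqn:prior-graph}) on $\theta$ is literally the sequence-model prior (\ref{eqn:prior-2}): $\theta_r$ plays the role of $\theta_1$ and the edgewise Horseshoe priors on $\theta_i - \theta_j$ become independent Horseshoe priors on successive differences along the chain. Because $\|\cdot\|_2$ is permutation-invariant, the target concentration statement for $\|\theta - \theta_0\|_2/\sqrt{n}$ is unchanged by this relabeling.

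The next step is to check that Assumptions~\ref{assump-graph:true-blocks}--\ref{assump-graph:prior} imply the corresponding chain-version Assumptions~\ref{assump:true-blocks}--\ref{assump:prior}, up to absolute constants. The crucial ingredient is the Padilla et al.\ inequality quoted in the paper, which gives
\begin{equation*}
\sum_{(i,j) \in E_C}\Ind\{\theta_{0,i} \neq \theta_{0,j}\} \;\leq\; 2\sum_{(i,j) \in E}\Ind\{\theta_{0,i} \neq \theta_{0,j}\} \;\leq\; 2 s_0,
\end{equation*}
so that the effective number of pieces along the DFS chain is $s^* \leq 2 s_0$; combined with Assumption~\ref{assump-graph:true-blocks}, this yields $s^* \prec n/\log n$, matching Assumption~\ref{assump:true-blocks}. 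Since the DFS traversal uses only edges of $G$, we have $E_C \subseteq E$, so the successive differences along the chain are of the form $\theta_{0,i}-\theta_{0,j}$ with $(i,j)\in E$, and Assumption~\ref{assump-graph:true-param}(i) gives the uniform bound $L$ required by Assumption~\ref{assump:true-param}(i). Assumption~\ref{assump-graph:true-param}(ii) with $\lambda_0$ is Assumption~\ref{assump:true-param}(ii) with $\lambda_1 = \lambda_0$, and Assumption~\ref{assump-graph:prior} is identical to Assumption~\ref{assump:prior}.

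With the hypotheses verified and the posterior identified with that of the sequence model on the relabeled chain, a direct invocation of Theorem~\ref{theorem:posteriorconvergencerate} yields
\begin{equation*}
\Pi^n\!\left(\|\theta - \theta_0\|_2/\sqrt{n} \geq M' \sigma_0 \epsilon_n^* \mid y\right) \rightarrow 0, \qquad \epsilon_n^* \asymp \sqrt{s^* \log n/n},
\end{equation*}
in $P_{\theta_0}$-probability (and in $L_1$). Since $s^* \leq 2 s_0$, we have $\epsilon_n^* \leq \sqrt{2}\,\epsilon_n$ with $\epsilon_n \asymp \sqrt{s_0 \log n/n}$, and absorbing the constant into $M$ gives the stated conclusion.

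The main obstacle is conceptual rather than computational: one must be careful that the edgewise prior in (\ref{eqn:prior-graph}), which is indexed by the unordered edge set $E_C$, coincides with the sequentially indexed prior in (\ref{eqn:prior-2}) once the DFS ordering is fixed. Because $G_C$ is a chain, there is a unique sequential enumeration of its edges rooted at $r$, and the independence of the local scales $\lambda^2_{k(i,j)}$ and auxiliary variables $\nu_{k(i,j)}$ is preserved under this bijection; this is the only place where the specific nature of a DFS (as opposed to an arbitrary spanning subgraph) is exploited. Once this identification is in place, no new prior-thickness, small-ball, or test-construction arguments are needed, as all are inherited from Theorem~\ref{theorem:posteriorconvergencerate}.
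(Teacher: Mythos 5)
Your overall strategy---reduce to the DFS-chain graph $G_C$, use Lemma~1 of \cite{Padilla2017} to bound the number of chain pieces by $2s_0$, verify that the graph assumptions imply the chain assumptions, and invoke Theorem~\ref{theorem:posteriorconvergencerate}---is exactly the route the paper takes, only written out in more detail. There is, however, one concrete error in your verification step: the claim $E_C \subseteq E$ is false in general. The DFS chain joins consecutively \emph{visited} vertices, and when the traversal backtracks, the next vertex visited need not be adjacent in $G$ to the previous one (for a star graph with center $c$ and leaves $l_1,\ldots,l_k$, the DFS order is $c, l_1, l_2, \ldots$, and the chain edges $(l_1,l_2), (l_2,l_3),\ldots$ are not edges of $G$). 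Indeed, if $E_C \subseteq E$ held, the total variation over $G_C$ would trivially be bounded by that over $G$ with constant $1$, and the factor $2$ in the Padilla et al.\ lemma would be superfluous. As a consequence, Assumption~\ref{assump-graph:true-param}(i), which bounds $|\theta_{0,i}-\theta_{0,j}|/\sigma_0$ only over $(i,j) \in E$, does not directly deliver the uniform bound on successive differences along the chain that Assumption~\ref{assump:true-param}(i) requires.

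The gap is repairable without changing the architecture of the argument: any chain edge in $E_C \setminus E$ corresponds to a backtracking segment of the DFS walk, which is a path in $G$ of length at most $n$, so the triangle inequality gives $|\theta_{0,i}-\theta_{0,j}|/\sigma_0 \leq nL$ for all $(i,j) \in E_C$; since $\log(nL) = \log n + \log L = O(\log n)$, the chain version of Assumption~\ref{assump:true-param}(i) holds with $L' = nL$ in place of $L$, and Lemma~\ref{lemma:priorthickness} is insensitive to this change. You should state this explicitly rather than assert $E_C \subseteq E$. A smaller remark: your closing claim that the DFS structure is exploited only in the bookkeeping of edge labels understates matters---the DFS property is what drives the factor-$2$ total-variation bound itself, and an arbitrary Hamiltonian ordering of $V$ would not satisfy it, so the control $s^* \leq 2s_0$ is where DFS is genuinely essential.
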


The above theorem implies that the mean squared error of $\theta$ is of the order $s_0\log n/n$, which differs from the oracle contraction rate $s_0/n$ under known graph sparsity by only a logarithmic factor in $n$. Note that our prior construction does not require any knowledge regarding the sparsity pattern of the graph, and hence the proposed Bayesian estimation procedure can adapt to the total variation metric(wrt the $L_0$-norm) and acheive the desired optimality in estimating the signal over the graph $G$. We discuss the proof of the above result in the appendix.

We now apply our method on a real-world network and check the performance of signal de-noising over the same.

\subsection{Application: Signal de-noising over Euroroad network}

We consider signals defined on a real-world graph, given by the Euroroad network \citep{nr}, obtained from the repository \url{http://networkrepository.com/inf-euroroad.php}. The Euroraod network is a real road network connecting various cities in Europe, having a total of around 1200 vertices and about 1400 edges. We first extract the largest connected component of the network, which results in 1039 vertices and 1305 edges. Next, to generate true piecewise constant signals on this network, we extract non-overlapping communities from the network using the Louvain community detection algorithm \citep{blondel2008fast}. The Louvain algorithm is one of the most popular community detection algorithms available in the literature which detects underlying communities in a network via optimising the modularity of the same. We use the \texttt{cluster\_louvain} function available in the \texttt{igraph} package in \texttt{R} to extract the communities. The method results in 25 connected communities, with the smallest community having 11 vertices and the largest community having 82 vertices as members. The true signal on vertex $j$ with community label $k$ is given by $\theta_{0,j} = k \Ind\{\mathrm{vertex}\, j \in \mathrm{community}\, k\},\, 1 \leq j \leq n.$ Finally we generate a noisy version of the signal on the network given by $y_j \sim N(\theta_{0,j}, 0.3^2),\, 1 \leq j \leq n.$

We apply our proposed Horseshoe fusion method on the above network by first reducing the same to a DFS-chain graph. We choose three root vertices at random from $V$, and construct three different DFS-chain graphs, followed by drawing MCMC samples from the posterior independently for the three chains. Finally, we merge the three sets of posterior samples after discarding the burn-in samples in each of the cases and find the posterior mean of the combined sample to arrive at the estimates. The MSE in estimation came out to be $0.022$ and the adjusted MSE is $0.0001$. The network visualization for the graph with true signal and the de-noised graph are provided in Figure~\ref{fig: euroroad-graph}, where the vertices of the former are color-coded according to their respective communities. Note that since the Horseshoe prior is a continuous shrinkage prior, no two estimated signals are of equal magnitude. Hence while coloring the vertices of the de-noised graph, we classify the estimated values in 25 bins of equal length, and color the vertices so that vertices having estimated signals falling in the same bin have identical colors. Additionally, in Figure~\ref{fig:step-plot-euroroad}, we provide visualizations of the true signal and the corresponding estimates as a step-function via permutation of the graph vertices to demonstrate the estimation performance of our proposed Bayesian fusion method. The figure re-iterates the excellent performance of our method to detect the piecewise constant signals over different communities in the actual graph. It is worthwhile to note in this regard that we work only with a transformed version of the original graph structure, yet we are able to learn the signals over the communities in an efficient manner, adapting to the sparsity level of the graphical structure.

\begin{figure}
	\centering
	\begin{tabular}{c}
		\includegraphics[height=3.5in, width=5in]{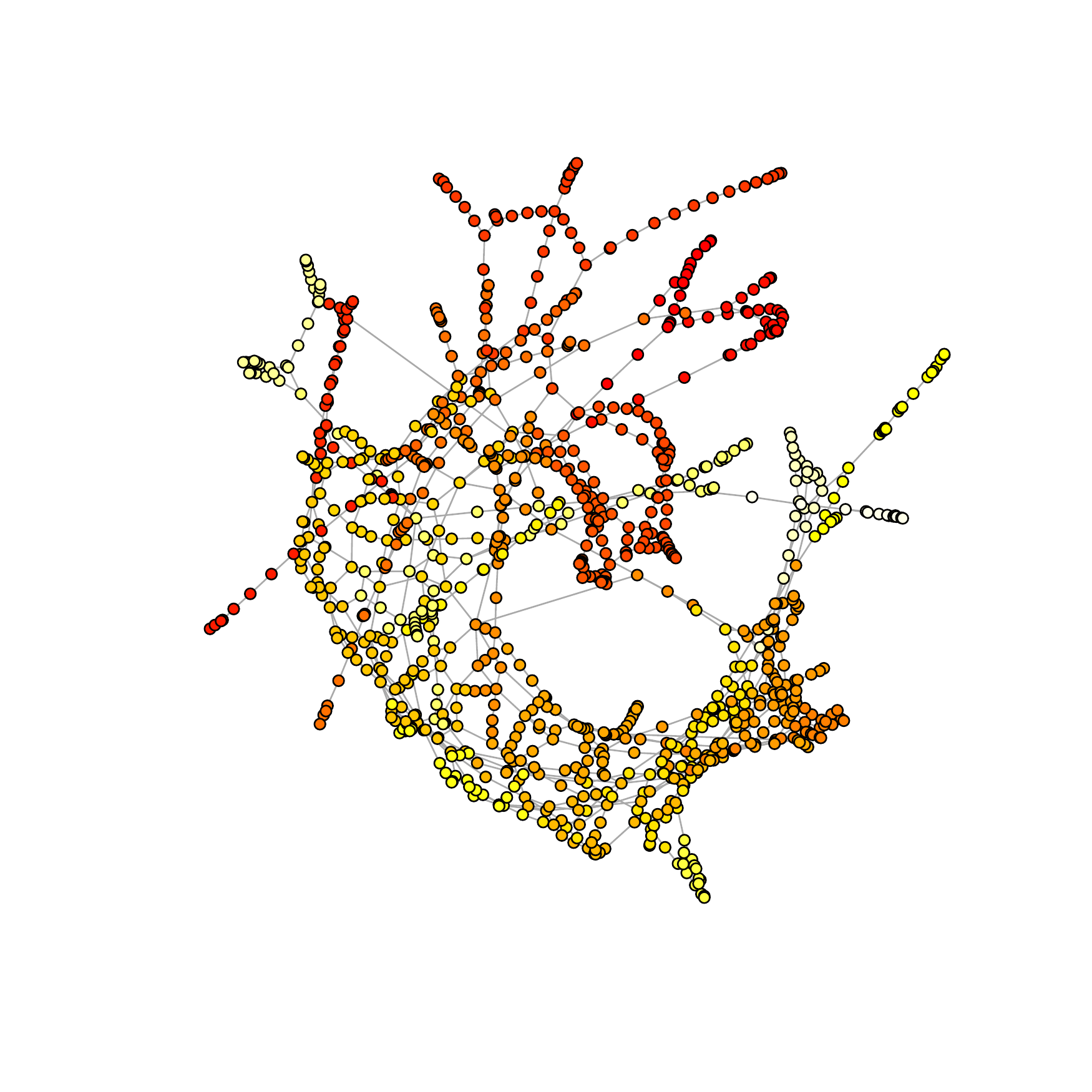}  \\
		(a) Euroroad network with true signal values  \\[1pt]
		\includegraphics[height=3.5in, width=5in]{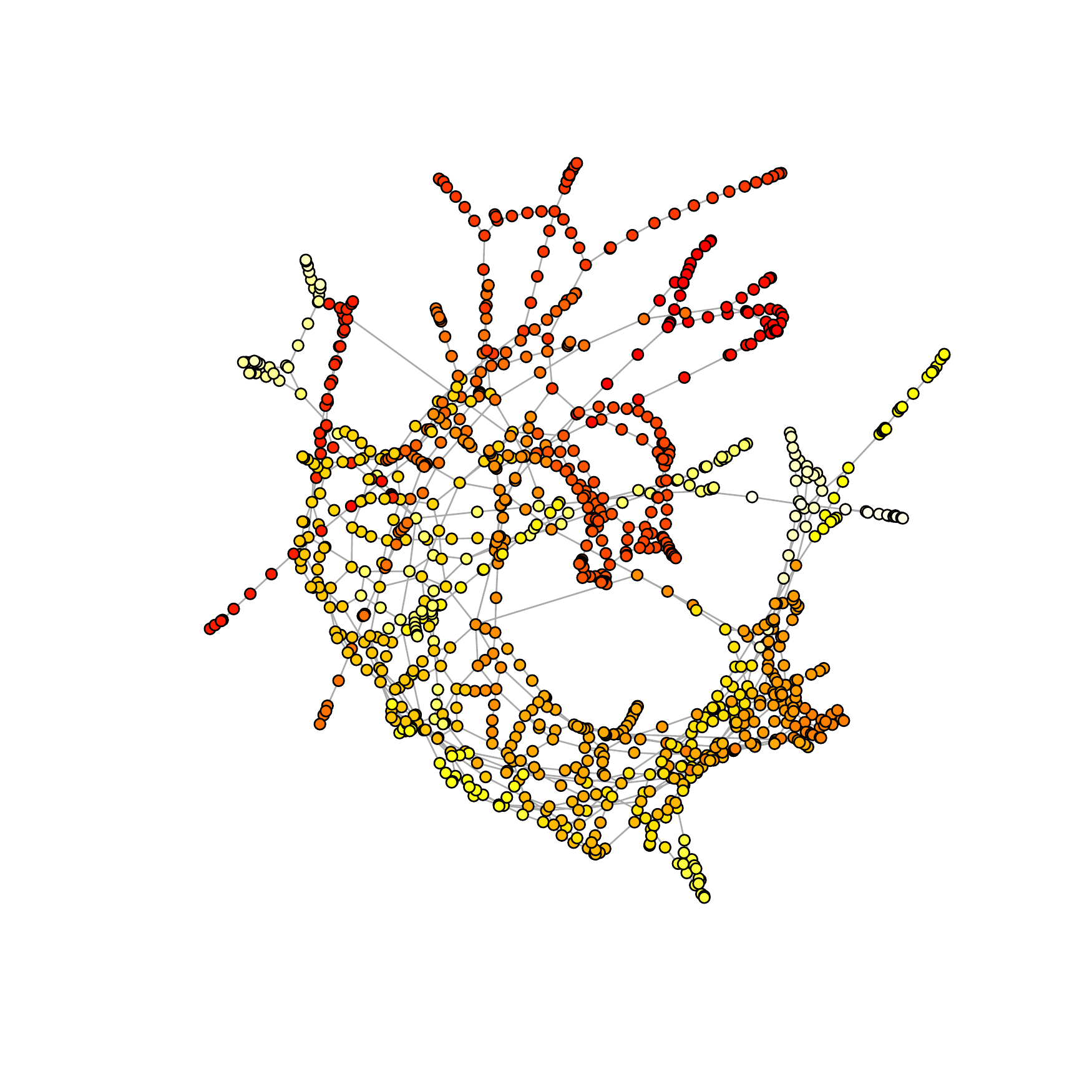} \\
		(b) Euroroad network with Bayesian Horseshoe fusion estimation-based values \\[1pt]
	\end{tabular}
	\caption{Figure showing estimation performances of the proposed Bayesian Horseshoe prior based fusion method.}
	\label{fig: euroroad-graph}
\end{figure} 

\begin{figure}
	\centering
	\includegraphics[width=0.7\linewidth]{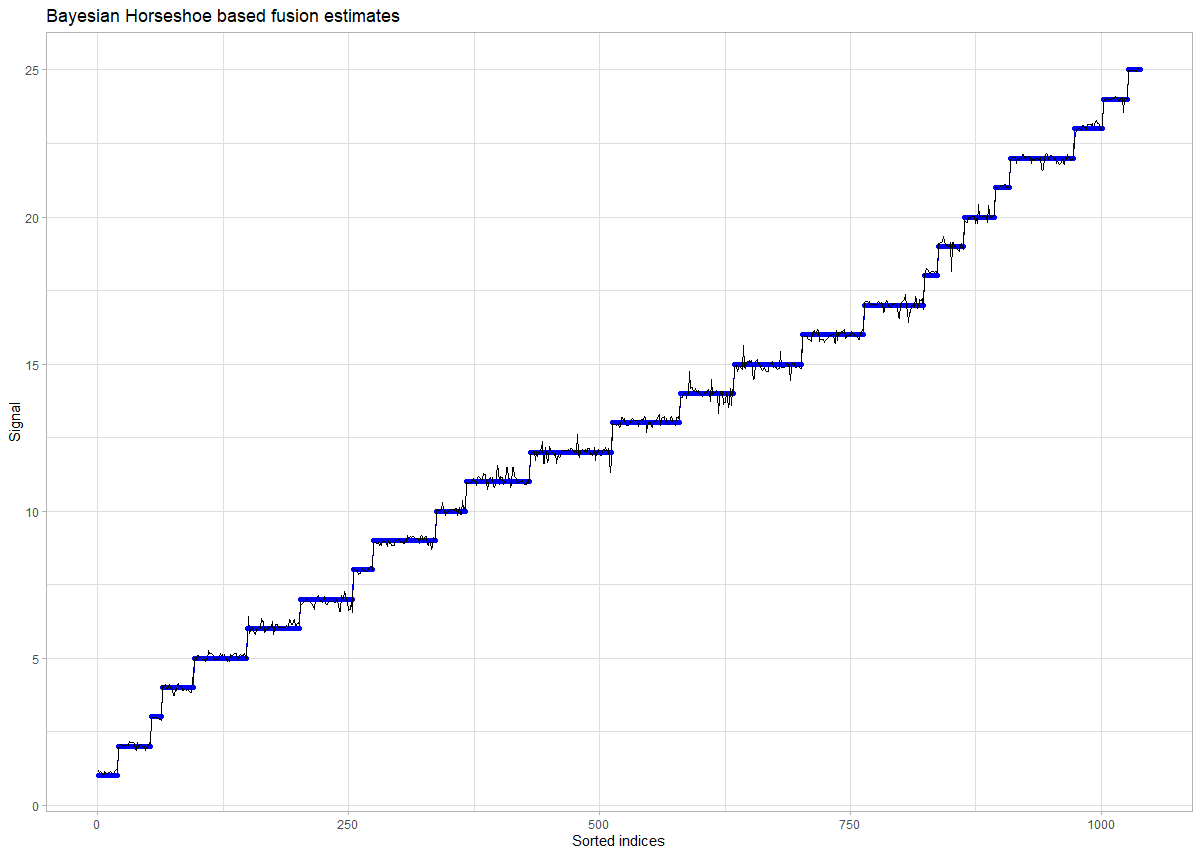}
	\caption{Step plot showing the Bayesian Horseshoe fusion performance for graph signal de-noising. The true signals are in blue and the estimated signals are in black.}
	\label{fig:step-plot-euroroad}
\end{figure}

\section{Discussion}
\label{sec:discussion}

In this paper, we have addressed the problem of fusion estimation in a normal sequence model where the underlying mean parameter vector is piecewise constant. Noise removal and block structure learning is an important and widely studied problem with applications in a plethora of areas. We developed a Horseshoe prior based fusion estimation method and provided theoretical guarantees via establishing posterior convergence and selection consistency. Numerical studies demonstrated excellent performance of our method as compared to other competing Bayesian and frequentist approaches. 

In many applications, we observe noisy signals defined over an undirected graph. The excellent performance of the Horseshoe prior based learning motivated us to extend the same to the graph signal denoising problem as well. We constructed a Gibbs sampler for efficient learning of graph signals via transformation of an arbitrary graph to a DFS-chain graph, and provided theoretical guarantees of the procedure.

There are several directions for extending our work. One important area is to move beyond normal sequence models and consider multiple graphical models. The graph fused lasso method for estimation and structure learning in multiple Gaussian graphical models \citep{danaher2014joint} has got much attention in recent years. The Horseshoe fusion prior may be utilized in this instance for precision matrix estimation over several groups and structure learning of respective graphical models. Additionally, our proposed approach may be extended to isotonic regression models, where there is a natural ordering in the means of the observations. We propose to explore these ideas as future work.

\section*{Data availability and software}

Data for the DNA copynumber analysis are available for free from the archived version of the \texttt{cghFlasso} package in \texttt{R}. Data on solar X-ray flux are available for free from the National Oceanic and Atmospheric Administration (NOAA)'s website at \url{https://www.ngdc.noaa.gov/stp/satellite/goes/dataaccess.html}. The Euroroad network data are available for free from the network repository \url{http://networkrepository.com/inf-euroroad.php}.

\texttt{R} codes to implement the fusion estimation procedure for the normal sequence model and for the graph signal de-noising problem, along with the codes to transform an arbitrary graph to a DFS-chain graph are available at \url{https://github.com/sayantanbiimi/HSfusion}.

\section*{Acknowledgements}
The author is supported by DST INSPIRE Faculty Award Grant No. 04/2015/002165, and also by IIM Indore Young Faculty Research Chair Award grant. 

\section*{Appendix}

\subsection*{Additional lemmas and proofs of results}
\begin{lemma}
	Under the assumptions on the true parameter $\theta_0$ and the prior distribution on $\theta$ as outlined in Theorem~\ref{theorem:posteriorconvergencerate}, for some $b' > 0$, we have,
	\begin{equation}
		\int_{-s_0\log n/n^2}^{s_0 \log n/n^2}p_{HS}(\eta; \tau)\, d\eta \geq 1 - n^{-b'},
	\end{equation}
where $p_{HS}(\eta; \tau)$ denotes the Horseshoe prior density on $\eta$ with hyperparameter $\tau > 0$.
\label{lemma:priormass}
\end{lemma}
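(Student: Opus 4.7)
The plan is to recast the ``prior mass near zero'' claim as a tail bound: since $p_{HS}(\cdot;\tau)$ is a density, setting $T_n = s_0\log n/n^2$ it suffices to show
\[
\int_{|\eta|>T_n} p_{HS}(\eta;\tau)\, d\eta \leq n^{-b'}.
\]
The strategy is then standard for Horseshoe-type priors: bound the density from above, use an elementary $\log(1+x)\le x$ estimate to turn it into an integrable power function, and let Assumption~\ref{assump:prior} do the rest.

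First I would invoke the well-known upper bound on the Horseshoe density, obtainable from the scale-mixture representation $\eta \mid \lambda,\tau \sim N(0,\lambda^2\tau^2)$, $\lambda \sim C^+(0,1)$: there exists a constant $C>0$ such that
\[
p_{HS}(\eta;\tau) \leq \frac{C}{\tau}\,\log\!\left(1+\frac{2\tau^2}{\eta^2}\right).
\]
This is Theorem~1 of Carvalho--Polson--Scott (2010) and can be derived by the substitution $u = 1/(1+\lambda^2)$ in the mixing integral followed by a routine estimate of $\int_0^1 (1-u)^{-1/2}\exp\!\big(-\eta^2 u/(2\tau^2(1-u))\big)\,du$. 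Applying $\log(1+x)\le x$ for $x>0$ gives the clean bound
\[
p_{HS}(\eta;\tau) \leq \frac{2C\tau}{\eta^2}.
\]

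Second, I would integrate this bound over $\{|\eta|>T_n\}$:
\[
\int_{|\eta|>T_n} p_{HS}(\eta;\tau)\, d\eta \leq 4C\tau \int_{T_n}^{\infty} \eta^{-2}\, d\eta = \frac{4C\tau}{T_n}.
\]
Plugging in $T_n = s_0\log n/n^2$ and using Assumption~\ref{assump:prior}, namely $\tau < n^{-(2+b)}$, the right-hand side is at most
\[
\frac{4C\,n^{-(2+b)}\cdot n^2}{s_0\log n} = \frac{4C\,n^{-b}}{s_0\log n},
\]
which is bounded by $n^{-b'}$ for any $0<b'<b$ and all sufficiently large $n$ (since $s_0\log n \geq 1$ for $n$ large). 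The conclusion follows.

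The main, and really only, obstacle is the Horseshoe density upper bound; once that is in hand the rest is a one-line tail integral. If one prefers a self-contained derivation of that bound rather than citing Carvalho et al., the cleanest route is the substitution mentioned above, splitting the resulting integral at $u=1/2$ and controlling each piece by an elementary logarithmic estimate. The assumption $-\log\tau = O(\log n)$ is not used here but is needed elsewhere in the posterior-rate proof; only the polynomial upper bound on $\tau$ enters this lemma.
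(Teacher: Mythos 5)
Your proof is correct and follows essentially the same route as the paper's: both reduce the claim to a tail bound, invoke the Carvalho--Polson--Scott upper bound $p_{HS}(\eta;\tau) \lesssim \tau^{-1}\log\bigl(1 + 2\tau^2/\eta^2\bigr)$, apply $\log(1+x)\le x$ to obtain an integrable $\tau/\eta^2$ majorant, and conclude from $\tau < n^{-(2+b)}$ that the tail mass is $O(n^{-b})\le n^{-b'}$ for $0<b'<b$. Your closing remark that only the polynomial upper bound on $\tau$ (and not $-\log\tau = O(\log n)$) is used here is also consistent with the paper, which reserves the latter condition for the prior-thickness lemma.
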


\begin{proof}
	Define $a_n = s_0 \log n/n^2$. We will show that $\int_{-a_n}^{a_n}p_{HS}(\eta; \tau)\, d\eta \geq 1 - n^{-b}$ for some $b > 0$ under the assumptions outlined in Theorem~\ref{theorem:posteriorconvergencerate}. The Horseshoe prior density does not assume a closed form; however, \cite{carvalho2010horseshoe} showed that the density admits tight lower and upper bounds given by,
	\begin{equation}
		\label{eqn:HSbounds}
		\dfrac{1}{\tau(2\pi)^{3/2}}\log \left( 1 + \dfrac{4\tau^2}{\theta^2}\right) < p_{HS}(\theta \mid \tau) <  
		\dfrac{2}{\tau(2\pi)^{3/2}}\log \left( 1 + \dfrac{2\tau^2}{\theta^2}\right).
	\end{equation}
	Hence, we have,
	\begin{eqnarray}
		&&1 - \int_{-a_n}^{a_n}p_{HS}(\eta; \tau)\, d\eta = 2 \int_{a_n}^{\infty} p_{HS}(\eta; \tau)\, d\eta \nonumber \\
		&& < \left( \dfrac{2}{\pi^3}\right) ^{1/2}\dfrac{1}{\tau}\int_{a_n}^{\infty}\log \left( 1 + \dfrac{2\tau^2}{\eta^2} \right)\,d\eta \nonumber \\
		&& < \left( \dfrac{2}{\pi^3}\right) ^{1/2}\dfrac{1}{\tau} \int_{a_n}^{\infty} \dfrac{2\tau^2}{\eta^2}\,d\eta \nonumber \\
		&& \lesssim \dfrac{\tau}{a_n} \leq n^{-b'}, \nonumber
	\end{eqnarray}
where $0 < b' < b$, for $\tau \leq a_n n^{-b} \asymp n^{-(2 + b)}.$
This completes the proof.

\end{proof}

As mentioned earlier, the above result guarantees that the Horseshoe prior puts sufficient mass around zero for the successive differences $\eta_i$ so as to facilitate Bayesian shrinkage for the block-structured mean parameters. However, the prior should be able to retrieve the blocks effectively as well, so that successive differences that are non-zero should not be shrunk too much. The following result below guarantees that the Horseshoe prior is `thick' at non-zero parameter values, so that it is not too sharp.

\begin{lemma}
	Consider the prior structure and the assumptions as mentioned in Theorem~\ref{theorem:posteriorconvergencerate}. Then, we have,
	\begin{equation}
		- \log \left(\inf_{\eta/\sigma \in [-L,L]} p_{HS}(\eta;\tau)\right) = O(\log n).
	\end{equation}
\label{lemma:priorthickness}
\end{lemma}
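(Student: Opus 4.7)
The plan is to invoke the explicit lower bound on the Horseshoe density recalled in~(\ref{eqn:HSbounds}) and simply verify that its negative logarithm, evaluated at the worst point in the interval $[-L\sigma, L\sigma]$, is of order $\log n$. Concretely, the lower bound
\[
p_{HS}(\eta;\tau) > \frac{1}{\tau(2\pi)^{3/2}}\log\!\left(1 + \frac{4\tau^2}{\eta^2}\right)
\]
is a strictly decreasing function of $|\eta|$, so the infimum over $\{\eta : |\eta|/\sigma \leq L\}$ is attained in the limit as $|\eta| \to L\sigma$. Thus I would first reduce the problem to lower-bounding $\tau^{-1}\log(1 + 4\tau^2/(L^2\sigma^2))$.

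Next, I would exploit the smallness of $\tau$. Under Assumption~\ref{assump:prior}, $\tau \leq n^{-(2+b)}$, while $\log L = O(\log n)$ by Assumption~\ref{assump:true-param}(i); together with the implicit boundedness of $\sigma$ around $\sigma_0$, this forces the quantity $4\tau^2/(L^2\sigma^2)$ to tend to zero. On this regime I can apply the elementary inequality $\log(1+x) \geq x/2$ for $x$ small, giving
\[
\inf_{|\eta|/\sigma \leq L} p_{HS}(\eta;\tau) \;\gtrsim\; \frac{1}{\tau}\cdot \frac{2\tau^2}{L^2\sigma^2} \;=\; \frac{2\tau}{L^2\sigma^2}.
\]
Taking $-\log$ of both sides yields
\[
-\log\!\left(\inf_{\eta/\sigma \in [-L,L]} p_{HS}(\eta;\tau)\right) \;\leq\; -\log \tau + 2\log L + 2\log\sigma + O(1).
\]

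Finally, I would assemble the order estimates: $-\log\tau = O(\log n)$ by Assumption~\ref{assump:prior}, $2\log L = O(\log n)$ by Assumption~\ref{assump:true-param}(i), and $2\log\sigma = O(1)$ on any bounded neighborhood of $\sigma_0$ (which is the only regime that matters for the posterior concentration argument). Adding these gives the desired $O(\log n)$ bound. I do not foresee any real obstacle here; the only subtlety is making sure the $\sigma$-dependence is handled cleanly (treating $\sigma$ as varying in a fixed bounded set containing $\sigma_0$, so that $\log \sigma$ contributes only a constant), which is consistent with how this lemma is later invoked in the proof of Theorem~\ref{theorem:posteriorconvergencerate} to secure prior thickness at the truth.
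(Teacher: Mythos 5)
Your argument is correct and follows essentially the same route as the paper's own proof: both use the Carvalho--Polson--Scott lower bound in~(\ref{eqn:HSbounds}), locate the infimum at the endpoint $|\eta|=L\sigma$ by monotonicity, apply $\log(1+x)\asymp x$ for small $x$ to reduce to $\tau/L^2$ (up to the bounded $\sigma$ factor), and then combine $-\log\tau=O(\log n)$ with $\log L=O(\log n)$. Your handling of the $\sigma$-dependence is in fact slightly more explicit than the paper's, which simply absorbs it; otherwise the two proofs coincide.
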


\begin{proof}
	We utilise the tight lower bound for the Horseshoe prior density as in \ref{eqn:HSbounds}. We have,
	\begin{eqnarray}
		\inf_{\eta/\sigma \in [-L,L]} p_{HS}(\eta;\tau) &>& \inf_{\eta/\sigma \in [-L,L]} \dfrac{1}{\tau(2\pi)^{3/2}}\log \left( 1 + \dfrac{4\tau^2}{\eta^2}\right) \nonumber \\
			&\geq & \dfrac{1}{\tau(2\pi)^{3/2}}\log \left( 1 + \dfrac{4\tau^2}{L^2}\right)  \asymp \dfrac{\tau}{L^2} \nonumber \\
			&=& O(n^{-b''}) ,\; b''> 0, \nonumber 
	\end{eqnarray}
	for $-\log(\tau) = O(\log n)$ and $\log L = O(\log n).$
	Thus, we get, $- \log \left(\inf_{\eta/\sigma \in [-L,L]} p_{HS}(\eta;\tau)\right) = O(\log n),$ hence completing the proof.
	
\end{proof}
We now present the proofs of the main results in our paper.

\begin{proof}[Proofs of Theorem~\ref{theorem:posteriorconvergencerate} and Theorem~\ref{theorem:structure-recovery}]
	The proof readily follows from Theorems 2.1 and 2.2 in \cite{song2020bayesian}, if we can verify the conditions (see display (2.5) in the aforementioned paper) for posterior convergence specified therein. Lemma~\ref{lemma:priormass} and Lemma~\ref{lemma:priorthickness} imply that the first two conditions are satisfied. The third condition is satisfied for a Normal prior distribution on $\theta_1$ and for a fixed hyperparameter $\lambda_1 > 0.$ The fourth condition is trivially satisfied for fixed choices of (non-zero) hyperparamters $a_\sigma$ and $b_\sigma$, and for some fixed (unknown) true error variance $\sigma_0^2.$
\end{proof}

\begin{proof}[Proof of Theorem~\ref{theorem-graph:posteriorconvergencerate}]
 Note that, by Lemma~1 in \cite{Padilla2017}, the total variation of $\theta$ wrt to the $L_0$-norm in the DFS-chain graph $G_C$ is bounded by at most twice that in $G$, that is, $\theta \in l_0[G,s]$ always implies that $\theta \in l_0[2s].$ Also, assumptions \ref{assump-graph:true-blocks} - \ref{assump-graph:prior} for an arbitrary graph $G$ implies assumptions \ref{assump:true-blocks} - \ref{assump:prior} for a linear chain graph. Hence the result readily follows from that of Theorem~\ref{theorem:posteriorconvergencerate}.
\end{proof}

\subsection*{A practical solution to block structure recovery}
\label{sec:block-recovery}

The Horseshoe prior is a continuous shrinkage prior, and hence block structure recovery is not straight-forward. In Bayesian fusion estimation with Laplace shrinkage prior or with $t$-shrinkage prior, one may discretize the samples obtained via MCMC using a suitable threshold. \cite{song2020bayesian} recommended using the $1/2n$-th quantile of the corresponding prior for discretization of the scaled samples. In Section~\ref{sec:theory}, we proposed a threshold based on the posterior contraction rate. However, that would also require an idea of the true block size, which may not be available always, or may be difficult to ascertain beforehand. We provide a practical approach for discretization of the samples. Note that in a sparse normal sequence problem, the posterior mean of the mean parameter $\theta_i$ is given by $\kappa_i y_i$, where $\kappa_i$ is the shrinkage weight. These shrinkage weights mimic the posterior selection probability of the means, and hence thresholding the weights to 0.5 provide excellent selection performance, which is justified numerically \citep{carvalho2010horseshoe} and later theoretically \citep{datta2013asymptotic}. Motivated by the same, we propose to use the following thresholding rule for selection of the blocks for our proposed method: $\theta_{j_1}$ and $\theta_{j_2}$ are equal if $|\hat{\theta}_{j_1} - \hat{\theta}_{j_2}| < 0.5|y_{j_1} - y_{j_2}|$, where $\theta_{j_1}, \theta_{j_2}$ are the posterior means of $\theta_{j_1}$ and $\theta_{j_2}$respectively, for $1 \leq j_1 \neq j_2 \leq n.$ We thus estimate the block structure indicator $s_{j_1j_2} = \Ind\{\theta_{j_1} = \theta_{j_2}\}$ by $\hat{s}_{j_1j_2} = \Ind\{|\hat{\theta}_{j_1} - \hat{\theta}_{j_2}| < 0.5|y_{j_1} - y_{j_2}|\}$. To evaluate the performance of the structure recovery method using the proposed thresholding approach, we define the metrics $W$ and $B$, where $W$ is the average within-block variation defined as $W := \mathrm{mean}_{\{s_{j_1j_2} \neq 0\}}|\hat{\theta}_{j_1} - \hat{\theta}_{j_2}|$, and $B$ is the between-block separation defined as $B := \min_{\{s_{j_1j_2} = 0\}}|\hat{\theta}_{j_1} - \hat{\theta}_{j_2}|.$

This practical thresholding approach resulted in excellent block structure recovery performance with respect to the metrics $W$ and $B$. We compare our results with the other Bayesian fusion estimation methods and also with the frequentist fused lasso method. The fused lasso method results in exact structure recovery, and hence no thresholding is required. As mentioned earlier, the thresholding rules for the competing Bayesian methods are taken as suggested in \cite{song2020bayesian}. We notice that for low error variances, the within-block average variation is lower in case of the Horseshoe fusion and $t$-fusion priors, with the Horseshoe fusion method having larger between-block speration especially in case of higher noise, thus indicating superior structure learning capabilities. The results are summarized in Table~\ref{table:simu-results-block}.

Other possible directions for coming up with a threshold include using a multiple hypothesis testing approach for successive differences in the means, and also using simultaneous credible intervals. We leave the theoretical treatment of using (or improving, if possible) our practical thresholding approach and exploring other proposed methods as a future work.

\begin{sidewaystable}[h]
	\small
	\begin{tabular}{ll|cc|cc|cc|cc}
		\hline 
		&       & \multicolumn{2}{c}{HS-fusion} & \multicolumn{2}{c}{$t$-fusion} & \multicolumn{2}{c}{Laplace fusion} & \multicolumn{2}{c}{$L_1$ fusion}  \\
		Signal      & $\sigma$           &  W & B              & W              & B              & W & B  & W & B        \\
		\hline 
		& 0.1 & 0.039 (0.001) & 0.846  (0.005) & 0.035 (0.001) & 0.892  (0.010)  & 0.678 (0.002) & 0.003 (0.000) & 0.069 (0.003) & 0.739 (0.011) \\
		Even & 0.3 & 0.140  (0.003) & 0.425 (0.020)  & 0.121 (0.005) & 0.381 (0.035)  & 0.687 (0.003) & 0.003 (0.000)     & 0.181 (0.007) & 0.333 (0.023) \\
		& 0.5 & 0.250 (0.005) & 0.139 (0.016) & 0.221 (0.007) & 0.079 (0.017)  & 0.704 (0.004) & 0.003 (0.000)  & 0.285 (0.009) & 0.119 (0.016) \\
		&&&&&&&&&\\
		& 0.1 & 0.037 (0.001) & 0.846 (0.005)  & 0.034 (0.001) & 0.864 (0.010)  & 0.537 (0.002) & 0.006 (0.001)  & 0.070 (0.003) & 0.751 (0.010)  \\
		Uneven & 0.3 & 0.134 (0.003) & 0.457 (0.020)   & 0.106 (0.004) & 0.266 (0.031) & 0.548 (0.003) & 0.005 (0.001) & 0.182 (0.006) & 0.348 (0.021) \\
		& 0.5 & 0.238 (0.005) & 0.146 (0.015)  & 0.173 (0.006) & 0.022 (0.007)  & 0.570  (0.004) & 0.004 (0.000)     & 0.284 (0.009) & 0.131 (0.015) \\
		&&&&&&&&&\\
		& 0.1 & 0.036 (0.001) & 0.800 (0.007) & 0.034 (0.001) & 0.670 (0.022) & 0.325 (0.002) & 0.008 (0.001) & 0.106 (0.001) & 0.710 (0.010)  \\
	V. Uneven	& 0.3 & 0.128 (0.002) & 0.401 (0.025) & 0.092 (0.003) & 0.013 (0.004) & 0.344 (0.003) & 0.007 (0.001) & 0.276 (0.006) & 0.269 (0.023) \\
		& 0.5 & 0.222 (0.004) & 0.149 (0.019) & 0.124 (0.004) & 0.001 (0.000) & 0.376 (0.004) & 0.006 (0.001) & 0.391 (0.010)  & 0.105 (0.017)\\
		\hline 
	\end{tabular}
	\caption{Within (W) and between (B) block average variation (with associated standard errors in parentheses) for different fusion estimation methods. For good structure recovery, W values should be low and B values should be high.}
	\label{table:simu-results-block}
\end{sidewaystable}

\bibliographystyle{apalike}
\bibliography{HSfusion-bib.bib}

\end{document}